\tikzset{>=pxto}
\tikzset{arrow/.style={->}}
\newcommand{\renewtheorem}[1]{%
  \expandafter\let\csname #1\endcsname\relax
  \expandafter\let\csname c@#1\endcsname\relax
  \expandafter\let\csname end#1\endcsname\relax
  \newtheorem{#1}%
}
\theoremstyle{plain}
\crefname{thm}{Theorem}{Theorems}
\crefname{cor}{Corollary}{Corollaries}
\crefname{lem}{Lemma}{Lemmas}
\crefname{prop}{Proposition}{Propositions}
\theoremstyle{definition}
\crefname{rem}{Remark}{Remarks}
\crefname{exa}{Example}{Examples}
\crefname{defi}{Definition}{Definitions}
\newcommand*{\constant}[1]{\mathrm{#1}}
\newcommand*{\constructor}[1]{\mathrm{#1}} 
\renewcommand*{\th}{\textsuperscript{th}}
\newcommand*{\from}{:}
\newcommand*{\blank}{\mathord{{-}}}
\newcommand*{\inv}[1]{#1^{-1}}
\newcommand*{\refl}[1]{\mathop{{\constructor{refl}}_{#1}}}
\newcommand*{\inl}{\mathop{\constructor{inl}}}
\newcommand*{\inr}{\mathop{\constructor{inr}}}
\newcommand*{\mrd}{\operatorname{\constructor{mrd}}}
\newcommand*{\glue}{\operatorname{\constructor{glue}}}
\newcommand*{\north}{\constructor{N}}
\newcommand*{\south}{\constructor{S}}
\newcommand*{\trp}[2][]{\mathop{\constant{trp}^{#1}_{#2}}}
\newcommand*{\fst}{\mathop{\constant{fst}}}
\newcommand*{\snd}{\mathop{\constant{snd}}}
\newcommand*{\ap}[1]{\mathop{\left[{#1}\right]}}
\newcommand*{\ptdto}{\to_\ast}%
\newcommand*{\ptdtof}[1]{\overset{#1}{\to}_\ast}%
\newcommand*{\ptdweq}{\weq_\ast}%
\newcommand*{\loopspace}[1]{\mathop{\Omega^{#1}}}%
\newcommand*{\nat}{\mathord{\constant{nat}}}
\newcommand*{\ltr}{\mathord{\constant{ltr}}}
\newcommand*{\rtr}{\mathord{\constant{rtr}}}
\newcommand*{\Id}{\mathord{\constant{Id}}}
\newcommand*{\id}{\mathord{\constant{id}}}
\newcommand*{\cat}[1]{\mathscr{#1}}
\newcommand*{\weq}{\simeq}
\newcommand*{\ZZ}{\mathbb{Z}}
\newcommand*{\NN}{\mathbb{N}}
\newcommand*{\RR}{\mathbb{R}}
\DeclareMathOperator{\Outoperator}{Out}
\newcommand*{\Out}{\Outoperator}
\DeclareMathOperator{\Innoperator}{Inn}
\newcommand*{\Inn}{\Innoperator}
\newcommand*{\inn}{\mathrm{inn}}
\newcommand*{\jdeq}{\equiv}
\newcommand*{\defeq}{\vcentcolon\jdeq}
\newcommand*{\defequi}{\defeq}
\newcommand*{\defis}{\vcentcolon=}
\newcommand*{\leftadjto}{\dashv}
\DeclareMathOperator\pr{pr}
\DeclareMathOperator\im{im}
\DeclareMathOperator\Aut{Aut}
\DeclarePairedDelimiter\Trunc{\lVert}{\rVert}
\DeclarePairedDelimiter\trunc{\lvert}{\rvert} 
\DeclarePairedDelimiter\angled{\langle}{\rangle}
\DeclarePairedDelimiterX\setof[2]\lbrace\rbrace{#1 \mid #2}
\newcommand*{\setTrunc}[1]{\Trunc{#1}_0}
\newcommand*{\settrunc}[1]{\trunc{#1}_0}
\DeclarePairedDelimiterXPP\higherTrunc[2]{}\lVert\rVert{_{#1}}{#2}
\DeclarePairedDelimiterXPP\highertrunc[2]{}\lvert\rvert{_{#1}}{#2}
\newcommand*{\ev}{\constant{ev}}
\newcommand*{\istrunc}[1]{\constant{isTrunc}_{#1}}
\newcommand*{\conncomp}[2]{{#1}_{\left(#2\right)}}
\newcommand*{\UU}{\mathcal{U}}
\newcommand*{\UUp}{\UU_*}
\newcommand*{\UUptd}{\UUp}
\newcommand*{\circled}[1]{\textrm{\textcircled{\small #1}}}
\newcommand*{\hopf}{{\mathcal H}}
\newcommand*{\Sn}[1]{{\mathbb S^{#1}}}
\newcommand*{\Sc}{{\Sn 1}}
\newcommand*{\Sp}{{\Sn 2}}
\newcommand*{\sbt}{\tikz[baseline]{\node[scale=.7,inner
    sep=0, outer sep=0, circle, anchor=base, yshift=.05ex]%
    {$\bullet$};}}%
\newcommand*{\base}{\mathord{\sbt}}
\newcommand*{\Sloop}{\mathord{\circlearrowleft}}
\newcommand*{\bn}[1]{\mathbf{#1}}
\newcommand*{\emptytype}{\emptyset}
\newcommand*{\ind}{\operatorname{\constant{ind}}}
\newcommand*{\susp}{\operatorname{\Sigma}}
\newcommand*{\hgr}[1]{\uppi_{#1}}
\newcommand*{\fgr}{\hgr 1}
\newcommand{\wunit}{0} 
\newcommand{\winv}{{-}}  
\newcommand{\wadd}{{+}}  
\newcommand*{\pathover}[4]{#1 =^{#2}_{#3} #4}
\tikzset{cell/.style={%
    shorten <=1em,%
    shorten >=1em,
    /tikz/commutative diagrams/Rightarrow
  }%
}%
\tikzset{
tikzshortarrow/.style={
    shorten >=0.2cm,
    shorten <=0.2cm,
    thick,
  }
}
\tikzset{pushout/.style={%
    commutative diagrams/.cd,dr,phantom,"\ulcorner", very near end
  }%
}
\tikzset{
    rotninety/.style={anchor=south, rotate=90, inner sep=2pt},
    rottwoseventy/.style={anchor=north, rotate=90, inner sep=2pt},
    rotfortyfive/.style={anchor=south, rotate=45, inner sep=2pt},
    rotthreefifteen/.style={anchor=south, rotate=315, inner sep=2pt},
}
\tikzset{
    ptdto/.style={
      shorten >=.25em,
      to path={-- node[at end,yshift=-3pt] {\scriptsize$\ast$} (\tikztotarget) \tikztonodes},
      }
}
\definecolor{darkgreen}{rgb}{0,0.4,0}
\definecolor{darkblue}{rgb}{0,0,1}
\definecolor{darkred}{rgb}{.7,0,0}
\tikzset{
	tikzforeground/.style={
		->,
		line width = 0.8pt,
		preaction={draw, -, line width=5pt, white},
	},
	tikzbackground/.style={
		->,
		line width = 0.4pt,
	},
	tikzequal/.style={
		-,
		double,
		shorten >=0.2cm,
		shorten <=0.2cm,
		line width = 0.6pt,
		preaction={draw, -, line width=3pt, white},
	},
}
\def\githubpath{\tt\small}
\setlist[enumerate]{label=(\roman*)}
\begin{document}

\title{On symmetries of spheres in univalent foundations}

\author{Pierre Cagne, Ulrik Buchholtz, Nicolai Kraus, and Marc Bezem}
%
%
%


\begin{abstract}
  Working in univalent foundations,
  we investigate the symmetries of spheres, i.e., the types of the form
  $\Sn n = \Sn n$.
  The case of the circle has a slick answer:
  the symmetries of the circle form two copies of the circle.
  For higher-dimensional spheres, the type of symmetries has again two connected components, namely the components of the maps of degree plus or minus one.
  Each of the two components has $\ZZ/2\ZZ$ as fundamental group.
  For the latter result, we develop an EHP long exact sequence.
\end{abstract}

\maketitle

\section{Introduction}
\label{sec:intro}


Martin-L\"of's dependent type theory~\cite{Martin-Lof-1972} can serve as a basis for proof assistants and dependently typed programming languages.
As pioneered by Voevodsky~\cite{voevodsky_univalentfoundations} as well as by Awodey and Warren~\cite{awodeyWarren_HTmodelsOfIT}, it allows for homotopy-theoretic semantics.
Concretely, types can be interpreted as $\infty$-groupoids~\cite{Kapulkin2021}, and more generally,
as objects in any Grothendieck $(\infty,1)$-topos~\cite{shulman2019}.
These models justify Voevodsky's \emph{univalence axiom} and admit a range of \emph{higher inductive types}.
The field that embraces the view of types as spaces qua homotopy types
is known as \emph{homotopy type theory} (HoTT) and the setting as
\emph{univalent foundations} (UF)~\cite{HoTT}.

It turns out that various results that hold for spaces in standard homotopy theory can be stated and proved for types in homotopy type theory.
The framework enforces all arguments to be purely axiomatic in nature, and the resulting subfield of homotopy type theory is sometimes called \emph{synthetic homotopy theory}.
Examples of results are the type-theoretic Seifert--van Kampen theorem~\cite{houfavonia_et_al:LIPIcs:2016:6562}, the Blakers--Massey connectivity theorem~\cite{favonia_blakers}, and the construction of the Hopf fibration~\cite[Ch.~8.5]{HoTT}.

A central type of study in synthetic homotopy theory is the $n$-dimensional sphere type $\Sn n$.
The calculation of the fundamental group of the circle $\Sn 1$ was among the first results in the area~\cite{licataShulman_circle}, and the result was quickly extended to the $n$\th{} homotopy group of $\Sn n$, i.e., to $\hgr n(\Sn n)$~\cite{licataBrunerie_s1again}.
Further homotopy groups of higher spheres have been studied by Brunerie, 
showing that $\hgr{n+1}(\Sn n) = \ZZ/2 \ZZ$ for $n \geq 3$~\cite{brunerie:thesis}.

In this paper, we are interested in the type of symmetries, 
or self-equivalences, of the spheres $\Sn n$.
By univalence, this type can be written simply as $\Sn n = \Sn n$.
Trivial cases occur for $n = -1$,
where $\Sn{-1}=\emptytype$, the empty type, so $(\Sn{-1} = \Sn{-1}) = \bn 1$,
and for $n= 0$,
where $\Sn 0=\bn 2$, the type of booleans, so $(\Sn 0 = \Sn 0) = \bn 2 = (\bn 1+\bn 1)$.
For $n = 1$, a relatively simple calculation shows that $(\Sn 1 = \Sn 1) = (\Sn 1 + \Sn 1)$.
For $n \geq 2$, a similarly elegant answer does not seem to be possible.
Our main result for this case is that the type $(\Sn n = \Sn n)$ has two equivalent connected components, each with fundamental group $\ZZ / 2 \ZZ$.
Perhaps surprisingly, this turns out to be easier to prove for $n \geq 3$ than for $n = 2$.

Our study is closely related to the calculation of homotopy groups mentioned above.
Recall that the $n$\th{} homotopy group of $\Sn n$ is, by definition, the set-truncation of the iterated loop space $\Omega^n(\Sn n)$.
The latter type is equivalent to $\Sn n \ptdto \Sn n$, 
the type of \emph{pointed} endofunctions on $\Sn n$. In contrast,
we study the types of self-equivalences of $\Sn n$, not the pointed ones.  
Many of our arguments use techniques similar to the ones used in the
calculation of higher homotopy groups, such as the Hopf fibration
\cite[Sec.~8.5]{HoTT}, or Freudenthal's suspension 
theorem~\cite[Thm.~8.6.4]{HoTT}.
For the characterisation of the fundamental group of the components 
of $\Sn 2 = \Sn 2$, Brunerie's~\cite{brunerie:thesis} calculation of 
$\hgr4(\Sn 3)$ is of great use.

The type $\Sn n = \Sn n$ is the type of elements of a (higher) group,
viz., the automorphism group of the $n$-sphere, traditionally 
denoted $\mathrm{G}(n+1)$.
The classifying space $\mathrm{BG}(n+1)$ classifies spherical fibrations with fiber $\Sn n$,
and these play an important role in the branch of geometric topology that deals with the homotopy theory of manifolds via surgery theory (see, e.g., Sullivan~\cite{sullivan1970}).
The homotopy type of the symmetries of the 2-sphere in classical topology has been 
determined by Hansen~\cite{hansen}, and we discuss this in our conclusions.
We refer to Smith~\cite{smith2010} for a general survey of the homotopy theory of 
function types in classical topology.

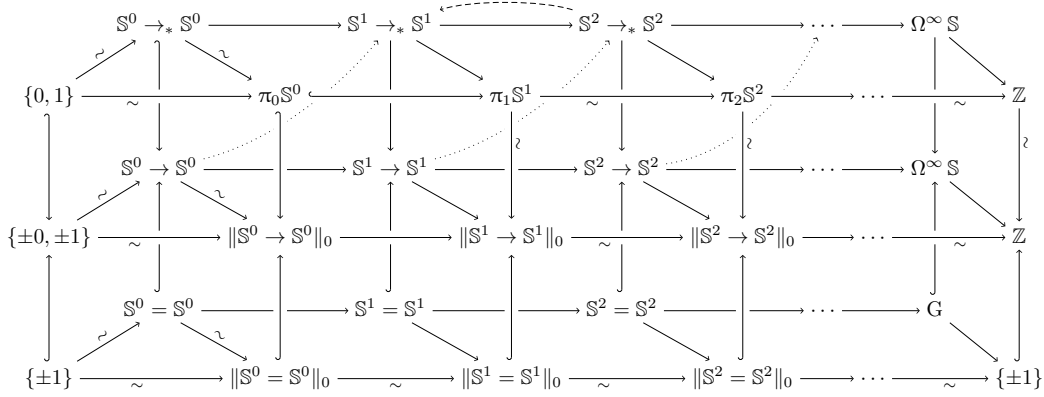
\begin{figure*}
  \caption{Relating pointed endomaps, endomaps, and self-identifications of spheres.}
  \label{fig:comparisons}
\begin{minipage}{\textwidth}
\centering
  \begin{tikzpicture}
  \node[scale=0.8] (a) at (0,0){
  \begin{tikzcd}[cramped, column sep=tiny]
  & \Sn0\ptdto\Sn0 &&[-5pt] \Sc\ptdto\Sc &&[-5pt] \Sp\ptdto\Sp &&[-5pt] \cdots && {\loopspace\infty\mathbb{S}}
  &[5pt] \\
	{\{0,1\}} && {\hgr0\Sn0} && {\fgr\Sc} && {\hgr2\Sp} && \cdots && \ZZ \\
	& \Sn0\to\Sn0 && \Sc\to\Sc && \Sp\to\Sp && \cdots && {\loopspace\infty\mathbb{S}} \\
	{\{\pm0,\pm1\}} && {\setTrunc{\Sn0\to\Sn0}} && {\setTrunc{\Sc\to\Sc}} && {\setTrunc{\Sp\to\Sp}} && \cdots && \ZZ \\
	& {\Sn0=\Sn0} && {\Sc=\Sc} && {\Sp=\Sp} && \cdots && {\mathrm{G}} \\
	{\{\pm1\}} && {\setTrunc{\Sn0=\Sn0}} && {\setTrunc{\Sc=\Sc}} && {\setTrunc{\Sp=\Sp}} && \cdots && {\{\pm1\}}
	\arrow[from=1-2, to=1-4]
	\arrow[from=1-2, to=2-3, "\sim"{rotthreefifteen}]
	\arrow[from=1-2, to=3-2, hook']
	\arrow[from=1-4, to=1-6]
	\arrow[from=1-4, to=2-5]
	\arrow[from=1-4, to=3-4]
	\arrow[from=1-6, to=1-4, curve={height=12pt}, dashed]
	\arrow[from=1-6, to=1-8]
	\arrow[from=1-6, to=2-7]
	\arrow[from=1-6, to=3-6]
	\arrow[from=1-8, to=1-10]
	\arrow[from=1-10, to=2-11]
	\arrow[from=1-10, to=3-10]
	\arrow[from=3-2, to=1-4, curve={height=15pt}, dotted]
	\arrow[from=3-2, to=3-4]
	\arrow[from=3-2, to=4-3, "\sim"{rotthreefifteen}]
	\arrow[from=3-4, to=1-6, curve={height=15pt}, dotted]
	\arrow[from=3-4, to=3-6]
	\arrow[from=3-4, to=4-5]
	\arrow[from=3-6, to=3-8]
	\arrow[from=3-6, to=4-7]
	\arrow[from=3-6, to=1-8, curve={height=22pt}, dotted]
	\arrow[from=3-8, to=3-10]
	\arrow[from=3-10, to=4-11]
	\arrow[from=5-2, to=3-2, hook]
	\arrow[from=5-2, to=5-4]
	\arrow[from=5-2, to=6-3, "\sim"{rotthreefifteen}]
	\arrow[from=5-4, to=3-4, hook]
	\arrow[from=5-4, to=5-6]
	\arrow[from=5-4, to=6-5]
	\arrow[from=5-6, to=3-6, hook]
	\arrow[from=5-6, to=5-8]
	\arrow[from=5-6, to=6-7]
	\arrow[from=5-8, to=5-10]
	\arrow[from=5-10, to=3-10, hook]
	\arrow[from=5-10, to=6-11]
	\arrow[from=2-1, to=1-2, "\sim"{rotfortyfive}]
	\arrow[from=2-1, to=4-1, hook']
	\arrow[from=2-1, to=2-3, crossing over, "\sim"'{pos=.3}]
	\arrow[from=2-3, to=2-5, crossing over, hook]
	\arrow[from=2-3, to=4-3, crossing over, hook']
	\arrow[from=2-5, to=2-7, crossing over, "\sim"'{pos=.3}]
	\arrow[from=2-5, to=4-5, crossing over, "\sim"{pos=0.3,rottwoseventy}]
	\arrow[from=2-7, to=2-9, crossing over]
	\arrow[from=2-7, to=4-7, crossing over, "\sim"{pos=0.3,rottwoseventy}]
	\arrow[from=2-9, to=2-11, crossing over, "\sim"'{pos=.6}]
	\arrow[from=2-11, to=4-11, "\sim"{pos=0.3,rottwoseventy}]
	\arrow[from=4-1, to=3-2, "\sim"{rotfortyfive}]
	\arrow[from=4-1, to=4-3, crossing over, "\sim"'{pos=.3}]
	\arrow[from=4-3, to=4-5, crossing over]
	\arrow[from=4-5, to=4-7, crossing over, "\sim"'{pos=.3}]
	\arrow[from=4-7, to=4-9]
	\arrow[from=4-9, to=4-11, crossing over, "\sim"'{pos=.6}]
	\arrow[from=6-1, to=4-1, hook]
	\arrow[from=6-1, to=5-2, "\sim"{rotfortyfive}]
	\arrow[from=6-1, to=6-3, "\sim"'{pos=.4}]
	\arrow[from=6-3, to=4-3, crossing over, hook]
	\arrow[from=6-3, to=6-5, "\sim"']
	\arrow[from=6-5, to=4-5, crossing over, hook]
	\arrow[from=6-5, to=6-7, "\sim"'{pos=.3}]
	\arrow[from=6-7, to=4-7, crossing over, hook]
	\arrow[from=6-7, to=6-9]
	\arrow[from=6-9, to=6-11, "\sim"'{pos=.6}]
	\arrow[from=6-11, to=4-11, hook]
  \end{tikzcd}
  };
  \end{tikzpicture}
\end{minipage}
\end{figure*}

\paragraph{\bf Setting and assumptions.}
To be precise and to fix notations, we work inside a version of
intuitionistic Martin-L\"of type theory with $\Sigma$-, $\Pi$- and
$\Id$-types and with a cumulative hierarchy of universes,
simply written $\UU$, for which Voevodsky's univalence axiom holds. 
Our type theory corresponds to the one developed in the HoTT book~\cite{HoTT},
although we only assume the higher inductive types specified below.
The basic concepts introduced in the first four chapters of \cite{HoTT}
will mostly be used without further explanation.

We use the same notations as in \cite{HoTT}, with the following exceptions.
If $p:x=y$ and $q:y=z$ are paths, then we denote their
composition by $qp$, or by $q\cdot p$. 
The (dependent) application of a function $f: \prod_{x:X} Y(x)$ 
on paths is denoted by $\ap f$.

We assume a type $\NN$ of natural numbers with its inductive property
(cf.~\cite[Ch.~1.9]{HoTT}), from which is crafted a type $\ZZ$ of integers
as in \cite[\githubpath core/lib/types/Int.agda]{hott-agda}. One important
property of $\ZZ$ is that it has decidable
equality, as proved in the cited file. 

A \emph{pointed type} is a type $A$ with an implicitly or explicitly
given point $a_0 : A$. Given such a pointed type, we write
$\loopspace \null (A) \defequi (a_0 = a_0)$ for the \emph{loop space},
which is itself pointed at $\refl {a_0}$. The
\emph{iterated loop space} is given by $\loopspace 0 (A) \defequi A$
and $\loopspace {n+1} \defequi \loopspace n (\loopspace \null (A))$.
Note that $\loopspace \null$, and hence $\loopspace n$, will be given the
structure of \emph{wild endofunctors} (see \cref{ex:loop-sus-wild-functors}),
which means in particular that they can be applied to functions between 
pointed types. The universe of pointed types is denoted by $\UUptd$, 
and the forgetful map $\UUptd \to \UU$ is
a silent coercion: given a pointed type $A$, its underlying unpointed
type is still written $A$. To avoid confusion, throughout this paper,
the type $A=B$ will always denote the type of paths from $A$ to $B$ in
$\UU$ (that is when $A$ and $B$ are considered as unpointed types),
while $A =_\ast B$ will denote the type of paths from $A$ to $B$ in
$\UUptd$ (that is, when $A$ and $B$ are considered as pointed types).

We further assume the following higher inductive types, referring to
\cite[Ch.~6]{HoTT} for the details: 
the circle (denoted by $\Sc$);
propositional truncation (denoted by $\Trunc A$);
set truncation (denoted by $\setTrunc A$);
suspension (denoted by $\susp A$);
join (denoted by $A*B$);
wedge sum (denoted by $A\vee B$).
The latter three are defined as certain pushouts.
For the circle we assume the usual definition with a base point $\base$
and a loop $\Sloop$, although it can equivalently be described as the
suspension of $\bn 2$.
(In fact, all the above higher inductive types
can be constructed from pushouts alone,
see \cite{Rijke-Join}.)

The spheres $\Sn n$ are then defined by induction on $n:\NN$ by
$\Sn n \defequi \susp(\Sn{n-1})$ for all $n\geq 2$.
We also set $\Sn{-1} \defequi\emptytype$ and 
$\Sn 0\defequi\bn 2 \defequi \bn 1 + \bn 1$.
Then we have $\Sn n = \susp(\Sn{n-1})$ for all $n\ge0$
(for $n=1$ this follows from \cite[Lem.~6.5.1]{HoTT}).
Another basic fact about spheres is that $\Sn n$ is $(n{-}1)$-connected, 
for all $n\geq -1$, \cite[Cor.\ 8.2.2]{HoTT}.
This implies that $\Sc$ is connected and that $\Sp$ is simply connected,
i.e., $\Sp$ and all its path types are connected.

For pointed types we adopt the following conventions.
Suspensions are pointed at $\north$. If $A$ and $B$ are pointed types,
then $A+B$ is pointed at $\inl(a_0)$, with $a_0$ the point of $A$.
A similar convention is followed for pushouts, wedges and joins.

A \emph{fiber sequence} (cf.\ \cite[Def.~8.4.3]{HoTT})
is a sequence $F \overset \iota \ptdto  X \overset p \ptdto Y$
where $F$ is the fiber of $p$ at the point $y_0$ of $Y$
and $\iota$ is the first projection. 
The \emph{connected component} of type $A$ at point $a:A$
is $\conncomp{A}{a}\defequi \sum_{x:A} \Trunc{a=x}$.

Throughout this paper we treat univalence as transparent,
in the sense that equivalences $f:A\weq B$ will be treated as paths 
of type $A=B$ without any warning, and vice versa. (Possible universe 
level issues can be solved by cumulativity and will be disregarded.)
We adopt a similar attitude towards function extensionality.
We treat a homotopy $h:\prod_{x: A} (f(x) = g(x))$ as a path of type $f=g$.
Conversely, any $p: f=g$ is treated as the induced homotopy,
with $p(x)$ (or $p_x$) denoting the induced path of type $f(x) = g(x)$.

\paragraph{\bf Contributions and overview of the paper.}

In \cref{sec:circle-case} we show that the type $\Sc = \Sc$ is
equivalent to $\Sc + \Sc$. To do so, we take a detour to univalent
group theory and establish a far more general result:
Gottlieb's theorem (new in UF). 
We get $(\Sc = \Sc) \weq (\Sc + \Sc)$
by applying Gottlieb's theorem to the group of integers $\ZZ$.
In \cref{sec:sphere} we deal with $\Sp=\Sp$. This case is much more
difficult than the previous. 
Using the Hopf fibration as defined in UF in \cite{brunerie:thesis},
and two definitions of the degree function 
(a variation of \cite{Buchholtz2018CellularCI}), we prove that the 
type $\Sp=\Sp$ has exactly two connected components, 
equivalent to one another.
In \cref{sec:higher-sphere} we prove by induction on
$n\geq 2$ that the type $\Sn n = \Sn n$ has exactly two connected
components, equivalent to one another. Each induction step relies on
Freudenthal's suspension theorem and on the result that suspension
and negation commute (new in UF in 2020).
In \cref{sec:summary} we explain the 3-dimensional \cref{fig:comparisons},
the inventory of the types and maps between them studied so far,
and discuss comparisons with other approaches.
In \cref{sec:whitehead-interlude} we prepare the study of the structure
of the connected components of $\Sp = \Sp$ by some results on the
generalized Whitehead product (new in UF).
We partly develop an EHP long exact sequence (new in UF).
In \cref{sec:structure-components} we show that the fundamental group
of each component of $\Sn n = \Sn n$ is $\ZZ/2\ZZ$ for $n\ge 2$.
Final remarks are made in \cref{sec:conclusions}.
Appendix A contains proofs that are left out or only sketched.
Appendix B provides the basics of
wild categories (including $\UU$ and $\UUptd$), 
wild functors (including $\loopspace \null$ and $\susp$),
wild adjunctions ($\susp\leftadjto\loopspace \null$), and
wild monoids.

\section{Symmetries of the circle}
\label{sec:circle-case}%

In this section, we will prove the following result.
\begin{restatable}{thm}{symmetriesSc}
  \label{thm:symmetries-of-S1}
  There is an equivalence
  \begin{equation*}
    \label{eq:symm-circle}%
    (\Sc = \Sc) \weq (\Sc+\Sc).
  \end{equation*}
\end{restatable}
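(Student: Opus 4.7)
The plan is to recognize $\Sc$ as the delooping $B\ZZ$ of the integers---$\Sc$ is a pointed connected $1$-type whose loop space is the group $\ZZ$---and then to invoke a general result on symmetries of deloopings of groups, namely the synthetic version of Gottlieb's theorem promised in the section's introduction. Applying that result at $G=\ZZ$ should directly yield the desired equivalence.

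Next, I would state and prove a Gottlieb-style result in univalent group theory: for a group $G$ with delooping $BG$, evaluation at the basepoint fits into a fiber sequence
\begin{equation*}
  \mathrm{Aut}(G) \to (BG = BG) \to BG.
\end{equation*}
The fiber over the basepoint is the type of pointed self-equivalences of $BG$; since $BG$ is a pointed connected $1$-type with $\loopspace\null(BG) = G$, such pointed self-equivalences correspond precisely to group automorphisms of $G$, i.e., to the discrete set $\mathrm{Aut}(G)$. The monodromy of this fibration is the action of $G = \loopspace\null(BG)$ on $\mathrm{Aut}(G)$ by inner conjugation: a loop $g$ sends $\phi$ to $c_g \circ \phi$. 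When $G$ is abelian each $c_g$ is the identity, so the action is trivial, and the fibration splits to give
\begin{equation*}
  (BG = BG) \weq \mathrm{Aut}(G) \times BG.
\end{equation*}

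To conclude, I would specialize to $G = \ZZ$. Since $\ZZ$ is abelian and has exactly two group automorphisms (the identity and negation), we have $\mathrm{Aut}(\ZZ) \weq \bn 2$. Combined with $B\ZZ \weq \Sc$, this yields
\begin{equation*}
  (\Sc = \Sc) \weq \bn 2 \times \Sc \weq \Sc + \Sc,
\end{equation*}
where the last equivalence uses $\bn 2 \weq \bn 1 + \bn 1$ and distributivity of products over coproducts.

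The main obstacle is not the arithmetic at the end but the synthetic machinery sitting behind the Gottlieb theorem: one needs a workable notion of group in UF together with its delooping, an identification of $\Sc$ with $B\ZZ$ at the level of groups, and the construction of the evaluation fiber sequence with its trivialization in the abelian case. The most delicate point is the monodromy calculation, where one must show that transporting a pointed self-equivalence of $BG$ along a loop $g$ in $BG$ conjugates the induced endomorphism on $G$ by $g$; this is a classical topological fact whose synthetic derivation requires careful path algebra so that inner conjugation emerges naturally from the transport formula, and triviality in the abelian case then comes for free.
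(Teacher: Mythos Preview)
Your proposal is correct but runs through a different fiber sequence than the paper. The paper formulates Gottlieb's theorem as a fiber sequence $BZ(G) \to (BG=BG) \to \Out(G)$ whose second map is set-truncation; to get a product decomposition it then needs a \emph{section} $\Out(G) \to (BG=BG)$, which it builds for $G=\ZZ$ from the reflection $-\id_\Sc$ together with decidable equality on $\ZZ$. You instead use the evaluation fiber sequence $\Aut(G) \to (BG=BG) \to BG$ with discrete fiber $\Aut(G)$, compute the monodromy as post-composition by inner automorphisms, and conclude that for abelian $G$ the set-valued family on the $1$-type $BG$ is constant, so the total type splits as $\Aut(G) \times BG$ without any section being constructed. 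Both routes ultimately rely on $\Aut(\ZZ) \weq \bn 2$, which needs decidable equality in $\ZZ$. The paper's packaging applies uniformly to nonabelian $G$ (the remark on $\mathfrak S_n$ exploits this); your evaluation route is leaner in the abelian case and closer to Gottlieb's original argument. One point worth making explicit in your write-up: the step from ``trivial monodromy'' to ``the fibration splits'' uses that the fiber is a set and the base a $1$-type, so that a family with trivial action of the fundamental group is necessarily constant.
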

We will obtain the result as a consequence
of~\cref{thm:gottlieb-univalent}. In order to state the theorem and
prove it, we need a bit of group theory in univalent foundations. For
any group $G$, a delooping of $G$ is a connected pointed 1-type (groupoid) $A$
such that $G = \loopspace\null A$ as groups. Such a delooping always
exist, and two deloopings are always equal as pointed types~\cite{BvDR}. We
usually write $BG$ for such a delooping, with the point
denoted by $s_G$. Given groups $G$ and $H$,
the function $\loopspace\null$ is an equivalence from $BG \ptdto BH$ to
the type of group homomorphisms from $G$ to $H$. Moreover, this
equivalence retricts to an equivalence between $BG \ptdweq BH$ and the
type of group isomorphisms from $G$ to $H$. Recall that for a group
$G$, there is a homomorphism $G \to \Aut(G)$ from $G$ to the group of
group automorphisms of $G$, that sends an element $g:G$ to the
conjugation $x \mapsto g x \inv g$. The kernel of this homomorphism is
called the center of $G$ and is written $Z(G)$. The image of this
homomorphism is called the group of inner automorphisms of $G$ and is
written $\Inn(G)$. The quotient $\Out(G) \defequi \Aut(G)/\Inn(G)$ is
a set whose elements are called the outer automorphisms of $G$. It is
naturally pointed at the class of the identity automorphism.

\begin{thm}\label{thm:gottlieb-univalent}
  Let $G$ be a group. There is a fiber sequence of the form:
  \begin{displaymath}
    BZ(G) \ptdtof{\iota} (BG = BG) \ptdtof{p} \Out (G)
  \end{displaymath}
  where $BG=BG$ is pointed at $\refl {BG}$.
\end{thm}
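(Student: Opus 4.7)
The plan is to define the map $p$ explicitly using connectedness of $BG$, identify its fiber at $[\id_G]$ with the connected component of $\refl$ in $BG = BG$, and recognize this component as a delooping of $Z(G)$.

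For the map $p$, fix $f : BG = BG$. Since $BG$ is connected, there merely exists a path $\alpha : f(s_G) = s_G$, and any such $\alpha$ upgrades $f$ to a pointed equivalence $(f,\alpha)$, which corresponds under $\loopspace\null$ to an automorphism $\phi_\alpha \in \Aut(G)$. Two choices of $\alpha$ differ by a loop $\gamma \in \loopspace\null BG = G$, and this change precisely conjugates $\phi_\alpha$ by $\gamma$; hence the class $[\phi_\alpha] \in \Out(G)$ is independent of $\alpha$. Since $\Out(G)$ is a set, the universal property of propositional truncation then yields a well-defined $p(f) : \Out(G)$, and by construction $p(\refl) = [\id_G]$.

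Next I would compute the fiber of $p$ at $[\id_G]$. If $p(f) = [\id_G]$, then $f$ admits a pointed lift whose induced automorphism is inner, say $\mathrm{conj}_g$; adjusting the chosen path by $g$ produces a pointed lift inducing $\id_G$, and the equivalence $(BG \ptdweq BG) \weq \Aut(G)$ then forces this lift to equal the identity pointed equivalence. Forgetting the pointing yields an element of $\Trunc{f = \refl}$, so $f$ lies in the connected component of $\refl$. Conversely, an element of $\Trunc{f = \refl}$ forces $p(f) = [\id_G]$ because $\Out(G)$ is a set. Therefore $\mathrm{fib}_p([\id_G])$ coincides, as a pointed type, with $\conncomp{(BG = BG)}{\refl}$, which is a pointed connected $1$-type, inherited from $BG = BG$.

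To conclude, by uniqueness of deloopings it suffices to show that this component has loop space $Z(G)$. By univalence and function extensionality, $\loopspace\null(BG = BG) \weq (\id_{BG} = \id_{BG}) \weq \prod_{x : BG}(x = x)$. Evaluation at $s_G$ gives a map to $G$; for $\phi$ in this product and any loop $\gamma : s_G = s_G$, naturality obtained by applying $\ap\phi$ to $\gamma$ yields $\gamma \cdot \phi(s_G) = \phi(s_G) \cdot \gamma$, so the image lies in $Z(G)$. Conversely, for $z \in Z(G)$, each $x = x$ is a set, and centrality of $z$ makes the formula $\phi(x) = p \cdot z \cdot p^{-1}$ independent of $p : s_G = x$, so $\phi(x)$ can be extracted from the mere existence of $p$ by truncation elimination. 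The main technical obstacle is verifying this last equivalence rigorously, and checking that it is a group isomorphism; the cleanest route likely goes through an explicit groupoid presentation of $BG$, in which elements of $\prod_x (x=x)$ are literally natural transformations of the identity functor, manifestly parameterized for a one-object groupoid by the central elements of $G$.
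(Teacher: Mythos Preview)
Your proposal is correct and follows essentially the same strategy as the paper: identify the fiber of $p$ with the connected component of $\refl{BG}$, then recognize that component as $BZ(G)$ via the loop-space computation $\prod_{x:BG}(x=x)\simeq Z(G)$.

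The one structural difference worth noting is that the paper takes $p$ to be the set-truncation map $\settrunc{\blank} : (BG=BG)\to\setTrunc{BG=BG}$ and separately proves $\setTrunc{BG=BG}\simeq\Out(G)$ (by identifying $BG=BG$ with the fiber of $B\inn : BG \to B\Aut(G)$). This buys them your Step~2 for free, since the fiber of set truncation at $\settrunc{\refl{}}$ is the connected component by definition. Your explicit construction of $p$ via ``choose a pointing, take the class mod inner'' is the same map after composing with that equivalence, but you then have to redo the fiber computation by hand. For Step~3, your argument and the paper's (\cref{lem:delooping-center}) are essentially identical: the paper also uses evaluation at $s_G$ and checks that its image is exactly $Z(G)$; injectivity is obtained there by observing that the fiber of evaluation is the set $BG\ptdweq BG\simeq\Aut(G)$, which is a cleaner alternative to the naturality argument you sketch (and avoids any need for a concrete groupoid model of $BG$).
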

This theorem is known in classical homotopy theory, and is attributed
to Gottlieb~\cite{gottlieb:subgroup}.

\begin{proof}[Proof of~\cref{thm:gottlieb-univalent}]
  For any type $A$, by pointing $(A=A)$ at $\refl A$ and
  $\setTrunc{A=A}$ at $\settrunc{\refl A}$, there is, by definition of
  the connected component $\conncomp {(A=A)} {\refl A}$, a fiber
  sequence
  \begin{displaymath}
    \conncomp {(A=A)} {\refl A} \ptdtof{\iota} (A=A) 
    \ptdtof{\settrunc{\,}} \setTrunc{A=A}.
  \end{displaymath}
  When $G$ is a group, we can apply this fact to the (unpointed) type
  $BG$. We will get the result stated in~\cref{thm:gottlieb-univalent}
  if we can exhibit pointed equivalences
  $BZ(G) \ptdweq \conncomp {(BG=BG)} {\refl {BG}}$ and
  $\setTrunc{BG=BG} \ptdweq \Out G$. The former is the content
  of~\cref{lem:delooping-center}, and the latter that
  of~\cref{lem:outer-aut-as-components}.
\end{proof}


\begin{restatable}{lem}{lemmadeloopingcenter}
  \label{lem:delooping-center}%
  For any group $G$ there is a pointed equivalence
  $BZ(G) \ptdweq \conncomp {(BG=BG)} {\refl {BG}}$.
\end{restatable}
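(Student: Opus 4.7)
The plan is to exhibit both $BZ(G)$ and $\conncomp{(BG=BG)}{\refl{BG}}$ as deloopings of the group $Z(G)$ and then appeal to the uniqueness of deloopings recalled just before~\cref{thm:gottlieb-univalent}.

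First I would check that $\conncomp{(BG=BG)}{\refl{BG}}$ is a connected pointed $1$-type. Connectedness is immediate from the definition of the connected component. The $1$-truncatedness follows because $BG$ is a $1$-type, so $BG=BG$ (a type of equivalences between $1$-types) is again a $1$-type, and this property is inherited by any subtype, in particular by the connected component at $\refl{BG}$.

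Next I would compute its loop space. Unfolding,
\[
  \loopspace\null \conncomp{(BG=BG)}{\refl{BG}} \;=\; \bigl(\refl{BG} =_{BG=BG} \refl{BG}\bigr).
\]
Treating univalence and function extensionality transparently as in the paper's setup, this type is equivalent to the type of homotopies from $\id_{BG}$ to itself, namely $\prod_{x:BG}(x=x)$, with group operation given pointwise by path concatenation.

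The key step is to identify $\prod_{x:BG}(x=x)$ with $Z(G)$ as groups. Evaluation at the basepoint yields a homomorphism $\ev_{s_G} \colon \prod_{x:BG}(x=x) \to (s_G = s_G) = G$ with respect to the pointwise concatenation structure. Because $BG$ is a connected pointed $1$-type and the family $x\mapsto (x=x)$ is pointwise a set, any $\phi:\prod_{x:BG}(x=x)$ is determined uniquely by its value $\phi(s_G):G$ together with the dependent-path coherence along each loop $\ell:\loopspace\null BG = G$. Computing that coherence via $\apd_{\phi}(\ell)$ shows it amounts exactly to $\ell\cdot\phi(s_G) = \phi(s_G)\cdot\ell$ for all $\ell\in G$, i.e., to $\phi(s_G)\in Z(G)$; and conversely every $z\in Z(G)$ extends to such a $\phi$. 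Hence $\ev_{s_G}$ is a group isomorphism onto $Z(G)$.

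Finally, both $BZ(G)$ and $\conncomp{(BG=BG)}{\refl{BG}}$ are connected pointed $1$-types whose loop spaces are (canonically isomorphic to) $Z(G)$. By the uniqueness of deloopings quoted in the excerpt, there is a pointed equivalence $BZ(G) \ptdweq \conncomp{(BG=BG)}{\refl{BG}}$, as required.

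The main obstacle is the calculation $\prod_{x:BG}(x=x)\simeq Z(G)$: one has to justify rigorously that a dependent function on $BG$ into a set-valued family is determined by its value at $s_G$ plus an $\Omega BG$-equivariance datum, and to identify that equivariance with the defining condition of the center. Everything else is essentially bookkeeping once deloopings are available.
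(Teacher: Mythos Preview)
Your proposal is correct and takes essentially the same approach as the paper: both identify the loop space of $\conncomp{(BG=BG)}{\refl{BG}}$ with $Z(G)$ via evaluation at the basepoint $s_G$ and the resulting commutativity condition, then conclude using that connected pointed $1$-types are determined by their loop group. The paper packages this slightly differently, constructing the explicit evaluation map $z_G:\conncomp{(BG=BG)}{\refl{BG}}\ptdto BG$ and showing $\loopspace\null(z_G)$ is injective with image $Z(G)$, but the underlying calculation---that a section of $x\mapsto(x=x)$ over $BG$ amounts to a central element---is the same as yours.
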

\begin{proof}[Sketch of proof]
  Define $z_G:\conncomp {(BG=BG)}{\refl {BG}} \ptdto BG$ as the restriction
  of the evaluation $(BG=BG) \to BG$ that maps $x:BG=BG$ to 
  $x(s_G):BG$. The associated group homomorphism
  $\loopspace\null{(z_G)}$ is injective and has image $Z(G)$.
\end{proof}

To prove the next lemma, we need to make observations about subgroups
and quotients in univalent foundations. The curious reader can refer
to \cite[Sec.~5.2 and~5.3]{Symmetry}. Given a group $G$ and a subgroup $H$, the inclusion
$H \subseteq G$ corresponds to a pointed map $i_H : BH \ptdto BG$
(pointed by a path $(i_H)_0$) whose fibers are all sets. 
The fiber $\inv{i_H}(s_G)$ can be
identified with the set $G/H$ of $H$-cosets, in such a way that the
element $(s_H, (i_H)_0)$ corresponds to the class of the neutral element
of $G$. Subsequently, every fiber is merely equivalent to
$G/H$. Moreover, the pointed map $a_H : BG \ptdto \conncomp \UU {G/H}$
mapping $y : BG$ to $\inv{i_H}(y)$ is such that the homorphism
$\loopspace\null(a_H)$ is precisely the action of $G$ on $G/H$ by
multiplication. Now, when $f : G \to G'$ is any group homomorphism,
with corresponding pointed map $Bf : BG \ptdto BG'$, 
we point $\inv{(Bf)}(s_{G'})$ at $(s_G,(Bf)_0)$,
where $(Bf)_0$ is the path pointing $Bf$.
One can then prove that $B\im(f)$ is equivalent to
$\sum_{y':BG'}\setTrunc{\inv{(Bf)}(y')}$, pointed at
$(s_{G'},\settrunc{(s_G,(Bf)_0)})$, 
under which $i_{\im(f)}$ identifies with the first
projection. In particular, there is a pointed equivalence
$G'/\im(f) \ptdweq \setTrunc{\inv{(Bf)}(s_{G'})}$.

\begin{restatable}{lem}{lemmaouterautascomponents}
  \label{lem:outer-aut-as-components}%
  For any group $G$ there is a pointed equivalence
  $\setTrunc{BG = BG} \ptdweq \Out(G)$.
\end{restatable}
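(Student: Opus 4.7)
The plan is to construct the pointed equivalence directly, exploiting the recognition principle for pointed connected 1-types, which realises $(BG \ptdweq BG)$ as $\Aut(G)$. The intuition is that passing from a pointed to an unpointed self-equivalence of $BG$ introduces an ambiguity in the choice of path $s_G = e(s_G)$, and that different choices of such paths yield automorphisms of $G$ that differ by conjugation; hence the components of $(BG = BG)$ should correspond to $\Aut(G)$ modulo inner automorphisms, namely $\Out(G)$.

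Concretely, given $e : (BG = BG)$, the connectedness of $BG$ yields $\Trunc{s_G = e(s_G)}$. For any pointing $p : s_G = e(s_G)$, the pair $(e, p)$ is a pointed self-equivalence of $(BG, s_G)$, corresponding via the recognition principle to an automorphism $\phi_{e,p} \in \Aut(G)$ acting as $\alpha \mapsto \inv p \cdot \ap{e}(\alpha) \cdot p$ on loops $\alpha : s_G = s_G$. A second choice $p' = p\gamma$ for some $\gamma : G$ yields an automorphism $\phi_{e,p'}$ that differs from $\phi_{e,p}$ by an inner automorphism (a direct path-algebra calculation), so the class $[\phi_{e,p}] \in \Out(G)$ is independent of $p$. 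Since $\Out(G)$ is a set, this defines a well-defined map $\tilde\Phi : (BG = BG) \to \Out(G)$, which factors through the set truncation as $\Phi : \setTrunc{BG = BG} \to \Out(G)$.

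For the inverse, given $\phi \in \Aut(G)$ representing some $[\phi] \in \Out(G)$, the recognition principle provides a pointed self-equivalence $(e_\phi, p_\phi)$ of $(BG, s_G)$ inducing $\phi$; set $\Psi([\phi]) \defequi \settrunc{e_\phi}$. The same computation shows $\Psi$ is well-defined on classes: representatives $\phi$ and $\gamma \phi \inv\gamma$ produce pointed equivalences with the same underlying unpointed equivalence, hence the same element of $\setTrunc{BG = BG}$. The composites $\Phi \circ \Psi$ and $\Psi \circ \Phi$ are identities by construction, and $\Phi$ preserves basepoints because $\refl{BG}$ equipped with $\refl{s_G}$ corresponds to the identity automorphism of $G$.

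The main obstacle is the conjugation identity relating re-pointings and induced automorphisms. Although it is a straightforward path-algebra computation, it is the crux of the argument: it identifies the $G$-action by re-pointing on the fiber $(BG \ptdweq BG)$ of the evaluation $(BG = BG) \ptdto BG$ with the adjoint action of $G$ on $\Aut(G)$, which is precisely the Gottlieb-type content of the statement. All other steps are routine applications of the universal properties of propositional and set truncations and of the recognition principle for pointed connected 1-types.
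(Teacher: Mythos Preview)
Your argument is correct, but it follows a different route from the paper's. The paper works more abstractly: it first establishes a general fact that for any group homomorphism $f:G\to G'$ one has a pointed equivalence $G'/\im(f)\ptdweq\setTrunc{\inv{(Bf)}(s_{G'})}$, and then specializes to $f\jdeq\inn:G\to\Aut(G)$. It identifies $B\Aut(G)$ with the connected component of $BG$ in $\UUptd$, and $B\inn$ with the map $y\mapsto(BG,y)$; the fiber over the basepoint is then $\sum_{y:BG}(BG,y)=_*BG\weq(BG=BG)$, and set-truncating gives the result.

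Your approach is more hands-on: you use the recognition principle $(BG\ptdweq BG)\weq\Aut(G)$ directly and pin down, via the conjugation identity for re-pointings, exactly how the forgetful map $(BG\ptdweq BG)\to(BG=BG)$ collapses $\Aut(G)$ onto $\Out(G)$. This is conceptually the same content (the fiber sequence of $B\inn$ over $s_{\Aut(G)}$ is precisely the forgetful map from pointed to unpointed equivalences), but your version avoids the detour through the general quotient machinery and makes the role of inner automorphisms explicit. The paper's approach, on the other hand, buys uniformity: once the general lemma on fibers and quotients is in place, the result follows by instantiation with no further computation.

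One small point of presentation: when you say that $\phi$ and $\gamma\phi\gamma^{-1}$ ``produce pointed equivalences with the same underlying unpointed equivalence'', you are implicitly using that the recognition principle is an equivalence (so that the pointed equivalence inducing $\inn(\gamma)\circ\phi$ is \emph{equal} to $(e_\phi,p_\phi\gamma^{-1})$, not just merely so). This is fine, but worth making explicit.
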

\begin{proof}[Sketch of proof]
  Recall that $\Out(G) \jdeq \Aut(G)/\im(\inn)$ for the
  morphism of groups $\inn : G \to \Aut(G)$ that maps an element
  $g\in G$ to the inner automorphism $x \mapsto g x g^{-1}$. Apply the
  observation above to $\inn$ and recognize that the fiber of $B\inn$
  is equivalent to $BG = BG$.
\end{proof}

\cref{thm:gottlieb-univalent} has a simpler statement whenever there
is a section (right inverse) of $p : (BG = BG) \ptdto \Out(G)$.
\begin{cor}
  \label{cor:gottlieb-univalent-section}
  If we have a section $s : \Out(G) \ptdto (BG = BG)$ of the map
  $p : (BG = BG) \ptdto \Out(G)$ in \cref{thm:gottlieb-univalent}, then
  \begin{displaymath}
    (BG=BG) \weq (BZ(G) \times \Out(G)).
  \end{displaymath}
\end{cor}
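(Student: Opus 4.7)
The plan is to split the fiber sequence of \cref{thm:gottlieb-univalent} by means of the section $s$, exploiting the wild group structure on $(BG = BG)$ given by path composition.

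First, I unpack the construction of $p$ from the proof of \cref{thm:gottlieb-univalent}: it factors as the set-truncation $(BG = BG) \to \setTrunc{BG = BG}$ followed by the pointed equivalence of \cref{lem:outer-aut-as-components}. Both factors preserve the natural group structures on each side, so $p$ satisfies $p(\gamma \cdot \delta) = p(\gamma) \cdot p(\delta)$ and sends $\refl{BG}$ to the basepoint of $\Out(G)$; moreover, $\iota$ lands in the fiber of $p$ over the basepoint, so $p(\iota(\alpha))$ is always the basepoint.

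Next, I define the candidate equivalence
\[
  \Phi : BZ(G) \times \Out(G) \to (BG = BG), \quad (\alpha, \phi) \longmapsto \iota(\alpha) \cdot s(\phi).
\]
Viewed as a map over $\Out(G)$ (with second projection on the source and $p$ on the target), it satisfies $p(\Phi(\alpha, \phi)) = p(\iota(\alpha)) \cdot p(s(\phi)) = \phi$, so it does live over $\Out(G)$. To conclude that $\Phi$ is an equivalence, I invoke the fiberwise equivalence criterion: it suffices that for each $\phi : \Out(G)$, the induced map $BZ(G) \to \inv{p}(\phi)$, $\alpha \mapsto \iota(\alpha) \cdot s(\phi)$, be an equivalence. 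Its inverse sends $\gamma \in \inv{p}(\phi)$ to the unique $\alpha : BZ(G)$ with $\iota(\alpha) = \gamma \cdot s(\phi)^{-1}$, which is well-defined because $p(\gamma \cdot s(\phi)^{-1}) = \phi \cdot \phi^{-1}$ is the basepoint, placing $\gamma \cdot s(\phi)^{-1}$ in the image of $\iota$.

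The only real subtlety lies in verifying that $p$ respects composition and identity, which requires briefly unpacking both $p$ and the equivalence of \cref{lem:outer-aut-as-components}. Once that compatibility is in place, the rest is routine path manipulation.
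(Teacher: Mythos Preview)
Your argument is correct and follows essentially the same route as the paper: decompose $(BG=BG)$ as the total space of the fibers of $p$, and use post-composition with $s(\phi)$ to identify each fiber $\inv p(\phi)$ with the fiber over the basepoint, which is $BZ(G)$ by the theorem.

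The one difference is how you justify that $\iota(\alpha)\cdot s(\phi)$ lands in $\inv p(\phi)$ and that $\gamma\cdot s(\phi)^{-1}$ lands in $\inv p(\text{basepoint})$. You do this by arguing that $p$ is a group homomorphism, which is true but, as you note, requires unpacking the equivalence of \cref{lem:outer-aut-as-components}. The paper sidesteps this entirely: since $p$ is (up to the equivalence of that lemma) just set-truncation, its fibers are exactly the connected components of $(BG=BG)$, and composition with the fixed path $s(\phi)$ manifestly carries the component of $\refl{BG}$ to the component of $s(\phi)$. So you can drop the homomorphism verification and argue directly with connected components; this is both shorter and avoids the extra unpacking.
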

\begin{proof}
  We know that $(BG=BG) \weq \sum_{\varphi: \Out(G)}\inv p
  (\varphi)$. From~\cref{thm:gottlieb-univalent}, we know that the
  fiber $\inv p (\varphi)$ is equivalent to $BZ(G)$ when $\varphi$ is
  the class of $\id_G : \Aut(G)$. So it suffices to prove that all
  fibers are equivalent to each other. However, since $p$ is a
  set-truncation and we have a section $s$ of it, the fiber
  $\inv p (\varphi)$ is simply the connected component of $BG=BG$ at
  $s(\varphi)$. There is an obvious equivalence from
  $\conncomp {(BG=BG)} {\refl{G}}$ to $\conncomp {(BG=BG)} {s(\varphi)}$,
  namely $\psi \mapsto \psi\circ s(\varphi)$.
\end{proof}

Let us come back to the study of $(\Sc = \Sc)$, and define an element
of it that is not in the connected component of $\refl{\Sc}$. Under
univalence, this is equivalent to defining an equivalence
$\Sc \weq \Sc$ that is not merely equal to $\id_\Sc$. Let
$-\id_\Sc : \Sc \to \Sc$ be the function defined by circle induction
as $-\id_\Sc \defequi \ind(\base,\inv{\Sloop})$. In other words,
$-\id_\Sc$ is the (propositionally) unique function $\Sc \to \Sc$ such
that $-\id_\Sc(\base) \jdeq \base$ and
$\ap{-\id_\Sc}(\Sloop) = \inv{\Sloop}$. It is an equivalence because
it is its own inverse. Indeed, we can construct a proof of
$-\id_\Sc \circ -\id_\Sc = \id_\Sc$ by function extensionality and
$\Sc$-induction: since $\refl\base$ is an element of
$(-\id_\Sc\circ -\id_\Sc)(\base) = \base$, we only need to provide an
element of $\pathover {\refl\base} T \Sloop {\refl\base}$ where $T$ is
the type family
$x \mapsto \left(-\id_\Sc \circ -\id_\Sc \right)(x) = x$. But the
transport in the type family $T$ over $\Sloop$ is given by
$p \mapsto \Sloop \cdot p \cdot \inv{[-\id_\Sc \circ
  -\id_\Sc](\Sloop)}$. Expanding the expression as
$\Sloop \cdot p \cdot \inv \Sloop$, we find that
$\trp[T] \Sloop (\refl\base) = \refl\base$ by simple path algebra, as
we wanted.

Now, we shall prove that:
\begin{equation}
  \label{eq:id-neq-minusid-Sc}
  \id_\Sc \neq -\id_\Sc.
\end{equation}
In order to do so, consider the evaluation fiber sequence:
\begin{equation}
  \label{eq:fiber-sequence-Sc}
  (\Sc \ptdto \Sc) \to (\Sc \to \Sc) \xrightarrow{\ev_{\base}}{} \Sc
\end{equation}
Here, all the fibers can be identified via a function
\begin{equation}
  \label{eq:S1-loopspace-Z-at-each-x}%
  f\from \prod_{x:\Sc} (\base=\base) \weq (x=x)
\end{equation}
with $f(\base)\equiv\id_{\base=\base}$ and $\ap f (\Sloop) \from (\Sloop\blank\inv\Sloop = \id_{\base=\base})$ is
the reflexivity path of $\id_{\base=\base}$ transported using
commutativity in $\base=\base$ and path algebra.
Because $(\base=\base)$ is equivalent
to $\ZZ$, it follows that the sequence gives an equivalence
\begin{equation}
  \label{eq:Sc-to-Sc-equiv-Sc-ZZ}
  (\Sc \to \Sc) \weq \biggl(\sum_{x:\Sc} x = x \biggr)
  \weq \biggl(\sum_{x:\Sc}\ZZ\biggr) \weq \left(\Sc \times \ZZ\right),
\end{equation}
where $\id_\Sc$ is sent to $(\base,1)$ while
$-\id_\Sc$ is sent to $(\base,-1)$. These elements of $\Sc \times \ZZ$
belong to different connected components, so $\id_\Sc$ and $-\id_\Sc$
as well.

We can finally prove \cref{thm:symmetries-of-S1}.
\begin{proof}[Proof of~\cref{thm:symmetries-of-S1}]
  The classifying type $B\ZZ$ of the group $\ZZ$ of integers is
  equivalent to the circle $\Sc$. Because $\ZZ$ is abelian, $BZ(\ZZ)$
  is equivalent to $\Sc$ itself, and $\Inn(\ZZ)$ is trivial. In
  particular, $\Out(\ZZ)$ is the set underlying the group
  $\Aut(\ZZ)$. But $\ZZ$ has exactly two automorphisms, namely the
  identity and $k\mapsto-k$. To apply
  \cref{cor:gottlieb-univalent-section} for the desired result, we
  give a section of the map $p : (\Sc \weq \Sc) \ptdto
  \Out(\ZZ)$. Because we want the section to be pointed, we need to
  send the identity to $\id_\Sc$ and $k\mapsto-k$ to
  $-\id_\Sc$.\footnote{To obtain a section from this reasoning, we
    need first an actual bijection between $\Aut(\ZZ)$ and a set with
    two elements. To construct such a bijection, the decidability of
    equality on $\ZZ$ is crucial: For any automorphism, we need to
    decide if its image on $1$ is $1$ or $-1$.}
\end{proof}

 

Note that~\cref{thm:symmetries-of-S1} also gives that the equivalence~\eqref{eq:Sc-to-Sc-equiv-Sc-ZZ}
restricts to the equivalence $(\Sc\weq\Sc) \weq (\Sc\times\{\pm1\})$
of the corresponding subtypes.
\begin{rem}
  Although we used \cref{thm:gottlieb-univalent} only to prove
  \cref{thm:symmetries-of-S1} here, Gottlieb's result has other
  consequences in univalent group theory, which are worth
  mentioning. For example, we can
  apply~\cref{cor:gottlieb-univalent-section} to the group
  $\mathfrak{S}_n$ of permutations of $n$ elements. By definition, its
  classifying type $B\mathfrak{S}_n$ is equivalent to the connected
  component $\conncomp \UU {\mathbf n}$ of the standard set
  $\mathbf n$ with $n$ elements in the universe $\UU$. A surprising
  fact of group theory is that $\Out(\mathfrak{S}_n)$ is always a
  singleton except for $n=6$ for which it is a $2$-element set. For
  $n\geq 3, n\neq 6$, both the center of $\mathfrak{S}_n$ and the set
  of outer automorphisms are trivial, so we get that
  $\conncomp \UU {\mathbf n} = \conncomp \UU {\mathbf n}$ is
  contractible. For $n=6$, we get that
  $\conncomp \UU {\mathbf 6} = \conncomp \UU {\mathbf 6}$ is a set
  with two elements. In layman's terms, there is an invertible uniform
  way to associate to each $6$-elements set another $6$-elements set,
  and this mapping is drastically different from the identity. This
  mapping can actually be described in more details in terms of graph
  factorizations: for a $6$-element set $X$, consider the complete
  graph on $X$ and then craft the set of sets of perfect matchings not
  sharing any edge; it just happens that this resulting set also has
  $6$-elements.
\end{rem}

\section{Symmetries of the \texorpdfstring{$2$}{2}-sphere}
\label{sec:sphere}

In this section, we will prove that the canonical inclusion
\begin{equation*}
  \conncomp{\left(\Sp = \Sp\right)}{\id_\Sp} +
  \conncomp{\left(\Sp = \Sp\right)}{-\id_\Sp}
  \to
  \left(\Sp = \Sp\right)
\end{equation*}
is an equivalence, i.e., \(\Sp = \Sp\) has exactly two components,
one containing the identity, and one corresponding to the
equivalence $-\id_\Sp \from \Sp \to \Sp$, which is
defined by $\Sp$-induction as the function such that
$-\id_\Sp(\north) \jdeq \south$, and $-\id_\Sp(\south) \jdeq \north$ and
$\ap{-\id_\Sp}(\mrd(x)) = \inv{(\mrd(x))}$ for all $x\from\Sc$.

\begin{restatable}{lem}{lemminusidequivalence}
  \label{lem:minus-id-equivalence}
  The function $-\id_\Sp$ is self-inverse and thus an equivalence.
\end{restatable}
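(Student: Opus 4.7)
The plan is to mirror the proof that $-\id_{\Sc}$ is self-inverse that was sketched in Section~\ref{sec:circle-case} for the circle. By function extensionality, it suffices to construct a term of type $\prod_{y : \Sp} (-\id_{\Sp} \circ -\id_{\Sp})(y) = y$, and this I would do by $\Sp$-induction, which, since $\Sp \defequi \susp \Sc$, amounts to giving data at $\north$, at $\south$, and a dependent path for each $x : \Sc$ over $\mrd(x)$.

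At $\north$, the composite $(-\id_{\Sp} \circ -\id_{\Sp})(\north)$ computes judgmentally to $-\id_{\Sp}(\south) \jdeq \north$; symmetrically at $\south$ it becomes $\south$. So both poles are handled by $\refl{\north}$ and $\refl{\south}$. What remains is, for each $x : \Sc$, a dependent path $\pathover{\refl{\north}}{T}{\mrd(x)}{\refl{\south}}$ in the type family $T(y) \defequi (-\id_{\Sp} \circ -\id_{\Sp})(y) = y$. By the standard characterisation of dependent paths in a family of this shape, this reduces to the equation
\begin{equation*}
  \refl{\south} \;=\; \inv{\ap{-\id_{\Sp} \circ -\id_{\Sp}}(\mrd(x))} \cdot \refl{\north} \cdot \mrd(x).
\end{equation*}

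To finish, I would compute $\ap{-\id_{\Sp} \circ -\id_{\Sp}}(\mrd(x))$ using functoriality of $\ap$ on compositions together with the suspension computation rule: first $\ap{-\id_{\Sp}}(\mrd(x)) = \inv{\mrd(x)}$ by definition, and then applying $-\id_{\Sp}$ once more, $\ap{-\id_{\Sp}}(\inv{\mrd(x)}) = \inv{\ap{-\id_{\Sp}}(\mrd(x))} = \inv{\inv{\mrd(x)}} = \mrd(x)$. Substituting, the required equation becomes $\refl{\south} = \inv{\mrd(x)} \cdot \mrd(x)$, which is immediate path algebra.

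There is no real obstacle: the argument is a direct adaptation of the circle case already done, with the only mildly delicate step being the dependent-path calculation along the meridian, and this fits exactly the pattern already used for $-\id_{\Sc}$. Once the homotopy is constructed, the equality $-\id_{\Sp} \circ -\id_{\Sp} = \id_{\Sp}$ follows, showing that $-\id_{\Sp}$ is its own inverse and therefore an equivalence.
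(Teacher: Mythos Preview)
Your proof is correct and follows essentially the same approach as the paper's: induction on the suspension, taking $\refl{}$ at both poles, reducing the meridian case via the characterization of dependent paths in an identity-type family to an equation that collapses once one computes $\ap{-\id\circ-\id}(\mrd(x)) = \mrd(x)$. The only differences are cosmetic: the paper states and proves the result for an arbitrary suspension $\susp X$ rather than just $\Sp$, and orients the type family as $T(z)\defequi\bigl(z=(-\id\circ-\id)(z)\bigr)$ rather than your $T(y)\defequi\bigl((-\id\circ-\id)(y)=y\bigr)$, which merely swaps which side the inverse lands on in the final path-algebra step.
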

\begin{proof}[Sketch of proof]
  More generally, the same holds for the reflection $-\id_{\susp X} : \susp X\to\susp X$
  on any suspension, and an element of the type
  $\prod_{z:\susp X}\bigl(z = (-\id_{\susp X} \circ -\id_{\susp X})(z)\bigr)$
  is easily constructed by induction.
\end{proof}

The plan now is as follows:
\begin{itemize}
\item First, we give a direct proof that $\id_\Sp$ and $-\id_\Sp$ are not in
  the same connected component;
\item then, we give two definitions of the degree of a self-map $\Sp \to \Sp$,
  from which it follows that every self-equivalence is
  either in the connected component of $\id_\Sp$ or in the connected
  component of $-\id_\Sp$;
\item finally, we prove that the connected components of $\id_\Sp$
  and $-\id_\Sp$ are equivalent to each other.
\end{itemize}
Notice that the last step is less ambitious than in the case of $\Sc$,
where the two connected components were proven equivalent to each
other but also each equivalent to $\Sc$ itself.
We shall see in~\cref{sec:structure-components} that the connected
components of $\id_\Sp$ and $-\id_\Sp$ are not equivalent to $\Sp$
itself.
And indeed, the proof in the case of $\Sc$ relied heavily on two
facts: $\Sc$ is $1$-truncated and $\Sc$ is the classifying type of an
abelian group. In other words, the homotopy structure of $\Sc$ is very
well understood. This is not the case for $\Sp$: for example, it is
certainly not $2$-truncated (\cite{brunerie:thesis}), and is expected
to be provably not $n$-truncated for any $n$.

The main tool for this section is the Hopf family, as defined by Brunerie
in~\cite{brunerie:thesis}, to get an analogue in HoTT of the Hopf
fibration in topology. We define, uniformly in $x\from\Sc$, the
function $\iota_x\from \Sc\to\Sc$ by $\Sc$-induction
(giving the usual H-space structure on $\Sc$), putting
$\iota_x(\base) \jdeq x$ and $\ap{\iota_x}(\Sloop) = f_x(\Sloop)$.
Here, $f\from \prod_{x\from\Sc} (\base=\base) \weq (x=x)$ is the
dependent function defined in~\eqref{eq:S1-loopspace-Z-at-each-x}.
Clearly, $\iota_{\base}=\id_{\Sc}$ and hence, since $\Sc$ is connected,
every $\iota_x$ is merely equal to $\id_{\Sc}$ and thus an equivalence.
Recalling the transparency of univalence,
we view $\iota_x\from \Sc=\Sc$ as a path.
Note also that $\iota_x$ is the element
of $\conncomp{(\Sc=\Sc)}{\id_\Sc}$ that corresponds to $x:\Sc$
under the evaluation equivalence $\conncomp{(\Sc=\Sc)}{\id_\Sc} \weq \Sc$
exhibited in~\cref{sec:circle-case}. Now define the type family
$\hopf\from \Sp \to \UU$ by $\Sp$-induction as the family:
\begin{displaymath}
  \hopf(\north) \jdeq \Sc,\quad
  \hopf(\south) \jdeq \Sc,\quad\text{and}\quad
  \ap\hopf (\mrd(x)) = \iota_x\ \text{for all}\ x:\Sc
\end{displaymath}

Following Brunerie's exposition, we consider the map
\begin{equation}\label{eq:tau}
  \tau: \loopspace\null\Sp \ptdto \Sc, \, p \mapsto \ap\hopf(p)(\base),
\text{ pointed by $\refl{\base}:\tau(\refl\north)=\base$}.
\end{equation}
This map is the key to getting the second homotopy group of
the sphere (cf.\ \cite[Sec.~8.4 and 8.5]{HoTT}). For now, recall from
\cref{ex:loop-sus-wild-functors} that there is a pointed map
\begin{displaymath}
  \eta_\Sc : \Sc \ptdto \loopspace\null \Sp,\quad
  x\mapsto \inv{\mrd(\base)}\cdot \mrd(x)
\end{displaymath}
where the pointing path $\eta_0 : \eta_\Sc(\base) = \refl N$ is given by path algebra.
\begin{restatable}{lem}{lemtauretractionunitcircle}
  \label{lem:tau-retraction-unit-circle}
  The map $\tau$ is a retraction of $\eta_\Sc$, meaning that there is an element of $\tau\circ\eta_\Sc
  = \id_\Sc$ as pointed functions.
\end{restatable}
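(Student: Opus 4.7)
The plan is to construct the required pointed equality by $\Sc$-induction on $x$, while tracking the pointing data. The heart of the argument is the unfolding
\begin{displaymath}
\tau(\eta_\Sc(x)) = \ap\hopf\bigl(\inv{\mrd(\base)}\cdot\mrd(x)\bigr)(\base) = \bigl(\inv{\iota_\base}\cdot\iota_x\bigr)(\base) = \iota_\base^{-1}(x),
\end{displaymath}
which uses functoriality of $\ap\hopf$ on concatenation and inverses, the defining equation $\ap\hopf(\mrd(y)) = \iota_y$, and the definitional equality $\iota_x(\base)\jdeq x$. To conclude $\iota_\base^{-1}(x) = x$, I would first establish the auxiliary fact $\iota_\base = \id_\Sc$ by an easy $\Sc$-induction, using $\iota_\base(\base)\jdeq\base$ and $\ap{\iota_\base}(\Sloop) = f_\base(\Sloop) \equiv \Sloop$ from the definition of $f$ in~\eqref{eq:S1-loopspace-Z-at-each-x}.

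Concretely, the $\Sc$-induction on $x$ proceeds as follows. For the base case, $\eta_\Sc(\base)\jdeq \inv{\mrd(\base)}\cdot\mrd(\base)$ is identified with $\refl{\north}$ via the pointing path $\eta_0$, and then $\tau(\refl{\north}) = \ap\hopf(\refl{\north})(\base) \jdeq \base$. For the loop case, the required dependent path above $\Sloop$ in the family $x \mapsto \tau(\eta_\Sc(x)) = x$ reduces, by the standard characterization of dependent paths in an identity type, to the $2$-path equality $\ap{\tau\circ\eta_\Sc}(\Sloop) = \Sloop$ in $\loopspace\null\Sc$; the expansion above, together with $\iota_\base = \id_\Sc$, shows this holds.

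Pointing compatibility then follows because the composite pointing path of $\tau\circ\eta_\Sc$ is assembled from $\eta_0$ and the pointing path $\refl{\base}$ of $\tau$, which together match the base-case identification produced above. The main obstacle is the path-algebraic bookkeeping in the loop case: one must carefully commute $\ap{\eta_\Sc}$ (which acts by left-whiskering $\ap\mrd(\Sloop)$ by $\inv{\mrd(\base)}$) through $\ap\tau = \ev_{\base} \circ \ap{\ap\hopf}$, and verify that the resulting $\inv{\iota_\base}$-whiskering cancels upon evaluation at $\base$. Once these reductions are performed, the remaining calculation is routine path algebra and the proof closes.
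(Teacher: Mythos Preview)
Your computation of the underlying homotopy is exactly the paper's: both unfold $\tau(\eta_\Sc(x))$ to $(\iota_\base^{-1}\circ\iota_x)(\base)$ and finish via $\iota_x(\base)\jdeq x$ together with $\iota_\base = \id_\Sc$. The difference lies in the pointing coherence. You verify it directly by $\Sc$-induction, choosing $H(\base)$ to be $\ap\tau(\eta_0)$; the paper instead observes that $\tau$ factors through an \emph{adjoint} equivalence $\bar\tau : \Trunc{\loopspace\null\Sp}_1 \ptdweq \Sc$ with inverse $\trunc{\blank}_1\circ\eta_\Sc$, and notes that the \emph{other} round-trip identification (constructed via encode--decode) is the one easily seen to be pointed; adjointness then forces the retraction $\tau\circ\eta_\Sc = \id_\Sc$ to be pointed as well. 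One small imprecision in your version: the loop case does not reduce to $\ap{\tau\circ\eta_\Sc}(\Sloop) = \Sloop$ in $\loopspace\null\Sc$, but rather to $\Sloop\cdot H(\base) = H(\base)\cdot\ap{\tau\circ\eta_\Sc}(\Sloop)$, since your chosen $H(\base)=\ap\tau(\eta_0)$ is not definitionally $\refl{}$. This still closes using the unfolding together with commutativity of $\loopspace\null\Sc$ (equivalently, homogeneity of $\Sc$), but that step should be stated. Your route is more hands-on; the paper's adjoint-equivalence route trades this path algebra for an appeal to encode--decode.
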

We remark that this is also an instance of a general fact about left-invertible
H-spaces~\cite[Prop.~2.19]{BCFR}.

The following is proved by circle induction.
\begin{restatable}{lem}{leminvetascequalsetascminusid}
  \label{lemma:inv-eta-Sc-equals-eta-Sc-minusid}
  There is an element of the type $\inv{\eta_\Sc(\blank)} = \eta_\Sc
  \circ {-\id_\Sc}$.
\end{restatable}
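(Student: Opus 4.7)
The plan is to prove this equality of maps $\Sc\to\loopspace\null\Sp$ by function extensionality followed by $\Sc$-induction. Unfolding $\eta_\Sc(y)=\inv{\mrd(\base)}\cdot\mrd(y)$, the pointwise claim in $x:\Sc$ is
\[
\inv{\mrd(x)}\cdot\mrd(\base) \;=\; \inv{\mrd(\base)}\cdot\mrd(-\id_\Sc(x)),
\]
and we denote this type family by $T(x)$.

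For the base case $x=\base$, the judgmental equality $-\id_\Sc(\base)\jdeq\base$ collapses both sides to $\inv{\mrd(\base)}\cdot\mrd(\base)$, so reflexivity inhabits $T(\base)$. For the loop case, we must check that transporting this reflexivity witness along $\Sloop$ in $T$ yields itself. Writing $\alpha\defequi\ap\mrd(\Sloop)$, the transport lemma for path-valued type families reduces the task to an identity built from $\alpha$ and $\inv\alpha$. The key input is the defining equation $\ap{-\id_\Sc}(\Sloop)=\inv\Sloop$, which by functoriality of $\ap$ yields $\ap{\mrd\circ-\id_\Sc}(\Sloop)=\inv\alpha$; combined with the evident computation of $\ap{\lambda x.\inv{\mrd(x)}}(\Sloop)$ (the inversion analogue of $\alpha$), the required coherence reduces to pure path algebra in $\loopspace\null\Sp$.

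The main obstacle is the bookkeeping of whiskerings and inversions in the loop case. Conceptually the lemma expresses that inverting the loop $\eta_\Sc(x)$ is the same as applying $\eta_\Sc$ to the reflected point $-\id_\Sc(x)$, which at the level of $\Sloop$ is precisely the statement $\ap{-\id_\Sc}(\Sloop)=\inv\Sloop$. If the equality is to be read as an equality of pointed maps, the pointing coherence at $\base$ is immediate from the pointings of $\eta_\Sc$ and $-\id_\Sc$ together with the fact that the inverse of a reflexivity path is again reflexivity.
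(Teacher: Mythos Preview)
Your approach is correct and essentially matches the paper's: both use function extensionality and circle induction, with the loop case reducing to a 2-path coherence that is discharged by path induction after freeing an endpoint. The paper organizes things slightly differently---it does not unfold $\eta_\Sc$ before inducting (so its base case $\kappa_\base$ is the nontrivial path-algebra witness $\inv{(\inv{\mrd(\base)}\cdot\mrd(\base))}=\inv{\mrd(\base)}\cdot\mrd(\base)$ rather than reflexivity), and it makes your ``pure path algebra'' step explicit by generalizing $\Sloop$ to an arbitrary $p:\base=x$ and path-inducting on $p$, which is exactly the standard move needed to justify your hand-wave.
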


\begin{lem}
  \label{lemma:S2-id-neq-minusid}%
  The proposition $\id_\Sp \neq -\id_\Sp$ holds.
\end{lem}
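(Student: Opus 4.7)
The plan is to reduce the statement to the analogous non-identity $\id_\Sc \neq -\id_\Sc$ from \eqref{eq:id-neq-minusid-Sc}. Assume $h : \id_\Sp = -\id_\Sp$; I aim to derive $\emptytype$. The key preliminary is that, since $\Sp$ is simply connected, the type $(\north = \south)$ is connected, so there merely exists an identification $h_\north = \mrd(\base)$. Because $\emptytype$ is a proposition, propositional-truncation elimination lets me proceed under the simplifying assumption $h_\north = \mrd(\base)$.

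The next step exploits naturality of $h$ at a loop $p : \loopspace\null\Sp$ based at $\north$, which reads $p = \inv{h_\north}\cdot\ap{-\id_\Sp}(p)\cdot h_\north$ after rearranging the canonical square. I would instantiate this with $p \defequi \eta_\Sc(x)$ for arbitrary $x : \Sc$. Using the defining equation $\ap{-\id_\Sp}(\mrd(x)) = \inv{\mrd(x)}$ and routine path algebra one computes $\ap{-\id_\Sp}(\eta_\Sc(x)) = \mrd(\base)\cdot\inv{\mrd(x)}$, so with $h_\north = \mrd(\base)$ the naturality identity collapses to $\eta_\Sc(x) = \inv{\mrd(x)}\cdot\mrd(\base)$, which is precisely $\inv{\eta_\Sc(x)}$. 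By \cref{lemma:inv-eta-Sc-equals-eta-Sc-minusid} this in turn equals $\eta_\Sc(-\id_\Sc(x))$, giving $\eta_\Sc(x) = \eta_\Sc(-\id_\Sc(x))$ for every $x : \Sc$.

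To finish, I would apply $\tau$ to both sides and appeal to the retraction $\tau\circ\eta_\Sc = \id_\Sc$ from \cref{lem:tau-retraction-unit-circle}, obtaining $x = -\id_\Sc(x)$ for every $x : \Sc$. Function extensionality then upgrades this pointwise identification to a path $\id_\Sc = -\id_\Sc$ in $\Sc \to \Sc$, contradicting \eqref{eq:id-neq-minusid-Sc}.

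The main obstacle I expect is the very first step: carefully justifying the reduction to the case $h_\north = \mrd(\base)$. This leans on simple-connectedness of $\Sp$ (to exhibit a mere identification between any two paths $\north = \south$) combined with the fact that the target $\emptytype$ is propositional. After that reduction, the rest is a naturality computation that threads together the Hopf-based retraction of \cref{lem:tau-retraction-unit-circle} and the inversion lemma \cref{lemma:inv-eta-Sc-equals-eta-Sc-minusid}, with no further homotopy-theoretic input required.
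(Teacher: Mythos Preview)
Your proposal is correct and follows essentially the same route as the paper: assume a homotopy, use simple-connectedness of $\Sp$ to normalize its component at $\north$, extract the identity $\eta_\Sc = \inv{\eta_\Sc(\blank)}$, and then combine \cref{lemma:inv-eta-Sc-equals-eta-Sc-minusid} with the retraction \cref{lem:tau-retraction-unit-circle} to contradict~\eqref{eq:id-neq-minusid-Sc}. The only cosmetic difference is that you read off the key identity from naturality of $h$ at the loop $\eta_\Sc(x)$ (needing only $h_\north = \mrd(\base)$), whereas the paper passes through the pathover of $h$ over $\mrd(x)$ and normalizes both $p(\north)$ and $p(\south)$; your packaging is marginally slicker but the content is the same.
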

\begin{proof}
  Suppose $p:\id_\Sp = -\id_\Sp$ and derive a contradiction. Through
  function extensionality, it produces paths
  \begin{displaymath}
    p(\north) \from \north = \south
    \quad\text{and}\quad
    p(\south) \from \south = \north,
  \end{displaymath}
  and for all $x\from\Sc$ a path over,
  $\ap p (\mrd(x))\from \pathover {p(\north)} T {\mrd(x)} {p(\south)}$, where
  $T\from \Sp \to \UU$ is the type family
  $T(a)\defequi (\id_\Sp(a) = -\id_\Sp(a))$. Because $\Sp$ is simply
  connected and we are targeting the empty type $\varnothing$, which
  is a proposition, we might as well assume paths of types $p(\north)=\mrd(\base)$
  and $p(\south) = \inv{\mrd(\base)}$. Transporting $\ap p(\mrd(x))$ over
  these two paths, we get a path of type
  $\pathover {\mrd(\base)} T {\mrd(x)} {\inv{\mrd(\base)}}$.
  Transport over $\mrd(x)$ in the type family $T$ is
  the function
  $q \mapsto \inv{\mrd(x)} q \inv{\mrd(x)}$, so we get a path
  \begin{displaymath}
    \inv{\mrd(x)} \mrd(\base) \inv{\mrd(x)} = \inv{\mrd(\base)}
    \quad
    \text{for all }x\from\Sc
  \end{displaymath}
  Equivalently, this is a path $\inv{\eta_\Sc(\blank)} = \eta_{\Sc}$. Compose
  with the path from \cref{lemma:inv-eta-Sc-equals-eta-Sc-minusid} to get a
  path $\eta_\Sc\circ {-\id_\Sc} = \eta_\Sc$. Using
  \cref{lem:tau-retraction-unit-circle}, we conclude that $-\id_\Sc =
  \tau\circ\eta_\Sc\circ {-\id_\Sc} = \tau\circ\eta_\Sc = \id_\Sc$, which we
  already know to be absurd.
\end{proof}

This proves that $\id_\Sp$ and $-\id_\Sp$ belong
to different connected components. We proceed to the second step
of the road map: every equivalence in $\Sp \simeq \Sp$ is
either in the component of $\id_\Sp$ or in the component of $-\id_\Sp$.
To this end we construct two, ultimately equal, degree functions
$d,d':(\Sp\to\Sp)\to\ZZ$:
\begin{enumerate}
\item The first, $d$, is directly seen to be a
    morphism of wild monoids,
    where the operations are given by composition and multiplication, respectively.
    In particular, it maps equivalences to
    invertible elements in $\ZZ$, that is $1$ or $-1$.
\item The second, $d'$, is more easily seen to be `weakly injective', i.e., $d(f)=d(g)$
  implies $\Trunc{f=g}$. From~\cref{lemma:S2-id-neq-minusid} we then get
  that the degree of $-\id_\Sp$ is $-1$, and the degree
  induces equivalences $\Trunc{\Sp\to\Sp}_0 \weq \ZZ$ and
  $\Trunc{\Sp=\Sp}_0 \weq \{\pm1\}$. 
\end{enumerate}

To define the degree, we recall
from \cite[Cor.~8.5.2]{HoTT} that the second homotopy group of $\Sp$ is $\ZZ$.
Indeed, the second homotopy group $\hgr 2(\Sp)$ is defined as the
set-truncation $\setTrunc{\loopspace 2 \Sp}$, and \cite[Sect.~8.4 and 8.5]{HoTT} proves that the map $\loopspace\null\tau :
\loopspace 2 \Sp \to \loopspace \null \Sc$ induces an isomorphism
$\setTrunc{\loopspace\null\tau}: \hgr 2(\Sp) \to \hgr 1(\Sc)$ on the set-truncations,
with $\setTrunc{\loopspace\null\eta}$ as the inverse.
Indeed, by the Freudenthal suspension theorem~\cite[Thm.~8.6.4]{HoTT},
$\eta$ is $0$-connected, but it has a retraction, so it (and $\tau$) are $1$-equivalences.
(A $1$-equivalence is a map that induces an equivalence on $1$-truncations.)
Composing with the isomorphism $\hgr 1(\Sc) \weq \ZZ$ \cite[Cor.~8.1.11]{HoTT}, we
get a group isomorphism $\zeta: \hgr 2(\Sp) \to \ZZ$. Our first definition
of the degree of a
pointed map is then as the image of $1$ through the induced homomorphism
on second homotopy groups, transported back and forth by $\zeta$:
\begin{displaymath}
  d(f) \defequi \left(\zeta\circ\hgr 2(f)\circ \inv\zeta\right)(1):\ZZ \qquad
  \text{for any}\ f:\Sp\ptdto\Sp.
\end{displaymath}

Note that the degree of a map is \emph{a~priori} defined only when the map is pointed.
However, the following lemma shows that the degree is independent of the choice of 
such a path.
\begin{restatable}{lem}{lemmaforgetpointsconn} \label{lem:forget-points-conn}
  \label{lem:deg-independent-path}
    Let $X$ and $Y$ be types, and $x : X$ and $y : Y$ be points. If $Y$ is $n$-connected, 
    then the map which forgets the pointing paths,
    $\pr_1:  ((X,x) \ptdto (Y,y)) \to (X \to Y)$, 
    is $(n{-}1)$-connected.
\end{restatable}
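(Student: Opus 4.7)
The proof should reduce the connectivity claim about $\pr_1$ to a statement about path spaces of $Y$. Recall that $\pr_1$ being $(n{-}1)$-connected means that each of its fibers is an $(n{-}1)$-connected type, so the first task is to understand the fiber at an arbitrary $f : X \to Y$.

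By unfolding definitions, the fiber at $f$ is
\begin{displaymath}
  \fib_{\pr_1}(f) \;\jdeq\; \sum_{g : X \to Y}\; \sum_{p : g(x) = y}\; (g = f).
\end{displaymath}
I would then swap the inner two $\Sigma$'s and contract the based path space $\sum_{g : X\to Y}(g = f)$, which is contractible with center $(f, \refl f)$. This yields an equivalence
\begin{displaymath}
  \fib_{\pr_1}(f) \;\weq\; f(x) = y.
\end{displaymath}
The only manipulation here is a rearrangement of $\Sigma$'s and a standard contraction; no serious combinatorics.

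The final step invokes the standard fact that path spaces of an $n$-connected type are $(n{-}1)$-connected (this is essentially the characterisation of $n$-connected types via truncation, cf.\ the HoTT book, Ch.~7): since $Y$ is $n$-connected, the type $f(x) = y$ is $(n{-}1)$-connected. Combining with the equivalence above, every fiber of $\pr_1$ is $(n{-}1)$-connected, which by definition is exactly the statement that $\pr_1$ is $(n{-}1)$-connected.

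\textbf{Main obstacle.} There is no deep obstacle; the only point where one has to be slightly careful is the order of the $\Sigma$'s and the contraction step, to ensure that the computed fiber really is a based path space of the form $f(x) = y$ rather than some variant with extra transported data. Once that identification is made cleanly, the rest follows from the general connectedness/truncation machinery recorded earlier in the paper and in \cite{HoTT}.
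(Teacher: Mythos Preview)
Your proposal is correct and follows essentially the same route as the paper: identify the fiber of $\pr_1$ over $f$ with the path type $f(x)=y$, then use that path types in an $n$-connected type are $(n{-}1)$-connected (the paper cites \cite[Thm.~7.3.12]{HoTT} for this last step). Your $\Sigma$-rearrangement is simply a more explicit version of the paper's one-line fiber identification.
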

\begin{proof}
  Let $f: X\to Y$. The fiber ${\inv \pr_1 (f)}$ is equivalent to $f(x)=y$.
  Now apply \cite[Thm.~7.3.12]{HoTT} and use the assumption. 
\end{proof}
In particular, if we have two paths $f_0,f_0': f(\north) = \north$
pointing the map $f: \Sp \to \Sp$,
then the proposition $d(f,f_0) = d(f,f'_0)$ holds,
since $d(f,\blank) : (f(\north) = \north) \to \ZZ$ is a map
from a connected type to a set, hence constant.

\begin{restatable}{prop}{propdegreemonoidmorphism}
  \label{prop:degree-monoid-morphism}
  The degree function $d$ is a morphism of wild monoids from $\Sp \to \Sp$
  (\cref{rem:wild-monoids})
  to the multiplicative monoid $\ZZ$.
  \label{prop:deg-monoid-morphism}
\end{restatable}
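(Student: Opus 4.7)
The plan is to verify the two wild-monoid-morphism conditions for $d$, viewed as a map $(\Sp \to \Sp) \to \ZZ$: preservation of the unit and of composition, each up to a path. Because the target (the multiplicative monoid on $\ZZ$) has a set as underlying type, any higher coherences required by the notion of wild monoid morphism collapse automatically, so only these two propositional equalities need be supplied.

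First I would justify extending $d$ from pointed to unpointed maps. Since $\Sp$ is $1$-connected, \cref{lem:deg-independent-path} makes the forgetful map $(\Sp \ptdto \Sp) \to (\Sp \to \Sp)$ $0$-connected, so every fibre---the type of pointings $f(\north) = \north$ of a given $f$---is connected. A map from a connected type into a set is constant, so $d(f, f_0)$ is independent of $f_0$ and descends to an honest function $(\Sp \to \Sp) \to \ZZ$. In particular, for a composite $f \circ g$ of unpointed maps one may pick any convenient pointing (for instance, that obtained by composing chosen pointings of $f$ and $g$) without affecting the value of the degree.

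Preservation of the unit is immediate: wild functoriality of $\loopspace 2$, and hence of $\hgr 2$, gives a path $\hgr 2(\id_\Sp) = \id_{\hgr 2(\Sp)}$, so $d(\id_\Sp) = (\zeta \circ \inv\zeta)(1) = 1$. For preservation of composition, take pointed $f, g : \Sp \ptdto \Sp$; wild functoriality of $\hgr 2$ supplies a path $\hgr 2(f \circ g) = \hgr 2(f) \circ \hgr 2(g)$, whence
\[
  d(f \circ g) = \zeta(\hgr 2(f)(\hgr 2(g)(\inv\zeta(1)))) = \zeta(\hgr 2(f)(\inv\zeta(d(g)))).
\]
Set $h_f \defequi \zeta \circ \hgr 2(f) \circ \inv\zeta : \ZZ \to \ZZ$: as a composite of group homomorphisms this is itself a homomorphism, with $h_f(1) = d(f)$. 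Every group homomorphism $\ZZ \to \ZZ$ is multiplication by its value at $1$, so $h_f(n) = n \cdot d(f)$ for all $n : \ZZ$. Evaluating at $n = d(g)$ then yields $d(f \circ g) = d(g) \cdot d(f) = d(f) \cdot d(g)$.

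The main delicacy is bookkeeping in the wild setting: the functorial laws of $\hgr 2$ are only paths (not definitional equalities), and composing unpointed maps forces one to carry along pointings whose particular choices must be shown to be immaterial. These concerns are handled uniformly by the observation that everything takes values in the set $\ZZ$, so parallel paths coincide automatically and the propositional equalities above suffice.
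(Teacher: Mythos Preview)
Your proposal is correct and follows essentially the same approach as the paper: use \cref{lem:deg-independent-path} to reduce to pointed maps, invoke functoriality of $\hgr 2$ to get $d(\id_\Sp)=1$ and $\hgr 2(f\circ g)=\hgr 2(f)\circ\hgr 2(g)$, and then observe that $\zeta\circ\hgr 2(f)\circ\inv\zeta:\ZZ\to\ZZ$ is a group homomorphism, hence multiplication by its value at $1$. Your explicit remark that the set-valued target collapses all coherence obligations is a nice touch that the paper leaves implicit.
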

\begin{proof}[Sketch of proof]
  Clearly, $d(\id_\Sp)=1$.
  From~\cref{lem:deg-independent-path} we may assume $f,g:\Sp\to\Sp$ are pointed.
  By functoriality of the second fundamental group we have
  $\hgr 2(g\circ f) = \hgr 2(g)\circ\hgr 2(f)$,
  from which we readily conclude $d(g\circ f)=d(g)d(f)$.
\end{proof}

\begin{rem}
  As in~\cite{Buchholtz2018CellularCI}, one can also put a group structure
  on $\Sp \ptdto \Sp$ such that the degree function becomes a
  group morphism onto $\ZZ$ with its additive structure.
  Together with \cref{prop:degree-monoid-morphism},
  the degree function on pointed maps then becomes a wild ring morphism.
  In general, pointed self-maps $\susp X \ptdto \susp X$ only form a
  wild \emph{near-ring}.
\end{rem}

\begin{restatable}{cor}{cordegreeequivalences}
  \label{cor:degree-equivalences}
  The degree of a self-equivalence of the sphere $\Sp$ is either $1$ or $-1$.
\end{restatable}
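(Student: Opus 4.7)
The plan is to exploit the fact that $d$ is a monoid morphism (from \cref{prop:degree-monoid-morphism}) together with the elementary fact that the only units in the multiplicative monoid of $\ZZ$ are $1$ and $-1$.

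Concretely, suppose $f : \Sp \weq \Sp$ is a self-equivalence, with inverse $f^{-1}$. We may choose pointing paths (the particular choice does not matter by \cref{lem:deg-independent-path}, so $d$ is well-defined on unpointed self-maps). Since $f \circ f^{-1} = \id_\Sp$ and $f^{-1} \circ f = \id_\Sp$, applying $d$ and using that $d$ is a morphism of wild monoids gives
\begin{equation*}
  d(f) \cdot d(f^{-1}) = d(f \circ f^{-1}) = d(\id_\Sp) = 1,
\end{equation*}
so $d(f)$ is a unit in the multiplicative monoid $\ZZ$.

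The remaining step is to observe that the invertible elements of $\ZZ$ for multiplication are exactly $1$ and $-1$; this is a decidable property on the set $\ZZ$ (using the decidable equality on $\ZZ$ referenced in the introduction), and follows from the usual argument that if $mn = 1$ in $\ZZ$ then $|m| = |n| = 1$. Hence $d(f) \in \{\pm 1\}$. I expect no real obstacle here: the content is packaged entirely in \cref{prop:degree-monoid-morphism}, and the corollary is a one-line application of functoriality to $f \circ f^{-1} = \id_\Sp$.
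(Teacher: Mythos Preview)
Your proposal is correct and essentially identical to the paper's own proof: both apply the monoid morphism property of $d$ to $f\circ f^{-1}=\id_\Sp$ to conclude $d(f)d(f^{-1})=1$, then observe that the only multiplicative units in $\ZZ$ are $\pm1$ (the paper's appendix version is slightly more explicit about how the inverse gets pointed, but like you it invokes \cref{lem:deg-independent-path} to make the pointing irrelevant).
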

\begin{proof}[Sketch of proof]
  For a self-equivalence $f:\Sp\weq\Sp$
  we have $1=d(\id_\Sp)=d(f\circ f^{-1}) = d(f)d(f^{-1})$,
  so $d(f)$ and $d(f^{-1})$ are multiplicatively inverse integers, hence $\pm1$.
\end{proof}

To prove that the degree map is an injection on connected components, we will define
another map $\bar d: (\Sp \ptdto \Sp) \to \ZZ$, which is easily proven an
injection on connected components, and then we will prove that $d = \bar d$.

Recall from \cref{ex:loop-sus-wild-functors} that for each pointed type $A$,
there is a map $\eta_A : A \ptdto \loopspace\null \susp A$. These maps are such
that the following function is an equivalence for each $A,B:\UUptd$ (see
\cref{def:wild-adj}):
\begin{equation}
  \Phi_{A,B} \defequi \loopspace \null \blank \circ \eta_A :  (\susp A \ptdto B) \weq (A \ptdto \loopspace \null B)
  \label{eq:triangle-identity-unit}
\end{equation}
There is now an equivalence $\gamma:(\Sp \ptdto \Sp) \weq \loopspace 2 \Sp$ defined as
the composition:
\begin{displaymath}
  (\Sp \ptdto \Sp) \stackrel{\Phi_{\Sc,\Sp}}\weq
  (\Sc \ptdto\loopspace\null \Sp) \stackrel{\Phi_{\bn 2,\loopspace\null\Sp}}\weq
  (\bn 2\ptdto\loopspace 2 \Sp) \stackrel{\ev_1}\weq
  \loopspace 2 \Sp,
\end{displaymath}
using that $\Sc \weq \susp{\bn 2}$, and with $\ev_1$ evaluating its argument
at the non-base point of $\bn 2$.
We now define $\bar d$ as the composition
\begin{displaymath}
  (\Sp \ptdto \Sp) \stackrel \gamma \to \loopspace 2 \Sp \stackrel {\settrunc\blank} \to
  \hgr 2(\Sp) \stackrel \zeta \to \ZZ.
\end{displaymath}


We now show that $d$ and $\bar d$ coincide, allowing us to compute the degree
through $\bar d$ when necessary.
\begin{prop}
  The equation $d=\bar d$ holds.
  \label{prop:alternative-description-degree}
\end{prop}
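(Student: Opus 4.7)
The plan is to unfold both $d$ and $\bar d$ into explicit elements of $\loopspace 2 \Sp$ (up to set-truncation) and observe that they coincide. Both definitions end in $\zeta$ applied to something of the form $\setTrunc{\loopspace 2 f(\alpha)}$ for the same distinguished generator $\alpha\in\loopspace 2 \Sp$, so the equality will reduce to a bookkeeping calculation using the adjunction structure $\susp \leftadjto \loopspace\null$ together with \cref{lem:tau-retraction-unit-circle}.

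First, I unfold $\bar d(f) = \zeta(\setTrunc{\gamma(f)})$. Using the formula $\Phi_{A,B}(h) = \loopspace\null h \circ \eta_A$ from~\eqref{eq:triangle-identity-unit}, the definition of $\gamma$ gives $\gamma(f) = \bigl[\loopspace\null(\loopspace\null f \circ \eta_\Sc) \circ \eta_{\bn 2}\bigr](1)$. Wild functoriality of $\loopspace\null$ (\cref{ex:loop-sus-wild-functors}) rewrites the first factor as $\loopspace 2 f \circ \loopspace\null \eta_\Sc$, so $\gamma(f) = \loopspace 2 f\bigl(\loopspace\null \eta_\Sc(\eta_{\bn 2}(1))\bigr)$. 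Finally, under the equivalence $\Sc \weq \susp \bn 2$ sending $\base$ to $\north$, the element $\eta_{\bn 2}(1) = \inv{\mrd(0)} \cdot \mrd(1)$ of $\loopspace\null \susp\bn 2$ is identified with $\Sloop$, giving $\gamma(f) = \loopspace 2 f(\loopspace\null \eta_\Sc(\Sloop))$.

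Next, I unfold $d(f) = \zeta(\hgr 2(f)(\zeta^{-1}(1)))$. By construction, $\zeta$ factors as $[\hgr 1(\Sc)\weq\ZZ] \circ \setTrunc{\loopspace\null \tau}$, and the passage preceding the definition of $d$ records that $\setTrunc{\loopspace\null \eta_\Sc}$ is the inverse of $\setTrunc{\loopspace\null \tau}$ on set-truncations (this is where \cref{lem:tau-retraction-unit-circle}, combined with the fact that $\tau$ is a $1$-equivalence, is used). Tracing $1 \in \ZZ$ backwards therefore gives $\zeta^{-1}(1) = \setTrunc{\loopspace\null \eta_\Sc(\Sloop)}$. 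Applying $\hgr 2(f) = \setTrunc{\loopspace 2 f}$ yields $\setTrunc{\loopspace 2 f(\loopspace\null \eta_\Sc(\Sloop))}$, so $d(f) = \zeta(\setTrunc{\loopspace 2 f(\loopspace\null \eta_\Sc(\Sloop))})$, which by the previous paragraph equals $\bar d(f)$.

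The main obstacle is not conceptual but coherence-theoretic: each invocation of wild functoriality of $\loopspace\null$ (on the composite $\loopspace\null f \circ \eta_\Sc$, on the retraction $\tau\circ\eta_\Sc=\id_\Sc$, and on the generator identifications) must respect the pointing paths, and the identification $\eta_{\bn 2}(1)\mapsto\Sloop$ must be carried out as a pointed path. Once the wild-adjunction machinery of Appendix B provides these coherences, the rest is a routine computation inside $\loopspace 2 \Sp$, and the conclusion follows by applying $\zeta$ to both sides of the equality of set-truncated elements.
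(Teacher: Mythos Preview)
Your proposal is correct and follows essentially the same route as the paper: the paper organizes the same unfolding as a five-part diagram chase, where your uses of $\Phi_{A,B}=\loopspace\null\blank\circ\eta_A$, functoriality of $\loopspace\null$, the identification $\eta_{\bn 2}(1)=\Sloop$, and the retraction $\tau\circ\eta_\Sc=\id_\Sc$ correspond precisely to their sub-diagrams $\circled{1},\circled{3}$, $\circled{2}$, $\circled{4}$, and $\circled{5}$, respectively. The only presentational difference is that the paper names $\ell\defequi\loopspace\null(\eta_\Sc)(\Sloop)$ and verifies $\settrunc{\ell}=\zeta^{-1}(1)$ via $\loopspace\null(\tau)(\ell)=\Sloop$, whereas you go the other direction and compute $\zeta^{-1}(1)$ directly using $\setTrunc{\loopspace\null\eta_\Sc}$ as inverse to $\setTrunc{\loopspace\null\tau}$.
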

\begin{proof}
  Taking the definitions of $d$ and $\bar d$ into account,
  we have to prove $\hgr 2 (f)({\inv\zeta(1)})= \settrunc {\gamma(f)}$ for all $f$.
  Unfolding definitions,
  we have to prove that the outer diagram commutes in the following:
  \begin{equation*}
    \begin{tikzpicture}[x=3cm,y=-.92cm]
	\tikzstyle{ar} = [draw, ->]
	\tikzstyle{round} = [circle,draw,inner sep=0.1em],

	\node (A1) at (0,0) {$(\Sp \ptdto \Sp)$};
	\node (A2) at (0,1) {$(\loopspace \null \Sp \ptdto \loopspace\null\Sp)$};
	\node (A3) at (0,2) {$(\loopspace 2 \Sp \ptdto \loopspace 2 \Sp)$};	
	\node (A4) at (0,3) {$(\hgr 2(\Sp) \to \hgr 2(\Sp))$};	

	\node (B1) at (1,1.2) {$(\loopspace\null\Sc \ptdto \loopspace 2 \Sp)$};
	\node (B2) at (1,4) {$\hgr 2(\Sp)$};

	\node (C1) at (2,0) {$(\Sc \ptdto \loopspace\null\Sp)$};
	\node (C2) at (2,2) {$(\bn 2 \ptdto \loopspace 2 \Sp)$};
	\node (C3) at (2,3) {$\loopspace 2 \Sp$};

	\small 
	\path[ar] (A1) -- node[left] {$\Omega$} (A2);
	\path[ar] (A2) -- node[left] {$\Omega$} (A3);
	\path[ar] (A3) -- node[left] {$\setTrunc\blank$} (A4);
	\path[ar] (A1) -- node[above] {$\Phi_{\Sc,\Sp}$} (C1);
	\path[ar] (A2) -- node[below] {$\blank\circ\eta_{\Sc}$} (C1);
	\path[ar] (C1) -- node[below right] {$\Omega$} (B1);
	\path[ar] (C1) -- node[right] {$\Phi_{\bn 2,\loopspace\null\Sp}$} (C2);
	\path[ar] (C2) -- node[right] {$\ev_1$} (C3);
	\path[ar] (A4) -- node[below left] {$\ev_{\inv\zeta(1)}$} (B2);
	\path[ar] (C3) -- node[below right] {$\settrunc\blank$} (B2);
	\path[ar] (A3) -- node[below right] {$\blank\circ \loopspace\null(\eta_{\Sc})$} (B1);
	\path[ar] (B1) -- node[below left] {$\blank\circ\eta_{\bn 2}$} (C2);
	\path[ar] (A3) -- node[below left] {$\ev_\ell$} (C3);

	\node[round] (X1) at (0.5,0.35) {$1$};
	\node[round] (X2) at (0.35,1.45) {$2$};
	\node[round] (X3) at (1.8,1.15) {$3$};
	\node[round] (X4) at (1.3,2.2) {$4$};
	\node[round] (X5) at (.9,3.2) {$5$};

%
    \end{tikzpicture}
  \end{equation*}
  We will do that by proving that each of the small inner diagrams
  denoted $\circled{1},\dots,\circled{5}$ commute, for elementary
  reasons. Triangles $\circled{1}$ and $\circled{3}$ commute as
  instances of \eqref{eq:triangle-identity-unit}. The commutativity of
  square $\circled{2}$ simply expresses the functoriality of
  $\loopspace\null$. In order to prove that $\circled{4}$ and
  $\circled{5}$ commute, we first need to define $\ell$ in $\ev_\ell$:
  we set $\ell \defequi \loopspace\null (\eta_\Sc)(\Sloop)$.  
  Then it is almost immediate that $\circled{4}$ commutes because under the
  equivalence $\Sc = \susp {\bn 2}$, we have
  $\Sloop = \eta_{\bn 2}(1)$.  Now the commutativity of $\circled{5}$
  will follow from the functoriality of $\setTrunc\blank$ once we have
  shown $\settrunc \ell = \inv\zeta(1)$. By definition of $\zeta$,
  this is equivalent to showing that
  $\loopspace\null(\tau) (\ell) = \Sloop$ in $\loopspace\null
  \Sc$. However, we have seen in \cref{lem:tau-retraction-unit-circle} that
  $\tau$ is a retraction of $\eta_{\Sc}$. Hence it follows that:
  \begin{displaymath}
    \loopspace\null(\tau)(\ell) = \loopspace\null(\tau)(\loopspace\null(\eta_\Sc)(\Sloop)) = \Sloop.\qedhere
  \end{displaymath}
\end{proof}

\begin{cor}
  \label{prop:symm-S2-connected-components}
  \label{cor:equivalence-conn-component}
  The degree map $d:(\Sp \to \Sp) \to \ZZ$ is $0$-connected. Hence any
  self-equivalence of\/ $\Sp$ is in the connected component of
  either $\id_\Sp$ or $-\id_\Sp$, and the canonical inclusion
  \begin{displaymath}
    \conncomp{(\Sp = \Sp)}{\id_\Sp} + \conncomp{(\Sp = \Sp)}{-\id_\Sp} \to (\Sp = \Sp)
  \end{displaymath}
  is an equivalence.
\end{cor}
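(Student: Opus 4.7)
The plan is two-fold. First I would show that the unpointed degree map $d:(\Sp\to\Sp)\to\ZZ$ is $0$-connected by comparing it with the pointed degree. Then I would deduce the decomposition of $\Sp=\Sp$ from this together with \cref{cor:degree-equivalences} and \cref{lemma:S2-id-neq-minusid}.

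For the first step, consider the commutative triangle
\[
  (\Sp \ptdto \Sp)\xrightarrow{\pr_1}(\Sp \to \Sp)\xrightarrow{d}\ZZ,
\]
whose composite is the original pointed degree (using the factorisation through $\pr_1$ granted by \cref{lem:deg-independent-path}). By \cref{prop:alternative-description-degree}, this composite equals $\bar d=\zeta\circ\settrunc\blank\circ\gamma$, where both $\gamma$ and $\zeta$ are equivalences; hence the composite is $0$-connected, since the set-truncation unit $\loopspace 2 \Sp \to \setTrunc{\loopspace 2 \Sp}$ is always $0$-connected. Moreover, \cref{lem:forget-points-conn} applied with $n=1$ (which applies because $\Sp$ is $1$-connected) yields that $\pr_1$ is itself $0$-connected. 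The 2-out-of-3 property for $n$-connected maps then forces $d$ to be $0$-connected as well.

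For the second step, $0$-connectedness of $d$ means that every fibre $\inv d(k)$ is connected. Since $d(\id_\Sp)=1$ and $d(-\id_\Sp)=-1$, the fibres $\inv d(1)$ and $\inv d(-1)$ coincide, as subtypes of $\Sp\to\Sp$, with $\conncomp{(\Sp\to\Sp)}{\id_\Sp}$ and $\conncomp{(\Sp\to\Sp)}{-\id_\Sp}$ respectively, and by \cref{cor:degree-equivalences} these two fibres together contain every self-equivalence of $\Sp$. Because being an equivalence is a proposition, the map $(\Sp=\Sp)\hookrightarrow(\Sp\to\Sp)$ is an embedding, so mere equality in $\Sp\to\Sp$ lifts to mere equality in $\Sp=\Sp$; the two fibres, cut down to equivalences, therefore recover $\conncomp{(\Sp=\Sp)}{\id_\Sp}$ and $\conncomp{(\Sp=\Sp)}{-\id_\Sp}$. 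Disjointness of these two components follows from \cref{lemma:S2-id-neq-minusid}: an intersection would produce $\Trunc{\id_\Sp=-\id_\Sp}$, contradicting the proved inequality. Finally, using the decidable equality on $\ZZ$ to case-split on $d(f)\in\{\pm 1\}$, one constructs an explicit inverse to the canonical inclusion.

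The main obstacle is the careful invocation of 2-out-of-3 for $0$-connected maps, combined with the transfer of connectedness through the embedding $(\Sp=\Sp)\hookrightarrow(\Sp\to\Sp)$; both are standard facts in synthetic homotopy theory but deserve an explicit verification. Everything else reduces to the assembly of results already established in this section.
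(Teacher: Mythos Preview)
Your approach is essentially the paper's: show $d$ is $0$-connected via $d=\bar d=\zeta\circ\settrunc\blank\circ\gamma$ (\cref{prop:alternative-description-degree}), then invoke \cref{cor:degree-equivalences} and \cref{lemma:S2-id-neq-minusid}. Your explicit 2-out-of-3 argument with $\pr_1$ to pass from the pointed to the unpointed degree makes precise a step the paper's proof leaves implicit.

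The one slip is that you assert $d(-\id_\Sp)=-1$ as if already known; in fact it is a consequence, not a premise. The paper derives it at this very point: $0$-connectedness of $d$ makes $d$ injective on connected components, so from $d(\id_\Sp)=1$ and the fact that $\id_\Sp$ and $-\id_\Sp$ lie in different components (\cref{lemma:S2-id-neq-minusid}), together with $d(-\id_\Sp)\in\{\pm1\}$ (\cref{cor:degree-equivalences}), one concludes $d(-\id_\Sp)=-1$. Your argument has all these ingredients in hand, so the fix is just to reorder the sentences.
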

\begin{proof}
  The previous result shows that $d(f) = \bar d(f) \jdeq
  \zeta(\settrunc{\gamma(f)})$ with $\gamma, \zeta$ equivalences. As
  $\settrunc\blank$ is $0$-connected, so is $d$.
  From~\cref{cor:degree-equivalences} we know any self-equivalence has degree $\pm1$,
  and since $-\id_\Sp$ is in a different component than $\id_\Sp$
  by~\cref{lemma:S2-id-neq-minusid}, we get that $d(-\id_\Sp)=-1$.
\end{proof}

Next, we show that the two components
$\conncomp{(\Sp = \Sp)}{\id_\Sp}$ and
$\conncomp{(\Sp = \Sp)}{-\id_\Sp}$ are equivalent.
In fact, this has little to do with $\Sp$ itself, and one can state a more
general result.
\begin{prop} \label{prop:equiv-susp-comp}
  Let $A$ be a type with a point $a:A$ and a loop $p : a = a$. Then:
  \begin{displaymath}
    \conncomp{(a = a)}{\refl{a}} \weq \conncomp{(a = a)}{p}
  \end{displaymath}
\end{prop}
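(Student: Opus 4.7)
The plan is to exhibit an equivalence between the two components via right-multiplication by $p$ on the loop space $(a=a)$, which takes $\refl a$ to $p$. First I would define
\[
  \mu_p : (a=a) \to (a=a), \qquad \mu_p(q) \defequi q\cdot p,
\]
which is an equivalence on the entire loop space with inverse $\mu_{\inv p}(r) \defequi r\cdot \inv p$ (the usual path-algebra cancellation gives the two homotopies).

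Next I would check that $\mu_p$ restricts to the connected components in the desired way. Unfolding $\conncomp{(a=a)}{\refl a} \jdeq \sum_{q:a=a}\Trunc{\refl a = q}$, a point $(q,\alpha)$ gets sent to $(q\cdot p,\, \ap{\mu_p}_\ast(\alpha) \cdot \beta)$, where $\beta : \refl a \cdot p = p$ is the obvious path and $\ap{\mu_p}_\ast$ is the functorial action of propositional truncation on $\ap{\mu_p} : (\refl a = q) \to (\mu_p(\refl a) = \mu_p(q))$. This produces a term in $\Trunc{p = q\cdot p}$, so the image indeed lies in $\conncomp{(a=a)}{p}$. A symmetric construction using $\mu_{\inv p}$ provides the reverse restricted map.

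Finally I would verify that these restricted maps are mutually inverse. Both composites are, on the first coordinate, the identity up to the canonical path $q\cdot p\cdot\inv p = q$ (resp.\ $r\cdot\inv p\cdot p = r$), and since the second coordinates live in propositions they are determined. Equivalently, one can observe that connected components are subtypes in the sense that $(x,\alpha) = (y,\beta)$ is equivalent to $x=y$, so the restricted maps are inverses as soon as their underlying maps on the loop space are, which was established at the outset.

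No serious obstacle is expected: the only thing to be careful about is threading the truncations through $\mu_p$, which is handled by the functoriality of $\Trunc{\blank}$ applied to $\ap{\mu_p}$. This proof makes no use of the specific structure of $A$ beyond having the chosen loop $p$, which is why the statement is formulated in full generality.
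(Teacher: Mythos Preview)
Your proof is correct and follows exactly the same approach as the paper: define right-multiplication by $p$ on the loop space, observe it is an equivalence sending $\refl a$ to $p$, and conclude that it restricts to an equivalence between the two connected components. The paper's version is terser, simply noting that an equivalence of the ambient type sending one base point to the other automatically restricts to an equivalence of connected components, whereas you spell out the handling of the truncated witnesses; but the content is the same.
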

\begin{proof}
  Define $f:{(a = a)} \to (a = a)$ by mapping
  $q$ to $qp$.
  Then $f$ is an equivalence with pseudo-inverse given by mapping $r$ to $r\inv p$.
  Moreover, we have $f(\refl a) = p$.

  The equivalence $f$ then restricts to an equivalence between the connected
  component of $\refl a$ and the connected component of $p$.
\end{proof}
\begin{rem}
  Another way to state this result is that for any $\infty$-group $G$
  with an element $g$, we get an equivalence $\conncomp Ge \weq \conncomp Gg$
  by mapping $h$ to $hg$.
  Indeed, the point $a$ in $A$ has an $\infty$-group of symmetries,
  whose elements form the type $a=a$.

  Of course, a similar result holds generally for wild groups.
\end{rem}


We shall show in~\ref{sec:structure-components} that $\fgr(\Sp=\Sp,\id_\Sp)\weq\ZZ/2\ZZ$.
From~\cref{prop:equiv-susp-comp} we then also get $\fgr(\Sp=\Sp,-\id_\Sp)\weq\ZZ/2\ZZ$.

\section{Symmetries of the \texorpdfstring{$n$}{n}-sphere}
\label{sec:higher-sphere}


Having discussed the cases $n = 1$ and $n = 2$ in some detail, we finally wish to establish the result that $\Sn n = \Sn n$ has two connected components, for all $n \geq 1$.
The main ideas for the proof are already contained in the result that $\hgr n(\Sn n) = \ZZ$ \cite{licataBrunerie_s1again} and the definition of degrees by Buchholtz and Favonia \cite{Buchholtz2018CellularCI}, but the argument that we need does not seem to have been written out
informally in detail so far, though it has been formalized.

An important tool in this section is the wild adjunction 
$\susp \dashv \Omega$ from \cref{prop:susp-loop-adjunction}. 
We recall that the suspension $\susp$ acts like a wild functor, 
see \cref{ex:loop-sus-wild-functors}(ii), on morphisms:
For $f : A \to B$, define $\susp(f) : \susp(A) \to \susp(B)$ by 
$\north_{\susp(A)} \mapsto \north_{\susp(B)}$,
$\south_{\susp(A)} \mapsto \south_{\susp(B)}$,
and $\mrd_{\susp(A)}(a) \mapsto \mrd_{\susp(B)}(f(a))$.
The map $\susp(f)$ is pointed by the reflexivity path.

A core result is the following, to be proved in
\cref{sec:susp-is-monoid-iso}:
\begin{restatable}{thm}{suspiso}
    \label{thm:susp-is-monoid-iso}
    For all natural numbers $n \geq 1$, the wild monoid morphism,
    \begin{equation*}
    (\Sn n \to_* \Sn n) \xrightarrow{\susp} (\Sn {n+1} \to_* \Sn {n+1})
    \end{equation*}
    is a $0$-equivalence.
\end{restatable}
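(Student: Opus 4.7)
The plan is to reduce the statement to a connectivity claim about iterated loop spaces by factoring $\susp$ through the adjunction $\susp \dashv \loopspace\null$ of \cref{prop:susp-loop-adjunction}, and then to conclude via Freudenthal for $n \geq 2$, handling $n=1$ separately using the Hopf apparatus already built in \cref{sec:sphere}.

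First I would combine naturality of $\eta$ with the adjunction equivalence $\Phi_{\Sn n,\Sn{n+1}} : (\Sn{n+1}\ptdto\Sn{n+1}) \weq (\Sn n\ptdto\loopspace\null\Sn{n+1})$, which sends $g$ to $\loopspace\null(g)\circ\eta_{\Sn n}$. Naturality of $\eta$ gives $\loopspace\null(\susp f)\circ\eta_{\Sn n} = \eta_{\Sn n}\circ f$, so the composite $\Phi\circ\susp$ is just post-composition with $\eta_{\Sn n}$. Iterating the adjunction yields pointed equivalences $(\Sn n\ptdto Y) \weq \loopspace n Y$ natural in $Y$ (via $\Sn n \weq \susp^n \Sn 0$ and evaluation at the non-base point of $\Sn 0$), and under these, post-composition with $\eta_{\Sn n}$ corresponds to $\loopspace n(\eta_{\Sn n}) : \loopspace n\Sn n \to \loopspace{n+1}\Sn{n+1}$. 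Hence it suffices to show that this last map is a $0$-equivalence, i.e., that the induced suspension map $\hgr n(\Sn n) \to \hgr{n+1}(\Sn{n+1})$ is an isomorphism.

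For $n \geq 2$ I would invoke Freudenthal's suspension theorem \cite[Thm.~8.6.4]{HoTT}: since $\Sn n$ is $(n{-}1)$-connected, $\eta_{\Sn n}$ is $(2n{-}2)$-connected and hence induces isomorphisms on $\pi_k$ for every $k \leq 2n-2$. Because $n \leq 2n-2$ whenever $n \geq 2$, the induced map on $\hgr n$ is an isomorphism, as required.

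The main obstacle is the base case $n=1$, where Freudenthal only delivers $0$-connectedness of $\eta_\Sc$, one dimension short of the needed iso on $\pi_1$. Here I would bypass Freudenthal by invoking \cref{lem:tau-retraction-unit-circle}: looping the identity $\tau\circ\eta_\Sc = \id_\Sc$ yields $\loopspace\null(\tau)\circ\loopspace\null(\eta_\Sc) = \id_{\loopspace\null\Sc}$, so $\loopspace\null(\eta_\Sc)$ is a section of $\loopspace\null(\tau)$. But \cref{sec:sphere} has already shown that $\setTrunc{\loopspace\null(\tau)}$ is precisely the isomorphism $\hgr 2(\Sp) \weq \hgr 1(\Sc)$ underlying $\zeta$, so its section $\setTrunc{\loopspace\null(\eta_\Sc)}$ is the inverse isomorphism, completing the base case.
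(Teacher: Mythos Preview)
Your proposal is correct and follows essentially the same route as the paper: factor $\susp$ through the adjunction so that, under the equivalences $(\Sn n\ptdto Y)\weq\loopspace n Y$, it becomes $\loopspace n(\eta_{\Sn n})$, then use Freudenthal for $n\ge2$ and the Hopf retraction $\tau$ for $n=1$. The only cosmetic difference is that the paper records the intermediate statement that $\loopspace n(\eta_{\Sn n})$ is $(n{-}2)$-connected (via \cref{lem:conn-ap}), whereas you argue directly at the level of $\hgr n$; and for $n=1$ the paper phrases the argument as ``surjective by Freudenthal, injective by the retraction'' rather than invoking that $\setTrunc{\loopspace\null\tau}$ is already known to be an isomorphism.
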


Recall that an $n$-equivalence is a map that becomes an 
equivalence after $n$-truncation, cf.~\cite[Sec. 2]{CORS}.
In \cref{sec:conncomps-Sn=Sn} we shall see how
\cref{thm:susp-is-monoid-iso} implies that $\susp$ induces an isomorphism
$\setTrunc {\Sn n \to \Sn n} \to \setTrunc {\Sn {n+1} \to \Sn {n+1}}$
of monoids. Hence, by induction, $\setTrunc {\Sn n \to \Sn n}$
and $(\ZZ, \times)$ are isomorphic as monoids, and so
$\Sn n = \Sn n$ has two connected components.
In \cref{sec:-id<>id-Sn=Sn} we give one concrete symmetry
in each of the components.

\subsection{The suspension morphism is $0$-connected}
\label{sec:susp-is-monoid-iso}
Recall the equivalence $\Phi_{A,B}$ and the unit $\eta_A$ in
 \cref{sec:sphere}, \eqref{eq:triangle-identity-unit}
 and \cref{prop:susp-loop-adjunction}.
Taking $\susp B$ for $B$ in \eqref{eq:triangle-identity-unit} we get:
\[
  \Phi_{A,\susp B} \defequi \loopspace \null \blank \circ \eta_A :
   (\susp A \ptdto \susp B) \weq (A \ptdto \loopspace \null \susp B)
\]
We will use the following naturality properties.
\begin{lem} \label{lem:adj-prop}
    For any pointed map $f:A\ptdto B$ we have:
\[
(\Phi_{A,\susp B} \circ \susp)(f) \jdeq
 \loopspace \null (\susp(f)) \circ \eta_A = \eta_B \circ f
\]
\end{lem}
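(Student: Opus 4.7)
The first equality is purely a matter of unfolding. By definition, $\Phi_{A,\susp B}$ sends a pointed map $g : \susp A \ptdto \susp B$ to $\loopspace\null(g) \circ \eta_A$, so substituting $g \defequi \susp(f)$ yields $(\Phi_{A,\susp B} \circ \susp)(f) \jdeq \loopspace\null(\susp(f)) \circ \eta_A$ definitionally. No work is required here beyond observing that the composition in $\UUptd$ and the definitions of $\Phi$ and $\eta$ unfold to give this expression on the nose.

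For the second (propositional) equality, the plan is to use function extensionality and compute both sides pointwise. Recall that, as displayed in the circle case above \cref{lem:tau-retraction-unit-circle}, the unit $\eta_A$ is given by $x \mapsto \inv{\mrd(a_0)} \cdot \mrd(x)$, where $a_0$ is the base point of $A$; the pointing path $(\eta_A)_0$ arises from the path algebra identity $\inv{\mrd(a_0)}\cdot \mrd(a_0) = \refl{\north}$. Fix $x : A$. Since $\susp(f)$ is pointed by reflexivity at $\north_{\susp B}$, the underlying function of $\loopspace\null(\susp(f))$ acts on loops as $p \mapsto \ap{\susp(f)}(p)$. Hence, using functoriality of $\ap$ together with the definitional computation $\ap{\susp(f)}(\mrd(y)) = \mrd(f(y))$, we get
\begin{equation*}
  \loopspace\null(\susp(f))(\eta_A(x))
  = \ap{\susp(f)}\bigl(\inv{\mrd(a_0)} \cdot \mrd(x)\bigr)
  = \inv{\mrd(f(a_0))} \cdot \mrd(f(x)).
\end{equation*}
On the other hand, $(\eta_B \circ f)(x) = \inv{\mrd(b_0)}\cdot \mrd(f(x))$, and the pointing path $f_0 : f(a_0) = b_0$ induces a path $\ap{\mrd}(f_0) : \mrd(f(a_0)) = \mrd(b_0)$, which provides the required pointwise identification after inversion and whiskering by $\mrd(f(x))$.

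What remains is to check that the two pointed maps agree as pointed maps, i.e., that their pointing paths are compatible under the pointwise identification above. Both pointings reduce, after unfolding the adjunction triangle and the pointing of $\susp(f)$, to the same path algebra witness that $\inv{\mrd(b_0)}\cdot\mrd(b_0) = \refl{\north_{\susp B}}$; in one case it is obtained from $(\eta_A)_0$ transported along $f_0$, and in the other directly from $(\eta_B)_0$ precomposed with $f_0$. A short bookkeeping computation reconciles these, and this bookkeeping is the only nontrivial part of the proof. Packaging together function extensionality, the pointwise identification, and the compatibility of pointings yields the desired identity $\loopspace\null(\susp(f)) \circ \eta_A = \eta_B \circ f$ in $A \ptdto \loopspace\null\susp B$.
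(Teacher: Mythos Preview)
Your argument is correct in outline, but it takes a different route from the paper. The paper's proof is a one-liner: the equality $\loopspace\null(\susp(f)) \circ \eta_A = \eta_B \circ f$ is literally an instance of the naturality witness $\nat_\eta$ that is part of the data of the wild adjunction $\susp \leftadjto \loopspace\null$ (\cref{prop:susp-loop-adjunction}). Since that adjunction has already been established (with reference to a formalization), no further computation is needed.

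What you do instead is unfold the explicit description of $\eta$ and $\susp(f)$ and verify naturality by hand, pointwise and then on pointings. This is not wrong, but it amounts to re-proving exactly the piece of data $\nat_\eta$ that the adjunction already supplies. Two small imprecisions: the computation rule $\ap{\susp(f)}(\mrd(y)) = \mrd(f(y))$ is propositional, not definitional, in the HIT presentation used here; and the simplification of $\loopspace\null(\susp(f))$ to $\ap{\susp(f)}$ via the reflexivity pointing is likewise only propositional. Neither affects the validity of your sketch. The trade-off is clear: your approach is self-contained and makes the content of the lemma visible, while the paper's approach is modular and avoids duplicating work already absorbed into the construction of the adjunction.
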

\begin{proof}
  By the naturality witness $\nat_\eta$ from \cref{def:wild-adj}.
\end{proof}
  


\begin{lem} \label{lem:ap-Sigma}
    Let $X$ be a pointed type and $f : A \ptdto B$ a pointed function.
    Then for any $g:\susp X \ptdto A$ we have:
\[
   \Phi_{X,B}(f\circ g) = \loopspace\null (f) \circ \Phi_{X,A}(g)
\]
\end{lem}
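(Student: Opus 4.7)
The plan is to unfold both sides to the same expression using only the definition of $\Phi$ and the functoriality of the loop space. Recall that by \eqref{eq:triangle-identity-unit}, $\Phi_{X,A}(g)\jdeq \loopspace\null(g)\circ\eta_X$ for any $g:\susp X\ptdto A$, and likewise $\Phi_{X,B}(f\circ g)\jdeq \loopspace\null(f\circ g)\circ\eta_X$. So the statement reduces to the equation
\[
  \loopspace\null(f\circ g)\circ\eta_X \;=\; \loopspace\null(f)\circ\loopspace\null(g)\circ\eta_X.
\]

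First I would invoke functoriality of $\loopspace\null$ as a wild functor on $\UUptd$ (see \cref{ex:loop-sus-wild-functors} and the appendix on wild functors), which supplies an identification $\loopspace\null(f\circ g) = \loopspace\null(f)\circ\loopspace\null(g)$ as pointed maps $\loopspace\null(\susp X) \ptdto \loopspace\null B$. Whiskering this identification on the right with $\eta_X$ yields exactly the displayed equation, and hence the desired identification of $\Phi_{X,B}(f\circ g)$ with $\loopspace\null(f)\circ\Phi_{X,A}(g)$.

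The only subtlety, and the one point to be careful about, is that wild functoriality of $\loopspace\null$ is a propositional identification rather than a judgmental equality, so one must keep track of the pointing paths: $\Phi$ is an equivalence of \emph{pointed} function types, and the claim lives in such a pointed function type. Concretely, the two sides agree judgmentally on underlying functions once we reassociate $\loopspace\null(f)\circ(\loopspace\null(g)\circ\eta_X)$ to $(\loopspace\null(f)\circ\loopspace\null(g))\circ\eta_X$, and the pointing paths agree by the standard computation combining the functoriality witness of $\loopspace\null$ with the pointings of $f$, $g$, and $\eta_X$. I expect this bookkeeping of pointing paths to be the only non-routine part; everything else is a one-line invocation of wild functoriality followed by whiskering.
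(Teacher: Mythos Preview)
Your argument is correct and is essentially the same as the paper's: the paper simply cites the naturality of $\Phi$ in its second variable from \cref{def:wild-adj}, and your unfolding via $\Phi_{X,B}(f\circ g)\jdeq \loopspace\null(f\circ g)\circ\eta_X$ together with the functoriality witness $\loopspace\null(f\circ g)=\loopspace\null(f)\circ\loopspace\null(g)$ is exactly how that naturality is established. The pointing-path bookkeeping you flag is indeed the only non-judgmental step, and it is handled by the wild-functor data for $\loopspace\null$ and associativity in $\UUptd$.
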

\begin{proof}
    This is the last part of \cref{def:wild-adj}.
    %
\end{proof}

\begin{defi}\label{def:sphere-to-eq-loops}
Let $A$ be a pointed type.
Recall that $\Sn 0 \defequi \bn 2$ is pointed at $\inl({\ast})$
and that $(\Sn0 \ptdto A)\simeq A$ by the equivalence 
$\phi_A^0(f:\Sn0 \ptdto A) \defequi f(\inr(\ast))$. 
For $n\geq 1$, define equivalences 
$\phi_A^n: (\susp \Sn{n-1} \ptdto A) \to\loopspace n A$ 
by induction:
$\phi_A^n \defequi \phi_{\loopspace\null A}^{n-1} \circ \Phi_{\Sn {n-1},A}$.
\end{defi}

\begin{lem}\label{lem:iterated-ap-Sigma}
    For all $n \geq 0$ and $f : A \ptdto B$ and $g:\Sn n \ptdto A$ 
    we have: 
\[
\phi_B^n (f\circ g) = \Omega^n(f)(\phi_A^n(g))
\]
\end{lem}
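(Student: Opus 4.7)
The plan is to prove the statement by induction on $n$, using \cref{lem:ap-Sigma} to unfold one layer of $\Phi$ at each step, and the definition $\loopspace {n} \defequi \loopspace {n-1}\circ \loopspace\null$ to reassemble the iterated loops.

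For the base case $n=0$, both sides reduce immediately: by definition $\phi_A^0(g) \jdeq g(\inr(\ast))$ and $\loopspace 0(f) \jdeq f$, so the equation asserts $(f\circ g)(\inr(\ast)) = f(g(\inr(\ast)))$, which is judgmental.

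For the inductive step, assume the statement for $n-1$ and all choices of pointed types and maps. Given $f:A\ptdto B$ and $g:\Sn n \ptdto A$, I would unfold the definition of $\phi_B^n$ to write
\[
\phi_B^n(f\circ g) = \phi_{\loopspace\null B}^{n-1}\bigl(\Phi_{\Sn{n-1},B}(f\circ g)\bigr).
\]
Applying \cref{lem:ap-Sigma} with the pointed map $f$, this becomes
\[
\phi_{\loopspace\null B}^{n-1}\bigl(\loopspace\null(f)\circ \Phi_{\Sn{n-1},A}(g)\bigr).
\]
Now the induction hypothesis applies, with $\loopspace\null(f):\loopspace\null A \ptdto \loopspace\null B$ in place of $f$ and $\Phi_{\Sn{n-1},A}(g):\Sn{n-1}\ptdto\loopspace\null A$ in place of $g$, yielding
\[
\loopspace{n-1}(\loopspace\null(f))\bigl(\phi_{\loopspace\null A}^{n-1}(\Phi_{\Sn{n-1},A}(g))\bigr).
\]
By the definitions $\loopspace n \defequi \loopspace{n-1}\circ\loopspace\null$ and $\phi_A^n \defequi \phi_{\loopspace\null A}^{n-1}\circ \Phi_{\Sn{n-1},A}$, this is exactly $\loopspace n(f)(\phi_A^n(g))$, concluding the step.

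There is no real obstacle: the argument is a routine induction, and the only nontrivial ingredient is \cref{lem:ap-Sigma}, which is essentially the naturality clause already packaged into the wild adjunction $\susp\leftadjto\loopspace\null$. The one subtlety to watch is that the induction hypothesis must be invoked with the pointed types and maps one loop-level up (namely $\loopspace\null A$, $\loopspace\null B$, $\loopspace\null(f)$), so the statement of the lemma must be read as being universally quantified over $A$, $B$, $f$, $g$ at each $n$; otherwise the recursion does not close.
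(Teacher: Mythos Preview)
Your proof is correct and follows exactly the same approach as the paper's own proof: induction on $n$ with a trivial base case, and the inductive step handled by one application of \cref{lem:ap-Sigma} together with the recursive definitions $\phi_A^n \jdeq \phi_{\loopspace\null A}^{n-1}\circ\Phi_{\Sn{n-1},A}$ and $\loopspace n(f) \jdeq \loopspace{n-1}(\loopspace\null(f))$. Your explicit remark that the induction hypothesis must be applied at the looped types $\loopspace\null A$, $\loopspace\null B$ is a helpful clarification but does not indicate any divergence from the paper's argument.
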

\begin{proof}
  By induction on $n:\NN$. The base case $n\equiv 0$ is trivial.
  The step from $n$ to $n+1$ is an application of \cref{lem:ap-Sigma},
  using that $\Sn {n+1} \jdeq \susp {\Sn n}$ and $\loopspace {n+1}(f) \jdeq
  \loopspace n (\loopspace\null (f))$.
\end{proof}

The above lemmas allow us to formulate a connection between 
the maps $\susp$ in \cref{thm:susp-is-monoid-iso} and the unit 
$\eta_{\Sn n}$ from \cref{prop:susp-loop-adjunction}. 
The following diagram commutes by \cref{lem:adj-prop} (top triangle)
and \cref{lem:iterated-ap-Sigma} (bottom quadrangle).
This proves \cref{lem:sigma-susp}.
\begin{equation}
    \begin{tikzpicture}[x=6cm,y=-2cm,baseline=(current bounding box.center)]
    \node (A) at (0,0) {$(\Sn n \to_* \Sn n)$};
    \node (B) at (0,1) {$\Omega^n(\Sn n)$};
    \node (C) at (1,1) {$\Omega^{n+1}(\Sn {n+1})$};
    \node (M) at (1,0.5) {$(\Sn n \to_* \Omega\susp\Sn n)$};
    \node (D) at (1,0) {$(\Sn {n+1} \to_* \Sn {n+1})$};

    \draw[arrow] (A) to node [left] {$\phi_{\Sn n}^n$} (B);
    \draw[arrow] (B) to node [above] {$\Omega^n(\eta_{\Sn n})$} (C);
    \draw[arrow, shorten <= -0.5ex, shorten >= -0.5ex] (D) to node [right] {$\Phi_{\Sn n,\Sn {n+1}}$} (M);
    \draw[arrow, shorten <= -0.5ex, shorten >= -0.5ex] (M) to node [right] {$\phi_{\Omega\Sn {n+1}}^n$} (C);
    \draw[arrow] (A) to node [above] {$\susp$} (D);
    \draw[arrow] (A) to node [below] {$\eta_{\Sn n}\circ\blank$} (M);
    \end{tikzpicture} \nonumber
\end{equation}

\begin{restatable}{cor}{lemmasigmasusp}\label{lem:sigma-susp}
$\phi_{\Sn {n+1}}^{n+1}\circ\susp \jdeq
\phi_{\Omega\Sn {n+1}}^n\circ \Phi_{\Sn n,\Sn {n+1}}\circ\susp =
\Omega^n(\eta_{\Sn n})\circ\phi_{\Sn n}^n$ 
\end{restatable}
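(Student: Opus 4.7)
The statement is a chain of two equalities: a judgmental one and a propositional one. My plan is to treat them separately, reading off the judgmental equality by unfolding the recursive definition of $\phi^\bullet$, and obtaining the propositional one by recognising it as the outer perimeter of the diagram displayed just before the corollary, whose two cells are instances of already-proved naturality lemmas.

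For the judgmental part $\phi_{\Sn{n+1}}^{n+1}\circ\susp \jdeq \phi_{\Omega\Sn{n+1}}^n\circ \Phi_{\Sn n,\Sn{n+1}}\circ\susp$, I would simply specialise \cref{def:sphere-to-eq-loops} at level $n+1$ with $A \defequi \Sn{n+1}$, which gives $\phi_{\Sn{n+1}}^{n+1} \jdeq \phi_{\Omega\Sn{n+1}}^{n} \circ \Phi_{\Sn n,\Sn{n+1}}$ definitionally, and then postcompose with $\susp$ on both sides. Since both sides are then the same map $(\Sn n \ptdto \Sn n) \to \loopspace{n+1}\Sn{n+1}$ by construction, no further work is needed.

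For the propositional equality, I would factor through the intermediate type $(\Sn n \ptdto \loopspace\null\Sn{n+1})$, the vertex labelled $M$ in the diagram. The identity $\Phi_{\Sn n,\Sn{n+1}}(\susp f) = \eta_{\Sn n}\circ f$ is an instance of \cref{lem:adj-prop} taken with $A$ and $B$ both equal to $\Sn n$, applied to an arbitrary pointed $f:\Sn n \ptdto \Sn n$; this takes care of the top triangle. Next, the identity $\phi_{\Omega\Sn{n+1}}^{n}(\eta_{\Sn n}\circ g) = \Omega^n(\eta_{\Sn n})(\phi_{\Sn n}^{n}(g))$ is an instance of \cref{lem:iterated-ap-Sigma}, with the lemma's $f$ instantiated to $\eta_{\Sn n}:\Sn n \ptdto \loopspace\null\Sn{n+1}$ and its $g$ instantiated to an arbitrary pointed map $\Sn n \ptdto \Sn n$; this takes care of the bottom quadrangle. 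Chaining the two equalities along $M$ yields the desired identification pointwise, and function extensionality (treated silently throughout the paper) promotes it to an identification of functions.

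The only mild obstacle is keeping the parameters of \cref{lem:adj-prop} and \cref{lem:iterated-ap-Sigma} straight, and in particular checking that the pointedness data match: \cref{lem:adj-prop} needs $\susp(f)$ to carry the reflexivity pointing path, which is guaranteed by the very definition of $\susp(f)$ recalled at the start of \cref{sec:higher-sphere}. No induction on $n$ is required at this stage, since the recursive structure of $\phi^\bullet$ has already been absorbed into \cref{lem:iterated-ap-Sigma}.
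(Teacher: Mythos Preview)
Your proposal is correct and mirrors the paper's own argument exactly: the paper proves the corollary by observing that the displayed diagram commutes, with the top triangle given by \cref{lem:adj-prop} and the bottom quadrangle by \cref{lem:iterated-ap-Sigma}, while the judgmental equality is immediate from \cref{def:sphere-to-eq-loops}.
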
 

Since the $\phi$'s in \cref{lem:sigma-susp} are equivalences we can
transport knowledge about $\Omega^n(\eta_{\Sn n})$ to $\susp$,
using the following result:

\begin{thm}[{Freudenthal suspension theorem \cite[Thm.~8.6.4]{HoTT}}] \label{thm:freudenthal}
    If $X$ is $n$-connected and pointed, with $n \geq 0$, 
    then the map $\eta_X : X \ptdto \loopspace \null \susp X$ 
    is $2n$-connected. (Note $\eta_X \jdeq \sigma_X$ in \cite{HoTT}.)
\end{thm}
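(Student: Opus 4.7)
The plan is to prove Freudenthal via an encode-decode argument on $\susp X$. A map is $2n$-connected iff it induces an equivalence on $2n$-truncations, so the goal reduces to producing an equivalence $\|\loopspace\null\susp X\|_{2n}\weq\|X\|_{2n}$ induced by $\eta_X$. More strongly, I would construct, for every $y:\susp X$, an equivalence $\|\north = y\|_{2n}\weq\mathsf{code}(y)$ for a suitable type family $\mathsf{code}:\susp X\to \UU$ valued in $2n$-types with $\mathsf{code}(\north)\defequi\|X\|_{2n}$.

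First, I would define $\mathsf{code}$ by suspension induction, setting $\mathsf{code}(\north)\defequi\|X\|_{2n}$ and $\mathsf{code}(\south)\defequi\|X\|_{2n}$, and for each $x:X$ defining $\ap{\mathsf{code}}(\mrd(x))$ as an auto-equivalence of $\|X\|_{2n}$ depending on $x$. The delicate point is to produce this equivalence uniformly in $x$ while ensuring it reduces to the identity at $x_0$ and, more generally, behaves coherently. This is the heart of the argument and relies on the \emph{wedge connectivity lemma} (provable directly in UF): if $X$ is $n$-connected, then giving a dependent map out of $X\times X$ into a family of $2n$-types amounts to giving its restriction to the wedge $X\vee X\subseteq X\times X$ together with a matching datum at the basepoint. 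Applied appropriately, this lemma yields the required equivalence and the coherence data needed to complete the definition of $\mathsf{code}$.

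Next, one defines $\mathsf{encode}:\prod_{y:\susp X}(\north = y)\to\mathsf{code}(y)$ by path induction, sending $\refl_{\north}$ to the class of $x_0$ in $\|X\|_{2n}$, and $\mathsf{decode}:\prod_{y:\susp X}\mathsf{code}(y)\to\|\north = y\|_{2n}$ by suspension induction and the universal property of truncation, invoking the wedge connectivity lemma once more to handle the meridian case. A routine calculation shows that $\mathsf{encode}$ and $\mathsf{decode}$ are mutually inverse; specialising to $y\equiv\north$ yields $\|\loopspace\null\susp X\|_{2n}\weq\|X\|_{2n}$, and unfolding shows that this equivalence is the inverse of $\|\eta_X\|_{2n}$, so $\eta_X$ is $2n$-connected. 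The main obstacle is precisely the construction of $\mathsf{code}$ and $\mathsf{decode}$ on meridians, since the naively available data lives only on the `axes' of $X\times X$. The hypothesis that $X$ is $n$-connected, together with the choice of $2n$ as the truncation level, is perfectly balanced to trigger the wedge connectivity lemma; weakening either hypothesis would break the extension, which reflects the sharpness of the bound $2n$.
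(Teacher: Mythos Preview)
The paper does not prove this theorem; it is quoted from the HoTT book (Thm.~8.6.4) and used as a black box. Your sketch follows the HoTT book's strategy---an encode-decode argument on $\susp X$ driven by the wedge connectivity lemma---and the ingredients you describe are the right ones.

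However, your opening reduction is invalid. The claim that a map is $2n$-connected iff it induces an equivalence on $2n$-truncations is false: the point inclusion $\bn 1 \to \Sc$ induces an equivalence on $0$-truncations (both sides become contractible), yet its fiber over $\base$ is $\loopspace\null\Sc \weq \ZZ$, whose $0$-truncation is $\ZZ$ itself and hence not contractible, so the map is not $0$-connected. Consequently, showing that the $2n$-truncation of $\eta_X$ is an equivalence---which is all your encode-decode argument yields, even in the ``more strongly'' parametrised form over $y:\susp X$---establishes only that $\eta_X$ is a $2n$-\emph{equivalence}, not that it is $2n$-\emph{connected} (i.e., that all its fibers are $2n$-connected). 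The paper genuinely needs the stronger conclusion, since it goes on to apply \cref{lem:conn-ap} (if $f$ is $k$-connected then $\ap f$ is $(k{-}1)$-connected), and that lemma fails for mere $k$-equivalences. The HoTT book's proof uses the same code family and the same wedge-connectivity input, but is organised so as to show directly that each fiber of $\mrd:X\to(\north = \south)$ (and hence of $\eta_X$) has contractible $2n$-truncation, rather than only deducing an equivalence between the $2n$-truncations of source and target.
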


Taking $X\jdeq\Sn n$ and using that $\Sn n$ is $(n{-}1)$-connected 
\cite[Cor.~8.2.2]{HoTT}, we get in particular the following instantiation:

\begin{cor} \label{cor:sigma-truncated}
The map $\eta_{\Sn n} : \Sn n \to \Omega(\Sn {n+1})$
is a $2(n-1)$-connected for all $n \geq 1$,
\end{cor}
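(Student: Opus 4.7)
The statement is an immediate corollary of the Freudenthal suspension theorem (\cref{thm:freudenthal}) applied to the sphere $\Sn n$, so the plan is essentially just to check that the hypotheses of Freudenthal match and that the arithmetic of connectivity comes out right.

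Concretely, I would proceed as follows. First, I invoke the fact cited in the introductory remarks that $\Sn n$ is $(n{-}1)$-connected for all $n\ge -1$ (this is \cite[Cor.~8.2.2]{HoTT}). Under the assumption $n\ge 1$, we have $n-1\ge 0$, so $\Sn n$ satisfies the connectivity hypothesis of \cref{thm:freudenthal} with parameter $k\defequi n-1$. Second, I apply \cref{thm:freudenthal} with $X\defequi\Sn n$ and this $k$, obtaining that the unit $\eta_{\Sn n} : \Sn n \ptdto \loopspace\null \susp{\Sn n}$ is $2k = 2(n-1)$-connected. Finally, I identify $\susp{\Sn n}$ with $\Sn {n+1}$ (by the definition of spheres given in the setting section), so the map in question is exactly the $\eta_{\Sn n} : \Sn n \to \loopspace\null \Sn {n+1}$ of the statement.

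There is essentially no obstacle: the content is entirely packaged in the sphere connectivity result and in the Freudenthal theorem as stated. The only thing to watch out for is the edge case $n=1$, where the connectivity bound $2(n-1)=0$ is trivially satisfied (every pointed map of pointed types is $(-2)$-connected, and $0$-connectedness means surjective on $\pi_0$, which any map between connected types satisfies), but Freudenthal still gives the statement since $\Sn 1$ is $0$-connected. For $n\ge 2$ the conclusion carries real content and will be exploited, via \cref{lem:sigma-susp} and the $\phi^n_{\Sn n}$ equivalences, to transport connectivity information from $\loopspace n (\eta_{\Sn n})$ to $\susp$ itself in the proof of \cref{thm:susp-is-monoid-iso}.
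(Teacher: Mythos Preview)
Your proof is correct and matches the paper's own argument exactly: both simply apply the Freudenthal suspension theorem to $X \jdeq \Sn n$, using that $\Sn n$ is $(n{-}1)$-connected (\cite[Cor.~8.2.2]{HoTT}), to obtain the stated $2(n{-}1)$-connectedness. Your additional remarks about the edge case $n=1$ and the downstream use are accurate but not needed for the corollary itself.
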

To make use of this, we show how connectedness of functions interacts with loop spaces (proof in appendix):

\begin{restatable}{lem}{lemmaconnapf} \label{lem:conn-ap}
    Let $A$ and $B$ be types and $f : A \to B$ be a $k$-connected 
    function ($k \geq -1$).
    For all $a_1, a_2 : A$, the function 
    $\ap f : a_1 = a_2 \to f(a_1) = f(a_2)$ is $(k{-}1)$-connected.
\end{restatable}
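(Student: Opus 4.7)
The plan is to identify, for each target path, the fiber of $\ap f$ with a path type inside a fiber of $f$, and then to invoke the general fact that path spaces in $k$-connected types are $(k{-}1)$-connected.

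First I would fix $a_1, a_2 : A$ and a path $p : f(a_1) = f(a_2)$, and consider the fiber $F \defequi \sum_{a : A}(f(a) = f(a_2))$ of $f$ at $f(a_2)$, which is $k$-connected by hypothesis on $f$. Setting $u \defequi (a_1, p)$ and $v \defequi (a_2, \refl{f(a_2)})$ in $F$, I would use the standard characterisation of identifications in $\Sigma$-types together with the explicit description of transport in the family $a \mapsto f(a) = f(a_2)$ to produce an equivalence
\[
(u =_F v) \;\weq\; \sum_{q : a_1 = a_2}(\ap f(q) = p),
\]
whose right-hand side is exactly $\inv{(\ap f)}(p)$.

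It then remains to invoke the general lemma that for any $k$-connected type $X$ and any $x, y : X$, the identity type $x = y$ is $(k{-}1)$-connected. This is itself immediate from the universal property of truncations: the $(k{-}1)$-truncation of $x = y$ is equivalent to a path space in the $k$-truncation of $X$, which is contractible by $k$-connectedness of $X$. Applying this general lemma to $F$ with the points $u$ and $v$ yields $(k{-}1)$-connectedness of $\inv{(\ap f)}(p)$, and since $p$ was arbitrary, $\ap f$ is $(k{-}1)$-connected, as required.

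The only real subtlety is the $\Sigma$-type computation in the first step, where a path reversal arising from the transport formula in $a \mapsto f(a) = f(a_2)$ must be tracked carefully; however, this is routine path algebra and I do not expect any genuine obstacle.
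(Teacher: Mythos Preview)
Your proof is correct and follows essentially the same approach as the paper: both identify the fiber of $\ap f$ over $p$ with the path type $(a_1,p)=(a_2,\refl{f(a_2)})$ in the fiber $\inv f(f(a_2))$, and then use that path types in a $k$-connected type are $(k{-}1)$-connected. The paper simply cites \cite[Lem.~7.6.2]{HoTT} and \cite[Thm.~7.3.12]{HoTT} for these two steps, whereas you spell them out directly.
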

Iterating this $n$ times combined with~\cref{cor:sigma-truncated} we get:
\begin{cor}
   \label{cor:n-2-connected}
   The map
   \begin{equation}\nonumber
     \loopspace n (\eta_{\Sn n}) : 
     \Omega^n(\Sn n) \to \Omega^{n+1}(\Sn {n+1})
   \end{equation}
   is $(n{-}2)$-connected for all $n\geq 1$.
\end{cor}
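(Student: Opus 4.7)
The plan is to iterate \cref{lem:conn-ap} exactly $n$ times, starting from the connectivity bound on the unit supplied by \cref{cor:sigma-truncated}. First I would observe that for any pointed map $g : X \ptdto Y$ with pointing path $g_0 : g(x_0) = y_0$, the induced loop map $\loopspace \null(g) : \loopspace \null X \to \loopspace \null Y$ coincides, up to pre- and post-composition with the fixed path $g_0$, with the restriction of $\ap g$ to the loop space at $x_0$. Since composition with a fixed path is an equivalence, and equivalences preserve $k$-connectedness for every $k$, \cref{lem:conn-ap} will imply that $\loopspace \null(g)$ is $(k{-}1)$-connected whenever $g$ is $k$-connected.

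Next I would prove by induction on $k$, for $0 \leq k \leq n$, the auxiliary statement that $\loopspace k(\eta_{\Sn n})$ is $\bigl(2(n-1) - k\bigr)$-connected. The base case $k = 0$ is exactly \cref{cor:sigma-truncated}. For the inductive step, applying the observation of the previous paragraph to $\loopspace k(\eta_{\Sn n})$ drops the connectivity bound by one, reaching $\bigl(2(n-1) - k - 1\bigr)$. Specializing the auxiliary statement at $k = n$ yields that $\loopspace n(\eta_{\Sn n})$ is $\bigl(2(n-1) - n\bigr) = (n-2)$-connected, which is precisely the desired conclusion.

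The main thing I would want to verify carefully is the first-paragraph reduction of $\loopspace \null$ to $\ap$. Since $\eta_{\Sn n}$ is already a pointed map for $n \geq 1$, the induced map $\loopspace \null(\eta_{\Sn n})$ is literally a conjugate of $\ap{\eta_{\Sn n}}$ by the pointing path, and similarly at each subsequent iteration; such conjugation never alters connectivity. Beyond this bookkeeping, no ingredients other than \cref{cor:sigma-truncated} and \cref{lem:conn-ap} should be needed, making the argument a clean induction on the number of applications of $\loopspace \null$.
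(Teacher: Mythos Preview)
Your proposal is correct and follows exactly the approach indicated in the paper, which simply says to iterate \cref{lem:conn-ap} $n$ times starting from \cref{cor:sigma-truncated}. Your explicit induction on $k$ with the bound $2(n-1)-k$, together with the observation that $\loopspace\null(g)$ differs from $\ap g$ only by conjugation with the pointing path, is precisely the bookkeeping the paper leaves implicit.
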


We are now ready to make good on our promise made above.
\begin{proof}[Proof of \cref{thm:susp-is-monoid-iso}.]
  Let $n\geq 1$.  We have to prove that
  $\susp : (\Sn n \to_* \Sn n) \to (\Sn {n+1} \to_* \Sn {n+1})$ is a
  $0$-equivalence. It is a wild monoid morphism by functoriality of $\susp$, see
  \cref{ex:loop-sus-wild-functors}(ii), since identity and composition
  of the monoid structures are just given by the identity function and
  function composition.

  We now show that $\susp$ above is a $0$-equivalence.
  By \cref{cor:n-2-connected} we have that $\loopspace n (\eta_{\Sn n})$
  is $(n-2)$-connected. For $n\ge 2$, it is then directly a $0$-equivalence.
  For $n=1$, only have that $\eta_\Sc$ is $0$-connected,
  so induces a surjection on fundamental groups.
  But $\eta_\Sc$ has a retraction $\tau$ by~\cref{lem:tau-retraction-unit-circle},
  so it also induces an injection, hence a bijection, on fundamental groups,
  so $\loopspace\null(\eta_\Sc)$ is also a $0$-equivalence.

  It follows by \cref{lem:sigma-susp}
  that also $\susp$ is a $0$-equivalence, since the $\phi$'s there
  are equivalences.
\end{proof}

\subsection{Connected components of
  \texorpdfstring{$\Sn n = \Sn n$}{Sn = Sn}}
\label{sec:conncomps-Sn=Sn}

\cref{thm:susp-is-monoid-iso} implies that, for all $n \geq 1$, 
the map $\higherTrunc 0 {\susp} :  \higherTrunc 0 {\Sn n \to_* \Sn n} \to \higherTrunc 0 {\Sn {n+1} \to_* \Sn {n+1}}$
is an isomorphism of monoids.
Two more, smaller steps are needed to be able to determine the number of components of $\Sn n \simeq \Sn n$.
One is to remove the base points of this monoid morphism. The other is to consider equivalences
rather than just maps  $\Sn n \to \Sn n$.

\Cref{lem:forget-points-conn} allows us to remove the point in the (co)domain of $\susp$:
\begin{restatable}{lem}{lemmahighSnconn} \label{lem:higher-spheres-connected}
	For all natural numbers $n \geq 1$, the wild monoid morphism
$\susp : (\Sn n \to \Sn n) \to (\Sn {n+1} \to \Sn {n+1})$
  is a $0$-equivalence.
\end{restatable}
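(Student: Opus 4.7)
My plan is to fit $\susp$ into a commutative square with its pointed cousin and conclude by 2-out-of-3 after set-truncation. Specifically, consider
\[
\begin{tikzcd}[row sep=small, column sep=large]
(\Sn n \ptdto \Sn n) \arrow[r, "\susp"] \arrow[d, "\pr_1"'] & (\Sn{n+1} \ptdto \Sn{n+1}) \arrow[d, "\pr_1"] \\
(\Sn n \to \Sn n) \arrow[r, "\susp"'] & (\Sn{n+1} \to \Sn{n+1})
\end{tikzcd}
\]
which commutes definitionally, because $\susp$ is determined by what it does to the underlying (unpointed) function and never touches the pointing.

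I would then argue that three of the four sides become equivalences after set-truncation. The top is a $0$-equivalence by~\cref{thm:susp-is-monoid-iso}. The right-hand $\pr_1$ is $(n-1)$-connected by~\cref{lem:forget-points-conn} (since $\Sn{n+1}$ is $n$-connected), and for $n \geq 1$ this is at least $0$-connected, hence a $0$-equivalence. For the left-hand $\pr_1$, the same lemma yields $(n-2)$-connectedness, which already suffices when $n \geq 2$.

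The main obstacle is $n=1$, for which the generic bound only gives a $(-1)$-connected map $\pr_1 : (\Sc \ptdto \Sc) \to (\Sc \to \Sc)$. I would dispatch this case by hand, using the explicit equivalence $(\Sc \to \Sc) \weq \Sc \times \ZZ$ from~\eqref{eq:Sc-to-Sc-equiv-Sc-ZZ}, under which $\ev_{\base}$ corresponds to the first projection. The fiber over $\base$ then identifies $(\Sc \ptdto \Sc)$ with $\bigl(\sum_{x : \Sc}(x = \base)\bigr) \times \ZZ$; the left factor is the based path space and is contractible, so $(\Sc \ptdto \Sc) \weq \ZZ$ and the forgetful is transported to $k \mapsto (\base, k) : \ZZ \to \Sc \times \ZZ$. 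On set-truncations this is the identity of $\ZZ$, hence a $0$-equivalence.

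With three sides of the set-truncated square being equivalences, the bottom arrow is an equivalence by composition. Finally, the map is a wild monoid morphism because $\susp$ is a wild endofunctor (\cref{ex:loop-sus-wild-functors}(ii)): it preserves identities and composition, which are exactly the monoid operations on self-maps.
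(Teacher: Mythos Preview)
Your proof is correct and follows the same strategy as the paper: the same commutative square, the same appeal to \cref{thm:susp-is-monoid-iso} for the top edge, and the same use of \cref{lem:forget-points-conn} for the vertical projections. Your explicit treatment of the case $n=1$ (computing $(\Sc\ptdto\Sc)\weq\ZZ$ and identifying $\pr_1$ with $k\mapsto(\base,k)$) is in fact more careful than the paper's one-line argument, which asserts that $\pr_1$ is $0$-connected from $\Sn n$ being $0$-connected, whereas \cref{lem:forget-points-conn} literally only gives $(-1)$-connectedness of the left $\pr_1$ when $n=1$.
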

\begin{proof}
We have $\pr_1\circ\susp_* = \susp\circ\pr_1$, with $\susp_*$ the suspension morphism
for pointed maps. Since every $\Sn n$ (for $n \geq 1$) is $0$-connected, we get
that $\pr_1$ is $0$-connected by \cref{lem:forget-points-conn}.
It follows that $\susp$ is a $0$-equivalence.
\end{proof}

\begin{cor} \label{cor:higher-spheres-2-comp}
	For all natural numbers $n \geq 1$, the map
	\begin{equation}
	\higherTrunc 0 {\susp} : \higherTrunc 0 {\Sn n \to \Sn n} \to \higherTrunc 0 {\Sn {n+1} \to \Sn {n+1}}
	\end{equation}
	is an isomorphism of monoids.
\end{cor}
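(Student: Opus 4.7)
The plan is to observe that this corollary is essentially a direct packaging of \cref{lem:higher-spheres-connected} together with a routine check of the monoid structure. Concretely, \cref{lem:higher-spheres-connected} already tells us that $\susp : (\Sn n \to \Sn n) \to (\Sn{n+1} \to \Sn{n+1})$ is a $0$-equivalence, which by definition means exactly that the induced map $\higherTrunc 0 {\susp}$ on $0$-truncations is an equivalence of sets. So the bijection part is free.

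What remains is to verify that $\higherTrunc 0 {\susp}$ is a morphism of monoids, where on both sides the monoid structure is given by identity and composition of unpointed self-maps. First I would note that, by functoriality of $\susp$ as a wild endofunctor on $\UU$ (\cref{ex:loop-sus-wild-functors}), there are canonical paths $\susp(\id_{\Sn n}) = \id_{\Sn{n+1}}$ and, for $f,g : \Sn n \to \Sn n$, a path $\susp(g\circ f) = \susp(g)\circ\susp(f)$ obtained by $\susp$-induction on the generators $\north,\south,\mrd$. Applying $\settrunc\blank$ to these paths, we get equalities $\higherTrunc 0 {\susp}(\settrunc{\id_{\Sn n}}) = \settrunc{\id_{\Sn{n+1}}}$ and $\higherTrunc 0 {\susp}(\settrunc{g\circ f}) = \higherTrunc 0 {\susp}(\settrunc g)\cdot \higherTrunc 0 {\susp}(\settrunc f)$ in the set-truncated monoids. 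Since set-truncations of types of functions, equipped with composition, form genuine (set-level) monoids, these equalities promote $\higherTrunc 0 {\susp}$ to a bona fide monoid homomorphism.

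Combining the two points, $\higherTrunc 0 {\susp}$ is a monoid homomorphism whose underlying map of sets is an equivalence, hence an isomorphism of monoids. I do not anticipate any real obstacle here: the hard work has been done in \cref{thm:susp-is-monoid-iso} and \cref{lem:higher-spheres-connected}, and everything else is a straightforward application of the universal property of set-truncation together with functoriality of $\susp$.
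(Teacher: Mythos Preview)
Your proposal is correct and is essentially the same as the paper's (implicit) argument: the paper states \cref{lem:higher-spheres-connected} as asserting that $\susp$ is a wild monoid morphism and a $0$-equivalence, and the corollary is then stated without proof, being the immediate consequence you describe. Your only addition is that you spell out explicitly why $\susp$ preserves identity and composition (via the wild functoriality of \cref{ex:loop-sus-wild-functors}), which the paper simply bundles into the phrase ``wild monoid morphism'' in the lemma statement.
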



The second step is to consider equivalences
rather than just maps  $\Sn n \to \Sn n$.
The following result implies that $\setTrunc{\Sn n = \Sn n}$ is the group of 
invertible elements of the monoid $\setTrunc{\Sn n \to \Sn n}$.
The proof is in the appendix.

\begin{restatable}{lem}{lemmainvertables}
    Let $A$ be a type. Then $\setTrunc{A \weq A}$ is equivalent to the set of 
invertible elements in the monoid $\setTrunc{A \to A}$.
    \label{lemma:invertible-truncated-Sn-pointed-weq}
\end{restatable}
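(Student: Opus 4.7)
The plan is to use the forgetful map $u : (A \weq A) \to (A \to A)$ sending an equivalence to its underlying function. Since being an equivalence is a proposition, $u$ is an embedding. Applying set truncation gives $\setTrunc u : \setTrunc{A \weq A} \to \setTrunc{A \to A}$. First I would show this lands in the subset of invertible elements: for an equivalence $e$ with quasi-inverse $e^{-1}$, the identities $e \circ e^{-1} = \id_A = e^{-1} \circ e$ descend to exhibit $\settrunc{u(e^{-1})}$ as a two-sided inverse of $\settrunc{u(e)}$ in the monoid $\setTrunc{A \to A}$. By set-truncation induction into the proposition ``is invertible'', this fact extends to every element of $\setTrunc{A \weq A}$, yielding a map $\tilde\psi : \setTrunc{A \weq A} \to \mathrm{Inv}(\setTrunc{A \to A})$ through which $\setTrunc u$ factors.

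For injectivity of $\tilde\psi$, I would apply set-truncation induction once more: the proposition $\settrunc{e_1} = \settrunc{e_2}$ reduces the task to showing that $\settrunc{u(e_1)} = \settrunc{u(e_2)}$ in $\setTrunc{A \to A}$ implies $\settrunc{e_1} = \settrunc{e_2}$ in $\setTrunc{A \weq A}$, for arbitrary equivalences $e_1, e_2$. The hypothesis unwinds (via the characterisation of paths in set truncations) to $\Trunc{u(e_1) = u(e_2)}$; since $u$ is an embedding, $\ap u : (e_1 = e_2) \weq (u(e_1) = u(e_2))$, and applying its inverse inside the truncation yields $\Trunc{e_1 = e_2}$ and hence $\settrunc{e_1} = \settrunc{e_2}$.

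For surjectivity, consider an invertible element with chosen representative $f : A \to A$. Unwinding invertibility in the set truncation (passing the outer propositional truncation through $\Sigma$ and products, and collapsing nested truncations) yields the mere existence of some $g : A \to A$ with $f \circ g = \id_A$ and $g \circ f = \id_A$, i.e.\ a mere quasi-inverse for $f$. Since being an equivalence is a proposition, this upgrades to $f$ being genuinely an equivalence via some $\varepsilon$, and $\settrunc{(f,\varepsilon)} \in \setTrunc{A \weq A}$ is then sent to the given invertible element under $\tilde\psi$ (the invertibility datum matching automatically, as the codomain is a set). This surjectivity step is the main obstacle: one must carefully convert ``invertible in $\setTrunc{A \to A}$'' into an untruncated quasi-inverse, then invoke that being an equivalence is a proposition to lift this to an actual equivalence structure.
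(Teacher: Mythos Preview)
Your proposal is correct and follows essentially the same route as the paper. The only presentational difference is that the paper builds an explicit inverse map $(\settrunc f,!)\mapsto\settrunc{(f,!)}$ and checks the two maps are mutual pseudo-inverses, whereas you prove the single map $\setTrunc u$ is injective (using that $u$ is an embedding) and surjective; both arguments hinge on the same key step, namely that invertibility of $\settrunc f$ unwinds via \cite[Thm.~7.3.12]{HoTT} to mere equalities $fg=\id_A$, $gf=\id_A$, which suffice to conclude that $f$ is an equivalence since that is a proposition. One small wording issue: in your last sentence you say one must obtain an ``untruncated quasi-inverse'', but that is neither needed nor obtained---you only get a \emph{mere} quasi-inverse, and the point is precisely that this is enough because $\mathsf{isEquiv}(f)$ is a proposition.
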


We now see that there are two connected components of symmetries of spheres:
\begin{thm} \label{thm:higher-spheres-2}
    For any $n \geq 1$, we have an equivalence of types
    \begin{equation}   \nonumber
    \higherTrunc 0 {\Sn n= \Sn n} \weq \bn 2
    \end{equation}
\end{thm}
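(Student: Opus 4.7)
The plan is to proceed by induction on $n \geq 1$, combining the monoid isomorphism of \cref{cor:higher-spheres-2-comp} with the invertibility characterisation of \cref{lemma:invertible-truncated-Sn-pointed-weq}. For the base case $n = 1$, I would invoke \cref{thm:symmetries-of-S1}, which gives $(\Sc = \Sc) \weq \Sc + \Sc$. Since set truncation preserves equivalences and coproducts, and $\setTrunc{\Sc} \weq \bn 1$ (because $\Sc$ is connected), we obtain $\setTrunc{\Sc = \Sc} \weq \setTrunc{\Sc} + \setTrunc{\Sc} \weq \bn 1 + \bn 1 \jdeq \bn 2$.

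For the inductive step, assume $\setTrunc{\Sn n = \Sn n} \weq \bn 2$. Univalence turns $(\Sn n = \Sn n)$ into $(\Sn n \weq \Sn n)$, and hence $\setTrunc{\Sn n \weq \Sn n} \weq \bn 2$. By \cref{lemma:invertible-truncated-Sn-pointed-weq}, this set is equivalent to the subset of invertible elements of the monoid $\setTrunc{\Sn n \to \Sn n}$. By \cref{cor:higher-spheres-2-comp}, the suspension morphism $\setTrunc{\susp}$ is a monoid isomorphism from $\setTrunc{\Sn n \to \Sn n}$ onto $\setTrunc{\Sn{n+1} \to \Sn{n+1}}$, and any monoid isomorphism restricts to a bijection between the subsets of invertible elements, since it preserves both the identity and composition. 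Running \cref{lemma:invertible-truncated-Sn-pointed-weq} and univalence in reverse at index $n+1$, we conclude $\setTrunc{\Sn{n+1} = \Sn{n+1}} \weq \bn 2$.

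The main technical work has already been done in the preceding lemmas, most notably the $0$-connectedness of the suspension map (\cref{thm:susp-is-monoid-iso}) and its point-forgetting counterpart \cref{cor:higher-spheres-2-comp}, together with the invertibles lemma. The only fiddly point that remains is the routine observation that a monoid isomorphism preserves invertibility, which is immediate. One could equally well take $n = 2$ as the base case via the degree function of \cref{sec:sphere}, which identifies $\setTrunc{\Sp \to \Sp}$ with the multiplicative monoid $\ZZ$ whose invertibles are $\{\pm 1\}$; either starting point yields the same induction.
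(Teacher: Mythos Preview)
Your proof is correct and follows essentially the same route as the paper: establish the base case from \cref{thm:symmetries-of-S1}, then induct using the monoid isomorphism of \cref{cor:higher-spheres-2-comp} together with \cref{lemma:invertible-truncated-Sn-pointed-weq} to transport the count of invertible elements. The paper additionally cites the $n=2$ case (\cref{prop:symm-S2-connected-components}) as a separately established base, but as you observe this is not strictly necessary since \cref{cor:higher-spheres-2-comp} already applies for $n\geq 1$.
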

\begin{proof}
  For $n = 1$ and $n=2$, we have established this result in the previous
  sections (see \cref{thm:symmetries-of-S1} and
  \cref{prop:symm-S2-connected-components}).  For higher $n$, it follows by
  induction on $n$ with the help of \cref{cor:higher-spheres-2-comp} that the
  monoids $\setTrunc{\Sn n \to \Sn n}$ have exactly two invertible elements.
  Then, \cref{lemma:invertible-truncated-Sn-pointed-weq} allows us to conclude.
\end{proof}

\subsection{Concrete symmetries of \texorpdfstring{$\Sn n$}{Sn}}
\label{sec:-id<>id-Sn=Sn}

As in the cases of $\Sc$ and $\Sp$, we want to construct one concrete
element for each of the two connected components of symmetries of $\Sn n$
that were established in \cref{thm:higher-spheres-2}.
As before, we take $\id_{\Sn n}$ in one component.
For any type $A$, define $-\id_{\susp(A)}$ by
$\north_{\susp(A)} \mapsto \south_{\susp(A)}$,
$\south_{\susp(A)} \mapsto \north_{\susp(A)}$,
and $\mrd_{\susp(A)}(a) \mapsto \inv{\mrd_{\susp(A)}(a)}$.
Clearly, being self-inverse, $-\id_{\susp(A)}$ is a symmetry of $\susp(A)$.
What is less obvious is that $-\id_{\Sn n}$ is indeed in the
other component of $\Sn n =\Sn n$. Preparing for a proof of
$\id_{\Sn n} \neq -\id_{\Sn n}$ by induction on $n\geq 1$, 
we show that suspension and ``negation'' commute.

\begin{restatable}{lem}{lemmasuspnegcomm} \label{lem:susp-neg-commute}
For any type $A$, the two functions $\susp(-\id_{\susp(A)})$ and 
$-\id_{\susp(\susp(A))}$ of type $\susp(\susp(A)) \to \susp(\susp(A))$ 
are equal.
\end{restatable}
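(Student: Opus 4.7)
The plan is to use function extensionality to reduce the stated equality to the construction of a pointwise homotopy, and then to build that homotopy by induction on $\susp(\susp(A))$. By the universal property of $\susp(\susp(A))$ as a pushout, producing a dependent function
\[
H : \prod_{x : \susp(\susp(A))} \susp(-\id_{\susp(A)})(x) = -\id_{\susp(\susp(A))}(x)
\]
amounts to specifying two paths $h_\north : \north = \south$ and $h_\south : \south = \north$ in $\susp(\susp(A))$, together with, for every $z : \susp(A)$, a $2$-cell witnessing compatibility of $H$ with the meridian $\mrd_{\susp(\susp(A))}(z)$.

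Since the two maps take $\north$ to different points, $h_\north$ and $h_\south$ must be non-reflexive paths. A convenient choice is
\[
h_\north \defequi \mrd_{\susp(\susp(A))}(\south_{\susp(A)}), \qquad h_\south \defequi \inv{\mrd_{\susp(\susp(A))}(\north_{\susp(A)})}.
\]
Unfolding $\ap{\susp(-\id_{\susp(A)})}(\mrd(z)) = \mrd(-\id_{\susp(A)}(z))$ and $\ap{-\id_{\susp(\susp(A))}}(\mrd(z)) = \inv{\mrd(z)}$, the compatibility condition at $z$ reduces to the equation
\[
\mrd(-\id_{\susp(A)}(z)) \cdot \inv{\mrd(\north_{\susp(A)})} = \mrd(\south_{\susp(A)}) \cdot \inv{\mrd(z)}
\]
in $\susp(\susp(A))$. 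I would prove this family of equalities by a second induction on $z : \susp(A)$. At $z \jdeq \north_{\susp(A)}$ both sides are judgmentally equal, and at $z \jdeq \south_{\susp(A)}$ they agree by path algebra, as both sides reduce to the identity loop at $\north$.

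The main obstacle is the meridian case of this second induction: for $a : A$, one must supply a coherent 2-cell in $\susp(\susp(A))$ filling the square obtained by letting $z$ vary along $\mrd_{\susp(A)}(a)$. After computing the transport in the equality-type family $z \mapsto \bigl(\mrd(-\id(z)) \cdot \inv{\mrd(\north)} = \mrd(\south) \cdot \inv{\mrd(z)}\bigr)$, and using $\ap{-\id_{\susp(A)}}(\mrd(a)) = \inv{\mrd(a)}$ together with the standard behaviour of $\ap$ on inverses (so that $\ap{\mrd}$ sends $\inv{\mrd(a)}$ to the inverse of the square $\ap{\mrd}(\mrd(a))$), the goal becomes a two-dimensional coherence between squares. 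It should close using the groupoid laws for path composition and inversion, but carrying it out requires careful bookkeeping of associativity and whiskering; no fundamentally new idea beyond the path algebra developed in \cite[Chapter~2]{HoTT} is needed.
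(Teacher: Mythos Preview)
Your proposal is correct and follows the same double-induction strategy as the paper; the only cosmetic differences are that you swap which meridian is assigned to each pole (taking $h_\north = \mrd(\south_{\susp A})$ and $h_\south = \inv{\mrd(\north_{\susp A})}$, whereas the paper takes $h(\north) = \mrd(\north_{\susp A})$ and $h(\south) = \inv{\mrd(\south_{\susp A})}$), so your inner $\north$-case is the trivial one where the paper's $\south$-case is, and you use the HoTT-book composition order rather than the paper's reversed one. For the final $2$-cell you describe as closing by ``groupoid laws'', the paper makes the move explicit: it abstracts away from suspensions to arbitrary points $N,S$, paths $m_N,m_S:N=S$, and a $2$-path $\beta:m_N=m_S$, and then path-inducts first on $m_N$ and then on $\beta$ to collapse everything to reflexivity.
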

\begin{proof}
  \def\cg{\color{darkgreen}}
  \def\cb{\color{darkblue}}
  For brevity we abbreviate $N \defequi \north_{\susp \susp A}$
  and $S \defequi \south_{\susp \susp A}$.
  We construct $h(x)$ of type 
  $T(x) \defequi ( \susp(-\id_{\susp A})(x) = -\id_{\susp{\susp A}}(x) )$
  by induction on $x:\susp\susp A$ setting 
  $h(N) \defequi \mrd(\north_{\susp A}): (N=S)$ and
  $h(S) \defequi \inv{(\mrd(\south_{\susp A}))}: (S=N)$.
  To complete the definition of $h$ (and the proof of the lemma)
  we need to define, for all $y : \susp A$,
  a higher path $\ap h(\mrd(y))$ whose type is
  $\pathover {h(N)} T {\mrd(y)} {h(S)}$. By \cite[Lem.~2.11.3]{HoTT},
  and using abbreviations 
  $m_N \defequi \mrd(\north_{\susp A})$ and
  $m_S \defequi \mrd(\south_{\susp A})$,
  the latter type is equivalent to $$U(y) \defequi
  (\inv{m_S} \cdot \mrd(-\id_{\susp A}(y)) = \inv{\mrd(y)} \cdot m_N).$$
  
  We construct $g(y)$ of type $U(y)$ by induction on $y: \susp A$.
  The type $U(\north_{\susp A})$ is, after simplification (normalising), 
  $\inv{m_S} \cdot m_S = \inv{m_N} \cdot m_N$.
  For $g(\north_{\susp A})$ we take the path $c(m_S,m_N)$, 
  where $c$ is defined by double path induction, 
  setting $c(\refl{},\refl{}) \jdeq \refl{\refl{}}$.
  Similarly, $U(\south_{\susp A}) \defequi (\inv{m_S} \cdot m_N =
  \inv{m_S} \cdot m_N)$, and we take $g(\south_{\susp A}) \defequi
  \refl{\inv{m_S} \cdot {m_N}}$.  
  To complete the definition of $g$ we need to define, for all $z : A$,
  a higher path $\ap g(\mrd(z))$ whose type is
  $\pathover {c(m_S,m_N)} U {\mrd(z)} {\refl{\inv{m_S} \cdot {m_N}}}$.
  In general, transport of an arbitrary $c:U(y)$ along 
  $p:y=y'$ in $\susp A$ yields a $c': U(y')$ given by
        \begin{displaymath}
          \ap {\blank \cdot m_N} (\ap {\inv{} \circ \mrd} (p))
          \cdot c \cdot
          \inv{\left( \ap{\inv{m_S} \cdot \blank} 
          (\ap {\mrd \circ -\id_{\susp A}} (p)) \right)}
        \end{displaymath}
  This follows again from \cite[Lem.~2.11.3]{HoTT},
  or by path induction on $p$.
  
  We now instantiate this transport with $p \jdeq \mrd(z)$ for $z:A$,
  and abbreviate $\beta \defequi \ap\mrd(\mrd(z)) : m_N=m_S$.
  By unfolding the definition of $-\id_{\susp A}$ and path algebra
  we get $\ap{\mrd}(\ap{-\id_{\susp A}}(\mrd(z))) = \inv{\beta}$. 
  Here $\ap\mrd \jdeq \ap{\mrd_{\susp\susp A}}$ and
  $\mrd(z)\jdeq\mrd_{\susp A}(z)$. Further calculations show that
  to define $\ap g(\mrd(z))$ it suffices to find an element of
	\begin{equation}\label{eq:first-square-composed}
          \ap{(\blank\cdot {m_N})\circ{}^{-1}}(\beta) \cdot c(m_S,m_N)
               \cdot \ap{\inv{m_S}\cdot\blank}(\beta)
                            =
          \refl{\inv{m_S} \cdot {m_N}} \nonumber
	\end{equation}
	of type $\inv{m_S} \cdot m_N = \inv{m_S} \cdot m_N$.
  In other words, we should fill the following diagram of 2-paths:
  \begin{equation}\nonumber
  \begin{tikzcd}[column sep=4em,labels={font=\normalsize}]
    m_N^{-1}\cdot {m_N}
      \arrow[Rightarrow, swap]{dd}{\ap{(\blank\cdot {m_N})\circ\inv{}}(\beta)}
      \arrow[Leftarrow]{rr}{c(m_S,m_N)} && {\inv{m_S}}\cdot m_S
      \arrow[Leftarrow]{dd}{\ap{\inv{m_S}\cdot\blank}(\beta)}
 \\\\
    \inv{m_S}\cdot {m_N}
       \arrow[Leftarrow,swap]{rr}{\refl{\inv{m_S} \cdot {m_N}}} &&
    {\inv{m_S}}\cdot m_N
  \end{tikzcd}
  \end{equation}
  The easiest way to fill the above diagram is to abstract
  completely from suspension types to a type $T$
  with points $N,S : T$, paths $m_N,m_S : N=S$,
  and a 2-path $\beta: m_N=m_S$.
  One starts by doing path induction on $m_N$,
  reducing the task to the case $S\jdeq N$ and $m_N\jdeq \refl N$,
  and arbitrary $m_S:N=N$ and $\beta: m_N = m_S$.
  One can then do path induction on $\beta$,
  reducing the task to the case $m_S\jdeq m_N$
  and $\beta \jdeq \refl{m_N}$.
  But now, as $m_N \jdeq \refl N$, all paths appearing in the diagram,
  including $c(m_S,m_N)$, are reflexivity paths.
  Hence we can conclude by simple path algebra.
\end{proof}
A formal proof of this lemma is available in cubical Agda~\cite{doublesusp}.

\begin{cor}\label{cor:id-not-minus-id}
  For any $n \geq 1$, we have $\id_{\Sn n} \neq -\id_{\Sn n}$, and any
  symmetry of $\Sn n$ is merely equal to either $\id_{\Sn n}$ or
  $-\id_{\Sn n}$.
\end{cor}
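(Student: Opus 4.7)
The plan is to prove the first clause by induction on $n \geq 1$, and then read off the second clause from \cref{thm:higher-spheres-2}. The base case $n=1$ is delivered by \cref{thm:symmetries-of-S1} (together with the computation around \eqref{eq:id-neq-minusid-Sc} showing $\id_\Sc$ and $-\id_\Sc$ land in different components), and the base case $n=2$ by \cref{lemma:S2-id-neq-minusid}; either one by itself suffices, since the inductive step works uniformly for $n\geq 1$.

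For the step from $n$ to $n+1$, I would argue contrapositively: assume $\id_{\Sn{n+1}} = -\id_{\Sn{n+1}}$ and derive $\Trunc{\id_{\Sn n} = -\id_{\Sn n}}$, which contradicts the inductive hypothesis since the negation of an equality is a proposition. By \cref{cor:higher-spheres-2-comp}, the map $\higherTrunc 0 \susp : \setTrunc{\Sn n \to \Sn n} \to \setTrunc{\Sn{n+1} \to \Sn{n+1}}$ is an isomorphism of monoids, hence in particular injective. Functoriality of $\susp$ gives $\susp(\id_{\Sn n}) \jdeq \id_{\Sn{n+1}}$, and \cref{lem:susp-neg-commute} applied with $A \defequi \Sn{n-1}$ gives $\susp(-\id_{\Sn n}) = -\id_{\Sn{n+1}}$. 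So the assumed equality pulls back through the injection to $\settrunc{\id_{\Sn n}} = \settrunc{-\id_{\Sn n}}$ in $\setTrunc{\Sn n \to \Sn n}$, which is exactly $\Trunc{\id_{\Sn n} = -\id_{\Sn n}}$; extracting, we obtain the desired contradiction.

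For the second clause, I would use \cref{thm:higher-spheres-2}, giving $\setTrunc{\Sn n = \Sn n} \weq \bn 2$. The first clause (viewed under univalence in the type of self-equivalences) implies that the classes $\settrunc{\id_{\Sn n}}$ and $\settrunc{-\id_{\Sn n}}$ are distinct, so they exhaust the two-element set $\setTrunc{\Sn n = \Sn n}$. Given any symmetry $e$, the decidable equality on $\bn 2$ places $\settrunc e$ equal to one of these two classes, and unfolding the characterisation of equality in a set-truncation yields the disjunction $\Trunc{(e = \id_{\Sn n}) + (e = -\id_{\Sn n})}$.

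All the genuine difficulty is packaged into the earlier results \cref{lem:susp-neg-commute} and \cref{cor:higher-spheres-2-comp}; modulo those, the argument here is a short contrapositive induction at the set-truncation level, so I do not expect any serious obstacle in writing it out.
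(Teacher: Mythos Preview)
Your proposal is correct and follows essentially the same approach as the paper: induction on $n$ with base case(s) from the earlier sections, the inductive step via \cref{cor:higher-spheres-2-comp} and \cref{lem:susp-neg-commute}, and the second clause read off from \cref{thm:higher-spheres-2}. Your contrapositive phrasing of the inductive step and your explicit appeal to decidable equality on $\bn 2$ for the second clause are minor stylistic elaborations of what the paper does more tersely; substantively the arguments coincide.
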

\begin{proof}
  We know the statement for $n \equiv 1$ (\cref{eq:id-neq-minusid-Sc})
  and $n \equiv 2$ (\cref{lemma:S2-id-neq-minusid}) from the previous
  sections. The rest is done by induction
  on $n$, so we assume $\id_{\Sn n} \neq -\id_{\Sn n}$.  By
  \cref{cor:higher-spheres-2-comp}, we have
  $\susp(\id_{\Sn n}) \neq \susp(-\id_{\Sn n})$.  As
  $\id_{\Sn {n+1}} = \susp(\id_{\Sn n})$ trivially holds and we
  further have $-\id_{\Sn {n+1}} = \susp(-\id_{\Sn n})$ by
  \cref{lem:susp-neg-commute}, the claimed inequality follows.
  Therefore, the two symmetries lie in different components, and any
  symmetry lies in one of the two components given by
  \cref{thm:higher-spheres-2}.
\end{proof}

\section{Summary and comparison}
\label{sec:summary}

In \cref{fig:comparisons} we depict some relationships between the types studied so far.
In the back, we see the types of pointed maps $\Sn n \ptdto \Sn n$ on the top,
the types of maps $\Sn n\to\Sn n$ in the middle,
and the types of identifications $\Sn n = \Sn n$ on the bottom,
each related by suspension as we go left-to-right.
In the front, we see the set truncations thereof,
with the set truncation maps going back-to-front.
Additionally, we see on the front left concrete monoids that are equivalent
to the types involving the $0$-sphere.
(The map $-0 : \Sn0\to\Sn0$ is the constant map at the non-base point;
this becomes identified with $0$ after one suspension.)

On the right, we see the sequential colimits.
(Sequential colimits can be defined using pushouts and coproducts over $\NN$
or as a HIT as in~\cite[Sec.~3]{seqcolim}.)
The dotted arrows are lifts of the top back squares
since we form the suspension of a pointed map by first forgetting the pointedness.
The top two sequences in the back thus sit cofinally inside the zigzagging sequence,
and hence they have the same colimit, which we identify with the elements
of the sphere spectrum, $\loopspace\infty\mathbb{S}$.
(The sphere spectrum $\mathbb{S}$ is the spectrification of the prespectrum of spheres,
i.e., $\Sigma^\infty\Sn0$, where $\Sigma^\infty$ maps a pointed type
to the corresponding suspension spectrum,
and $\loopspace\infty$ is the right adjoint thereof, which maps a spectrum
to its underlying infinite loop type.
See~\cite[Sec.~5.3]{vandoorn:thesis} for more on spectra in HoTT.)
The sequential colimit of the self-identification groups
$\mathrm{G}(n+1) \jdeq (\Sn n=\Sn n)$
is the group of units of the sphere spectrum,
$\mathrm{G} \jdeq \mathrm{GL}_1(\mathbb{S}) = (\mathbb{S} = \mathbb{S})$.

The diagram commutes, with the exception of the dashed arrow.
This takes a pointed map $f : \Sp\ptdto\Sp$
to the composite $\tau \circ \loopspace\null(f) \circ \eta_{\Sc} : \Sc\ptdto\Sc$,
where $\tau : \loopspace\null\Sp \ptdto \Sc$ comes from the H-space structure
on the circle, cf.~\cref{eq:tau}.
This is a retraction of the suspension operation, because
if $f \jdeq \susp(g)$, then
\[
  \tau \circ \loopspace\null \susp(g) \circ \eta_{\Sc}
  = \tau \circ \eta_{\Sc} \circ g = g
\]
by naturality of $\eta$ and~\cref{lem:tau-retraction-unit-circle}.
Relative to the equivalences $(\Sc\ptdto\Sc)\weq\loopspace\null\Sc$
and $(\Sp\ptdto\Sp)\weq\loopspace2\Sp$, the dashed map
can be identified with
$\loopspace\null(\tau) : \loopspace2\Sp \ptdto \loopspace\null\Sc$,
which is a $0$-equivalence.

The homotopy groups of $\loopspace\infty\mathbb{S}$ are of course
the stable homotopy groups of the spheres, $\hgr k^{\mathrm s}$.
Since the type of units $\mathrm{G}$ embeds into $\loopspace\infty\mathbb{S}$,
it has the same homotopy groups, except for the connected components:
\[
  \hgr k(\mathrm{G}) =
  \begin{cases}
    \ZZ/2\ZZ, &\text{for $k=0$,} \\
    \hgr k^{\mathrm s}, &\text{otherwise.}
  \end{cases}
\]

Though not shown in \cref{fig:comparisons}, the types of pointed identifications
$\Sn n=_*\Sn n$ are equivalent to pullbacks of the vertical cospans in the back.
We have embeddings
$(\Sn n=_*\Sn n) \hookrightarrow \loopspace n\Sn n$ onto the subtypes
$\Omega^n_{\pm1}\Sn n$ corresponding to the generators
$\{\pm1\}$ in $\hgr n\Sn n\simeq \ZZ$. Hence the homotopy groups
of $\Sn n=_*\Sn n$ are the usual (unstable) homotopy groups of spheres,
except at degree $0$:
\[
  \hgr k(\Sn n=_* \Sn n,\pm\id) =
  \begin{cases}
    \ZZ/2\ZZ, &\text{for $k=0$,} \\
    \hgr{n+k}\Sn n, &\text{otherwise.}
  \end{cases}
\]

Our construction of degree functions establishing the above picture
follows in most respects the approach outlined in \cite[Sec.~5]{Buchholtz2018CellularCI} and formalized in~\cite{hott-agda}.
The difference is that we focus on the (wild) monoid structures given by composition
instead of the group structures given either by the cogroup structures on the spheres
or by transport from the wild groups $\loopspace n\Sn n$.
We thus give a direct proof of~\cref{cor:id-not-minus-id} that is interesting
its own right,
even though one could also conclude that $d(-\id_{\Sn n})=-1$ from
the facts that $-\id_{\Sn n}$ is the additive inverse of $\id_{\Sn n}$
and that the degree is a $0$-equivalence sending $\id_{\Sn n}$ to $1$.

\section{Interlude on Whitehead products}
\label{sec:whitehead-interlude}

Before focusing on the components of $\Sn n = \Sn n$,
we need a few general results on Whitehead products. 
Recall from \cite[Chapter 6.8]{HoTT} the join $*$ and the wedge $\vee$,
higher inductive operations on types that can be constructed
using pushouts. The proof of the following lemma is in the appendix.

\begin{restatable}{lem}{lemmaUMPjoinvsloop}
  Let $A$, $B$, and $X$ be pointed types.
  We have the following equivalence:
   $(A \ptdto (B \ptdto \loopspace{}X)) \weq (A*B \ptdto X)$.
  \label{lemma:UMP-join-vs-loopspace}
\end{restatable}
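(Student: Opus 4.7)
The plan is to characterize both sides via their universal properties and exhibit an explicit equivalence. By the universal property of the join $A*B$ as the pushout of $A \xleftarrow{\pi_1} A \times B \xrightarrow{\pi_2} B$, combined with the pointing at $\inl(a_0)$, a pointed map $A*B \ptdto X$ is equivalent to a quadruple $(f, g, H, p)$ with $f: A \to X$, $g: B \to X$, $H: \prod_{(a,b):A\times B} f(a) = g(b)$, and $p: f(a_0) = x_0$. On the other side, unfolding the nested pointed function spaces and currying yields equivalent data: a function $K: A \times B \to \loopspace\null X$, a trivialization $\ell: \prod_a K(a, b_0) = \refl_{x_0}$ (from the inner pointedness), a trivialization $m: \prod_b K(a_0, b) = \refl_{x_0}$ (from the outer pointedness), and a higher coherence at $(a_0, b_0)$ identifying $\ell(a_0)$ with $m(b_0)$.

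The equivalence is realized by the explicit formula
\[
  K(a,b) \;\defequi\; p \cdot \inv{H(a_0,b_0)} \cdot H(a,b_0) \cdot \inv{H(a,b)} \cdot H(a_0,b) \cdot \inv p,
\]
which, tracing endpoints with the paper's composition convention, is a loop at $x_0$. The trivializations $\ell$ and $m$ follow by path algebra: the expression collapses to $\refl_{x_0}$ whenever $a = a_0$ or $b = b_0$, and the coherence at $(a_0, b_0)$ is immediate.

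It remains to show that this assignment is an equivalence, which can be done by verifying that the fibers over $(K, \ell, m, \mathrm{coh})$ are contractible. A canonical preimage is $(\lambda a.\,x_0,\,\lambda b.\,x_0,\,K,\,\refl_{x_0})$, and any other preimage $(f, g, H, p)$ is identified with the canonical one using the pointing $p$ together with the homotopy $H(-,b_0)$ to produce a path from $f$ to $\lambda a.\,x_0$, and the derived pointing $q \defequi p \cdot \inv{H(a_0, b_0)} : g(b_0) = x_0$ together with $H(a_0, -)$ to produce a path from $g$ to $\lambda b.\, x_0$; the residual degrees of freedom then collapse in a contractible based path space.

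The main obstacle is the careful bookkeeping of the higher coherences in the contractibility argument. A cleaner but less self-contained route would use the identification of $A*B$ with the suspension of the smash product of $A$ and $B$, combined with the smash--product and loop--suspension adjunctions; but smash products are not set up in the paper, so the pushout-based proof is preferable.
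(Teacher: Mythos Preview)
Your proposal is correct and follows essentially the same approach as the paper. The paper constructs explicit maps in both directions: $f \mapsto \bar f$ sends a doubly-pointed $f$ to the join map that is constantly $x_0$ on point-constructors with $\ap{\bar f}(\glue(a,b)) = f(a)(b)$, and conversely $g \mapsto \hat g$ sends $g : A*B \ptdto X$ to $a \mapsto b \mapsto \loopspace\null(g)(\tau_{a,b})$ where $\tau_{a,b}$ is the loop $\glue(a_0,b_0)^{-1}\cdot\glue(a,b_0)\cdot\glue(a,b)^{-1}\cdot\glue(a_0,b)$ (up to orientation). Your formula for $K(a,b)$ is exactly $\loopspace\null(g)(\tau_{a,b})^{-1}$ once you identify $H(a,b)$ with $\ap g(\glue(a,b))$ and $p$ with the pointing path, so the two formulas differ only by an immaterial inversion. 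Your ``canonical preimage'' $(\lambda a.\,x_0,\,\lambda b.\,x_0,\,K,\,\refl_{x_0})$ is precisely the paper's $\bar f$ construction, so your fiber-contractibility argument and the paper's explicit-inverse argument are two packagings of the same content; both honestly defer the coherence bookkeeping (the paper to a cubical Agda formalization, you to ``residual degrees of freedom collapse in a contractible based path space'').
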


For the rest of this section we fix two pointed types $A$ and $B$, and we
denote $a_0$ and $b_0$ the base points of $A$ and $B$.
We repeat here the definition of the generalized Whitehead product
from \cite[Sec.~3.3]{brunerie:thesis}.
\begin{defi}
  Define the map $W = W_{A,B} : A * B \to \susp A \vee \susp B$
  by $W(\inl(a)) \defeq \inr(\north_{\susp B})$, 
  $W(\inr(b)) \defeq \inl(\north_{\susp A})$,
  and
  \[
    \ap{W}(\glue(a,b)) \defis \ap{\inl}(\eta_A(a))
    \cdot (\glue(\ast))^{-1} \cdot \ap{\inr}(\eta_B(b))
  \]
  ($\eta$ from \ref{prop:susp-loop-adjunction}, 
  $\glue(\ast): \inl(\north_{\susp A}) = \inr(\north_{\susp B})$).
   The map $W_{A,B}$ is pointed by the path 
   $\inv{(\glue(\ast))} : W(\inl(a_0)) \jdeq \inr(\north) = \inl(\north)$.
\end{defi}   
  The map $W$ makes a pushout square with the wedge inclusion $i$:
  \begin{equation}\label{eq:def-join-wedge-pushout}
    \begin{tikzcd}
      A * B \ar[r,"W"]\ar[d]\ar[pushout] & \susp A \vee \susp B\ar[d,"i"] \\
      1 \ar[r] & \susp A \times \susp B
    \end{tikzcd}
  \end{equation}
The fact that \eqref{eq:def-join-wedge-pushout} is a pushout square is
deduced using the $3{\times}3$-lemma in \cite[Prop.~3.3.2]{brunerie:thesis},
but plays no role in our arguments below.

We now also fix another pointed type $(X,x_0)$.
\begin{defi}
  \label{defn:gen-Whitehead-product}
  The generalized Whitehead product of $\alpha : \susp A \ptdto X$
  and $\beta : \susp B \ptdto X$
  is the composition
  \[
    [\alpha, \beta] \defeq (\alpha \vee \beta) \circ W_{A,B}
    : A * B \ptdto X.
  \]
  Here $\alpha \vee \beta \defeq \ind(\alpha,\beta,\inv \beta_0 \alpha_0)$
  by $\vee$-induction, with $\alpha_0,\beta_0$ the pointing paths, so
  $\inv \beta_0 \alpha_0 : \alpha(\north_{\susp A}) = \beta(\north_{\susp B})$.
\end{defi}
\begin{rem}\label{rem:whitehead-products-pis}
  If $A\jdeq \Sn p$ and $B\jdeq\Sn q$,
  then $A * B \weq \Sn{p+q+1}$~\cite[Prop.~1.8.8]{brunerie:thesis},
  so one can obtain a map (using the same denotation)
  \begin{equation*}
    [\blank,\blank] : \hgr {p+1}(X) \times \hgr {q+1}(X) \to \hgr{p+q+1}(X),
  \end{equation*}
  which is the usual Whitehead product on homotopy groups. Explicitly, given
  $a:\hgr {p+1} (X)$ and $b:\hgr {q+1} (X)$, one wants to define the element
  $[a,b]$ of the set $\hgr {p+q+1} (X)$. As we are targeting a set, one can as
  well assume $a \jdeq \settrunc{\alpha}$ and $b \jdeq \settrunc{\beta}$ for
  $\alpha:\loopspace{p+1}(X)$ and $\beta:\loopspace{q+1}(X)$. Using the
  equivalences
  \begin{equation*}
    \phi_X^{p+1} :  \left(\susp\Sn {p} \ptdto X \right) \weq \loopspace{p+1}(X), \quad
    \phi_X^{q+1} :  \left(\susp\Sn {q} \ptdto X \right) \weq \loopspace{q+1}(X),
  \end{equation*}
  one gets pointed maps
  $\inv{(\phi_X^{p+1})}(\alpha): \susp \Sn {p} \ptdto X$ and
  $\inv{(\phi_X^{q+1})}(\beta): \susp \Sn {q} \ptdto X$. The element
  $[a,b] : \hgr{p+q+1}(X)$ is then defined as
  $\settrunc{[\inv{(\phi_X^{p+1})}(\alpha),\inv{(\phi_X^{q+1})}(\beta)]}$
  (where the bracket follows \cref{defn:gen-Whitehead-product}).
\end{rem}
Fix now a pointed map $\beta : \susp B \ptdto X$ with pointing path
$\beta_0:\beta(\north) = x_0$, and consider the fiber sequence
of evaluation at $\beta$:
\[
  \conncomp{(\susp B \ptdto X)}{\beta} \stackrel \iota \longrightarrow_\ast
  \conncomp{(\susp B \to X)}{\beta} \stackrel {\ev_\beta} \longrightarrow_\ast
  X.
\]
(Taking connected components is not necessary, but
allows us to emphasize the base points of the various function types.)
This fiber sequence induces a long exact sequence:
\begin{equation*}
\begin{tikzpicture}[x=2.5cm,y=-1cm,baseline=(current bounding box.center)]
	\node (A) at (-.3,0) {$\cdots$};
	\node (B) at (.7,0) {$\hgr{n+1}(\susp B \to X, \beta)$};
	\node (C) at (2,0) {$\hgr{n+1}(X)$};
	\node (D) at (0,1) {$\hgr n(\susp B \ptdto X,\beta)$};
	\node (E) at (1.3,1) {$\hgr n(\susp B \to X,\beta)$};
	\node (F) at (2.3,1) {$\cdots$};

	\draw[arrow] (A) to node [above] {} (B);
	\draw[arrow] (B) to node [above] {$\hgr {n+1} (\ev_\beta)$} (C);
	\draw[arrow,rounded corners] 
		(C) -| 
		node[auto,text=black,pos=.9] {$\partial^n_\beta$}
		($(C.east)+(.25,.5)$) |-
	 	($(C)!.5!(D)$) -|
    	($(D.west)+(-.25,0)$) |- (D);
	\draw[arrow] (D) to node [below] {$\hgr n (\iota)$} (E);
	\draw[arrow] (E) to node [above] {} (F);

\end{tikzpicture}
\end{equation*}
The construction presented in \cite[Ch.~8.4]{HoTT} of this exact
sequence is as follows. Consider the map
$\kappa_\beta : \loopspace \null {X} \to \conncomp {(\susp B \ptdto
  X)} \beta$ that associates to a loop $\alpha$ the function
$\beta$ pointed by the path $\alpha \cdot \beta_0$. Then the long fiber
sequence is shown equivalent to the one in~\cref{fig:LES}.
\begin{figure*}
\caption{Long fiber sequence of evaluation.}\label{fig:LES}
\begin{minipage}{\textwidth}
\begin{displaymath}
  \begin{tikzcd}[column sep=huge]
    \cdots \ar[r, ptdto] &
    \loopspace {n+1} {\left(\conncomp{(\susp B \to X)}{\beta}\right)}
    \ar[r, "(-1)^{n+1}\loopspace {n+1} (\ev_\beta)"{yshift=7pt}, ptdto] \ar[d, phantom, ""{coordinate, name=Z}] &
    \loopspace {n+1} X \ar[dll,"(-1)^n\loopspace n {(\kappa_\beta})"{xshift=-10ex}, rounded corners, shorten >=.25em,
    to path={ -- ([xshift=8.5ex]\tikztostart.center)
      |- (Z)
      -| ([xshift=-16ex]\tikztotarget.center) [near start,swap]\tikztonodes -- node[at end,yshift=-3pt]{\scriptsize$\ast$} (\tikztotarget)}] \\
    \loopspace {n} {\left(\conncomp{(\susp B \to X)}{\beta}\right)}
    \ar[r, "(-1)^n\loopspace n (\iota)", ptdto] &
    \loopspace n {\left( \conncomp{(\susp B \to X)}{\beta} \right)} \ar[r,ptdto] &
    \cdots
  \end{tikzcd}
\end{displaymath}
\end{minipage}
\end{figure*}
It is then shown that the set-truncation of this sequence is a long
exact sequence of sets, and the very last paragraph of the proof of
\cite[Thm.~8.4.6]{HoTT} proceeds to replace the truncations of the form
$\setTrunc {-\loopspace n (h)}$ (which are group antimorphisms) by
$\hgr n (h)$ (which are actual group morphisms) for $n\ge1$. In
particular, the boundary map $\partial^n_\beta$ in that long exact
sequence can be taken to be $\hgr n {(\kappa_\beta)}$.

However, we wish to express $\partial^n_\beta$ at $n\ge0$ in terms of
the Whitehead product.
In order to do so, we use the equivalences $\phi_A^n$ from
\cref{def:sphere-to-eq-loops} and reason about the map
$\delta^n_\beta: (\susp \Sn n \ptdto X) \to (\Sn n \ptdto
\conncomp{(\susp B \ptdto X)}\beta)$ defined by
\begin{equation*}
  \delta^n_\beta 
  \defequi  \inv{\left( \phi_{\conncomp{(\susp B \ptdto X)}\beta}^n \right)} \circ \loopspace n (\kappa_\beta) \circ \phi_X^{n+1}
\end{equation*}
so that in particular
$\partial^n_\beta = \setTrunc{{\phi_{\conncomp{(\susp B \ptdto
        X)}\beta}^n} \circ \delta^n_\beta \circ
  \inv{(\phi_X^{n+1})}}$. Notice that this function $\delta^n_\beta$
has a simple expression through the use of \cref{lem:iterated-ap-Sigma}:
\begin{align*}
  \delta^n_\beta
  &\jdeq \inv{\left( \phi_{\conncomp{(\susp B \ptdto X)}\beta}^n \right)} \circ \loopspace n (\kappa_\beta) \circ \phi_X^{n+1} \\
  & = \inv{\left( \phi_{\conncomp{(\susp B \ptdto X)}\beta}^n \right)} \circ \loopspace n (\kappa_\beta) \circ \phi_X^{n} \circ \Phi_{\Sn n, X} \\
  & = \kappa_\beta \circ (\Phi_{\Sn n, X} (\blank) )
\end{align*}
In other words, for every $\alpha : \susp \Sn n \ptdto X$ and
$x: \Sn n$, the pointed map $\delta^n_\beta (\alpha)(x)$ is just
$\beta$ as an unpointed function, but is pointed by the path:
\begin{equation*}
  (\loopspace\null(\alpha)\circ\eta_{\Sn n})(x) \cdot \beta_0
  : \beta(\north) = x_0
\end{equation*}

To express $\delta^n_\beta$ in terms of
generalized Whitehead products, we are going to construct a commuting square of
the following form:
\begin{equation*}
  \begin{tikzcd}
    (\susp{\Sn n} \ptdto X)
    \ar[r,"\delta^n_\beta"] \ar[d,"\rho^n_\beta"'] &
    (\Sn n \ptdto \conncomp{(\susp B \ptdto X)}{\beta})
    \ar[d,"\xi^n_\beta\circ\blank","\sim" rotninety] \\
    (\Sn n * B \ptdto X) &
    (\Sn n \ptdto \conncomp{(\susp B \ptdto X)}{\wunit}) \ar[l,"\varphi^n"] \ar[l,"\sim"']
  \end{tikzcd}
\end{equation*}
where the maps $\xi^n_\beta,~\rho^n_\beta$ and $\varphi^n$ are to be
defined, and the element of $\susp B \ptdto X$ denoted $\wunit$ is the
constant map at $x_0$.

\newcommand*\miniast[2]{\nu_{#1,#2}}
Since nothing hinges on having a sphere $\Sn n$, let us generalize
and construct a commuting diagram for any pointed connected type $A$:
\begin{equation}\label{eq:Whitehead-general}
  \begin{tikzcd}
    (\susp A \ptdto X)
    \ar[r,"\delta_\beta"] \ar[d,"\rho_\beta"'] &
    (A \ptdto \conncomp{(\susp B \ptdto X)}{\beta})
    \ar[d,"\xi_\beta\circ\blank","\sim" rotninety] \\
    (A * B \ptdto X) &
    (A \ptdto \conncomp{(\susp B \ptdto X)}{\wunit}) \ar[l,"\sim"']\ar[l,"\varphi"]
  \end{tikzcd}
\end{equation}
where $\delta_\beta$ is defined on $\alpha$ as follows: for $a:A$,
$\delta_\beta(\alpha)(a) \jdeq \beta$ as an unpointed function,
pointed by the path:
\begin{equation*}
  (\delta_\beta(\alpha)(a))_0 \defequi
  (\loopspace\null(\alpha)\circ\eta_A)(a) \cdot \beta_0
  = \alpha_0 \cdot \ap{\alpha}(\eta_A(a)) \cdot \miniast\alpha\beta
  : \beta(\north) = x_0
\end{equation*}
Here we write
$\miniast\alpha\beta \defequi \alpha_0^{-1} \cdot \beta_0 :
\beta(\north) = \alpha(\north)$ for short, where
$\alpha_0:\alpha(\north) = x_0$ is the path pointing $\alpha$. Notice
that the map $\delta_\beta(\alpha)$ is indeed a pointed map: the
element $\delta_\beta(\alpha)(a_0)$ is the map $\beta$ pointed by the
path
$\alpha_0\cdot \ap\alpha(\eta_A(a_0)) \cdot \miniast \alpha \beta$;
using the fact that $\eta_A(a_0) = \refl\north$, one finds a path
$(\delta_\beta(\alpha)(a_0))_0 = \beta_0$, providing an element of
$\delta_\beta(\alpha)(a_0) = \beta$ as pointed functions.

Next define $\rho_\beta \defequi [\blank,\beta]$ as the Whitehead product, or
explicitly:
\begin{align*}
  \rho_\beta(\alpha)\,(\inl(a)) &\jdeq \beta(\north) \\
  \rho_\beta(\alpha)\,(\inr(b)) &\jdeq \alpha(\north) \\
  \ap{\rho_\beta(\alpha)} (\glue(a,b)) &= \ap{\alpha}(\eta_A(a)) \cdot
                                         \miniast\alpha\beta \cdot
                                         \ap{\beta}(\eta_B(b))
\end{align*}
The map $\rho_\beta(\alpha)$ is pointed by the path $\beta_0 :
\rho_\beta(\alpha) (\inl(a)) \jdeq \beta(\north) = x_0$.

Let us now describe $\xi_\beta$. Since $\susp B \ptdto X$ is equivalent to $B \ptdto \loopspace\null X$, and because $\loopspace\null X$ is a wild group, so is $\susp B \ptdto X$. Its unit is the map $\wunit\jdeq (\blank\mapsto x_0)$ already described. The multiplication of two elements $\gamma$ and $\gamma'$ is defined by induction:
\begin{equation*}
  \begin{aligned}
    (\gamma' \wadd \gamma) (\north) &\defequi \gamma(\north) \\
    (\gamma' \wadd \gamma) (\south) &\defequi \gamma'(\south) \\
    \ap{\gamma' \wadd \gamma} (\mrd (b) ) &\defis
    \ap{\gamma'}(\mrd(b)) \cdot \miniast{\gamma'}{\gamma} \cdot \ap\gamma(\eta_B(b))
  \end{aligned}
\end{equation*}
The map $\gamma' \wadd \gamma$ is pointed by the path $\gamma_0: \gamma(\north) =
x_0$ pointing $\gamma$ itself.
The inverse $\winv\gamma$ of an element $\gamma$ is given by 
$\gamma \circ (-\id_{\susp B})$
(where $-\id_{\susp B}$ is pointed by $\inv{\mrd(b_0)}$).  Then, there is an
equivalence $(\susp B \ptdto X) \weq (\susp B \ptdto X)$ that maps $\gamma$ to
$\winv\beta \wadd \gamma $, cf.~\cref{prop:equiv-susp-comp}.
This equivalence sends the connected component at
$\beta$ to the connected component at $\wunit$, hence providing the pointed
equivalence $\xi_\beta$. Explicitly:
\begin{equation*}
  \begin{aligned}
    \xi_\beta(\gamma)(\north) &\jdeq \gamma(\north) \\
    \xi_\beta(\gamma)(\south) &\jdeq \beta(\north) \\
    \ap{\xi_\beta(\gamma)}(\mrd b)
    &= \ap{\beta}(\eta_B(b))^{-1} \cdot
    \miniast\beta\gamma \cdot
    \ap{\gamma}(\eta_B(b))
  \end{aligned}
\end{equation*}
Notice that $\xi_\beta(\gamma)$ is pointed by the path $\gamma_0$ that points
$\gamma$.

Let us now define $\varphi$ from diagram \eqref{eq:Whitehead-general}.
Since $A$ is connected, the inclusion of
$A \ptdto \conncomp{(\susp B \ptdto X)}{\wunit}$
in $A \ptdto (\susp B \ptdto X)$ is an equivalence.
Now, use the equivalence between $\susp B \ptdto X$
and $B \to \loopspace\null X$ before
simply applying \cref{lemma:UMP-join-vs-loopspace}.
Unfolding definition, we see that the composition of
equivalences
\begin{equation*}
  \begin{tikzcd}[column sep=large]
    (A \ptdto \conncomp{(\susp B \ptdto X)}{\wunit}) \rar[hookrightarrow]
    & (A \ptdto (\susp B \ptdto X)) \dlar[
      rounded corners,
      to path={
        (\tikztostart.east) -|
        ([xshift=1em, yshift=-2em]\tikztostart.east) -|
        node[near start, fill=white] {\footnotesize$\Phi_{B,X}\circ \blank$}
        ([xshift=-1em]\tikztotarget.west) --
        (\tikztotarget.west)
      }
    ]\\
    (A \ptdto (B \ptdto \loopspace{}X)) \rar["f \mapsto \bar f"']
    & (A*B \ptdto X)
  \end{tikzcd}
\end{equation*}
can be identified with the function $\varphi$ defined by induction as follows:
\begin{equation*}
  \begin{aligned}
  \varphi(h) (\inl(a)) &\defequi x_0 \\
  \varphi(h) (\inr(b)) &\defequi x_0 \\
  \ap{\varphi(h)} (\glue(a,b))
                      &\defis (h(a))_0 \cdot \ap{h(a)}(\eta_B(b))
                        \cdot \inv{(h(a))_0}
  \end{aligned}
\end{equation*}
The map $\varphi(h)$ is pointed by the reflexivity path at $x_0$.


With the preliminaries out of the way, let us show that $\rho_\beta$ can be
identified with $\psi_\beta \defequi \varphi \circ (\xi_\beta \circ \blank)
\circ \delta_\beta$. Let $\alpha : \susp A \ptdto X$.
Unfolding the above definitions, let us first examine
$h \defequi \xi_\beta\circ (\delta_\beta(\alpha)) : 
A \ptdto \conncomp{(\susp B\ptdto X)}{\wunit}$:
\begin{align*}
  h(a)(\north) &\jdeq h(a)(\south) \jdeq \beta(\north) \\
  \ap{h(a)}(\mrd(b))
  &= \ap{\beta}(\eta_B(b))^{-1}
     \cdot \beta_0^{-1}
     \cdot (\delta_\beta(\alpha)(a))_0
     \cdot \ap{\beta}(\eta_B(b)) \\
   &= \ap{\beta}(\eta_B(b))^{-1}
     \cdot \miniast\beta\alpha
     \cdot \ap{\alpha}(\eta_A(a))
     \cdot \miniast\alpha\beta
     \cdot \ap{\beta}(\eta_B(b)) \\
  (h(a))_0
  &= (\delta_\beta(\alpha)(a))_0 \\
  &= \alpha_0 \cdot \ap{\alpha}(\eta_A(a))
           \cdot \miniast\alpha\beta
\end{align*}
Finally, we can insert this into the definition of $\varphi$
to obtain the function $g \defequi \psi_\beta(\alpha) = \varphi(h) : A * B \ptdto X$:
\begin{alignat*}{9}
  g(\inl(a))
  &&&\jdeq &\; &x_0 \\
  g(\inr(b))
  &&&\jdeq &&x_0 \\
  \ap{g}(\glue(a,b))
  &&&=&& (h(a))_0 \cdot \ap{h(a)}(\eta_B(b)) \cdot \inv{(h(a))_0} \\
  &&&=&& (h(a))_0 \cdot \ap{h(a)}(\inv{\mrd(b_0)} \cdot\mrd b)
    \cdot \inv{(h(a))_0} \\
  &&&=&& (h(a))_0 \cdot \inv{\ap{h(a)}(\mrd{b_0})}
    \cdot \ap{h(a)}(\mrd (b)) \\
  &&&&& \phantom{=} \cdot \inv{(h(a))_0} \\
  &&&=&& \bigl(\alpha_0 \cdot \ap{\alpha}(\eta_A(a))
    \cdot \miniast\alpha\beta\bigr)
    \cdot \bigl(\miniast\beta\alpha
    \cdot \ap{\alpha}(\eta_A(a))
    \cdot \miniast\alpha\beta\bigr)^{-1} \\
  &&&&&\phantom{=} \cdot
    \bigl(\ap{\beta}(\eta_B(b))^{-1}
    \cdot \miniast\beta\alpha
    \cdot \ap{\alpha}(\eta_A(a))
    \cdot \miniast\alpha\beta\\
  &&&&&\phantom{=}   \cdot \ap{\beta}(\eta_B(b))\bigr)
    \cdot \bigl(\alpha_0 \cdot \ap{\alpha}(\eta_A(a))\cdot \miniast\alpha\beta\bigr)^{-1} \\
  &&&=&& \beta_0
    \cdot \ap{\beta}(\eta_B(b))^{-1}
    \cdot \miniast\beta\alpha
    \cdot \ap{\alpha}(\eta_A(a))
    \cdot \miniast\alpha\beta\\
  &&&&&\phantom{=}   \cdot \ap{\beta}(\eta_B(b))
    \cdot \miniast\beta\alpha
    \cdot \ap{\alpha}(\eta_A(a))^{-1}
    \cdot \alpha_0^{-1}
\end{alignat*}
The path $g_0$ pointing $g$ is, by definition of $\varphi$,
the reflexivity path at $x_0$.
It only remains to construct a path $H : \rho_\beta(\alpha) = g$ as
mere functions such that the type $H(\inl(a_0)) = \beta_0$ has an
element. We proceed by induction on an element of the join:
\begin{align*}
  H(\inl(a)) &\defequi \alpha_0 \cdot \ap{\alpha}(\eta_A(a))
              \cdot \miniast\alpha\beta : \beta(\north) = x_0 \\
  H(\inr(b)) &\defequi \beta_0 \cdot \ap{\beta}(\eta_B(b))^{-1}
              \cdot \miniast\beta\alpha : \alpha(\north) = x_0
\end{align*}
Finally, we must produce an element of
\begin{equation}\label{eq:big-whitehead-pathover}
  \prod_{a\from A}\prod_{b\from B} \left(
    \pathover {H(\inl(a))} {z\mapsto \rho_\beta(\alpha)(z)=g(z)}
    {\glue(a,b)} {H(\inr(b))}
  \right)
\end{equation}
which corresponds to filling the square in~\cref{fig:big-whitehead-square}.
\begin{figure}
  \[
  \begin{tikzcd}[column sep=large]
    \beta(\north)\ar[r,"{\miniast\alpha\beta}"]
    \ar[ddd,"{\ap{\beta}(\eta_B(b))}"]
    \ar[rrrddd,tikzequal,dashed] &
    \alpha(\north)\ar[r,"{\ap{\alpha}(\eta_A(a))}"]
    \ar[rrdd,tikzequal,dashed] &
    \alpha(\north)\ar[r,"{\alpha_0}"]
    \ar[rd,tikzequal,dashed] &
    x_0\ar[d,"{\alpha_0^{-1}}"] \\
    & & &
    \alpha(\north)\ar[d,"{\ap{\alpha}(\eta_A(a))^{-1}}"] \\
    & & &
    \alpha(\north)\ar[d,"{\miniast\beta\alpha}"] \\
    \beta(\north)\ar[ddd,"{\miniast\alpha\beta}"]
    \ar[rrrd,tikzequal,dashed] & & &
    \beta(\north)\ar[d,"{\ap{\beta}(\eta_B(b))}"] \\
    & & &
    \beta(\north)\ar[d,"{\miniast\alpha\beta}"] \\
    & & &
    \alpha(\north)\ar[d,"{\ap{\alpha}(\eta_A(a))}"] \\
    \alpha(\north)\ar[ddd,"{\ap{\alpha}(\eta_A(a))}"]
    \ar[rrru,tikzequal,dashed] & & &
    \alpha(\north)\ar[d,"{\miniast\beta\alpha}"] \\
    & & &
    \beta(\north)\ar[d,"{\ap{\beta}(\eta_B(b))^{-1}}"] \\
    & & &
    \beta(\north)\ar[d,"{\beta_0}"] \\
    \alpha(\north)\ar[r,"{\miniast\beta\alpha}"']
    \ar[rrruuu,tikzequal,dashed] &
    \beta(\north)\ar[r,"{\ap{\beta}(\eta_B(b))^{-1}}"']
    \ar[rruu,tikzequal,dashed] &
    \beta(\north)\ar[r,"{\beta_0}"']
    \ar[ru,tikzequal,dashed] & x_0
%
%
%
%
%
%
%
%
%
%
%
%
%
%
%
%
%
%
%
  \end{tikzcd}
  \]
  \caption{The square corresponding to \eqref{eq:big-whitehead-pathover}.}
  \label{fig:big-whitehead-square}
\end{figure}
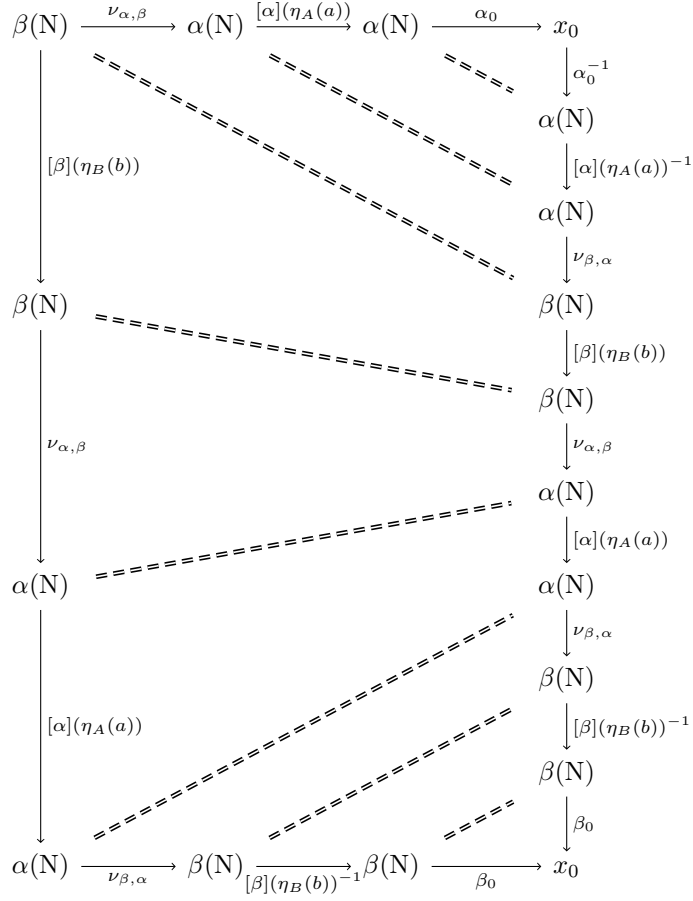
We fill this as indicated. This proves that
$\rho_\beta(\alpha)$ and $g$ are equal as mere functions.
We must still check that $H(\inl(a_0)) = \beta_0$.
This follows directly from $\ap{\alpha}(\eta_A(a_0)) = \refl{}$.

Specializing again to the case where $B \jdeq \Sn q$ is a sphere,
we have proved the following:
\begin{thm}\label{thm:sort-of-ehp}
  For any $\beta : \Sn{q+1} \ptdto X$, 
  there is a long exact sequence
  \[
    \begin{tikzcd}
      \cdots \ar[r] &
      \hgr{n+1}(\Sn{q+1} \to X, \beta)
      \ar[r] \ar[d, phantom, ""{coordinate, name=Z}] &
      \hgr{n+1}(X) \ar[dll, "\partial^{n,q}_\beta"{xshift=-10ex}, rounded corners,
      to path={ -- ([xshift=8.5ex]\tikztostart.center)
        |- (Z) 
        -| ([xshift=-13ex]\tikztotarget.center) [near start,swap] \tikztonodes -- (\tikztotarget)}] \\
      \hgr{n+q+1}(X) \ar[r] &
      \hgr{n}(\Sn{q+1} \to X,\beta) \ar[r] &
      \cdots
    \end{tikzcd}
  \]
  where the connecting homomorphisms are Whitehead products
  $\partial^{n,q}_\beta = [\blank,\settrunc{\phi^{q+1}_X(\beta)}]$ (where the
  bracket refers to the one defined in
  \cref{rem:whitehead-products-pis}).
\end{thm}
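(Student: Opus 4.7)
The plan is to assemble the theorem from the long exact sequence of the evaluation fiber sequence together with the identification of the boundary map as a Whitehead product. Specializing the earlier discussion to $A \jdeq \Sn n$ and $B \jdeq \Sn q$, I would start with the fiber sequence
\[
  \conncomp{(\Sn{q+1} \ptdto X)}{\beta} \stackrel{\iota}{\ptdto}
  \conncomp{(\Sn{q+1} \to X)}{\beta} \stackrel{\ev_\beta}{\ptdto} X
\]
and take its long exact sequence of homotopy groups as in \cite[Ch.~8.4]{HoTT}. As recalled in the paragraph following \cref{fig:LES}, the boundary operator $\partial^n_\beta : \hgr{n+1}(X) \to \hgr n(\Sn{q+1} \ptdto X, \beta)$ can be presented as $\hgr n(\kappa_\beta)$, where $\kappa_\beta : \loopspace\null X \to \conncomp{(\Sn{q+1} \ptdto X)}{\beta}$ sends a loop $\alpha$ to $\beta$ equipped with the shifted pointing $\alpha \cdot \beta_0$.

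Next, I would transport this description along the equivalences $\phi_X^{n+1} : (\susp \Sn n \ptdto X) \weq \loopspace{n+1}(X)$ and $\phi^n_{\conncomp{(\Sn{q+1}\ptdto X)}{\beta}}$, so that at the representative level $\hgr n(\kappa_\beta)$ becomes (the set-truncation of) the map $\delta^n_\beta$, whose explicit description via \cref{lem:iterated-ap-Sigma} is already available: $\delta^n_\beta(\alpha)(a)$ is simply $\beta$ repointed by $(\loopspace\null(\alpha)\circ\eta_{\Sn n})(a)\cdot\beta_0$. After this identification, the target group $\hgr n(\Sn{q+1}\ptdto X, \beta)$ is literally $\hgr{n+q+1}(X)$ once we precompose with $\xi_\beta\circ\blank$ (a pointed equivalence, since it only transports along inversion in the wild group $\Sn{q+1}\ptdto X$) and with $\varphi$ (an equivalence by \cref{lemma:UMP-join-vs-loopspace} combined with the fact that $A$ connected lets us drop the component condition).

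The heart of the proof is then the commutativity of the square \eqref{eq:Whitehead-general}, i.e.\ the identification
\[
  \rho_\beta \;=\; \varphi \circ (\xi_\beta \circ \blank) \circ \delta_\beta,
\]
which is exactly the explicit computation carried out in the paragraphs above, culminating in the diagram of \cref{fig:big-whitehead-square} that fills the pathover over $\glue(a,b)$. Passing to set-truncations, $\setTrunc{\rho_\beta}$ is by \cref{defn:gen-Whitehead-product} and \cref{rem:whitehead-products-pis} the Whitehead product $[\blank,\settrunc{\phi^{q+1}_X(\beta)}]$, so the boundary map of the long exact sequence agrees with this Whitehead product, which is what the theorem asserts.

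The main obstacle is bookkeeping rather than a deep new idea: one must verify that, after all the juggling with $\phi^n$'s, $\Phi$'s, and the three equivalences $\xi_\beta, \varphi, \delta_\beta$, the sign conventions used to turn the alternating sequence of loop-space antimorphisms in \cref{fig:LES} into actual group homomorphisms still produce the Whitehead product on the nose, rather than its inverse. I would handle this by checking the $q=0$, $n=0$ case by hand (where $\phi^1_X = \loopspace\null(\blank)(\Sloop)$ is transparent) and then observing that the general case reduces to it via the naturality of $\Phi_{A,B}$ in the first variable together with \cref{lem:iterated-ap-Sigma}.
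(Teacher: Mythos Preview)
Your proposal is correct and follows essentially the same approach as the paper: the long exact sequence of the evaluation fiber sequence, the identification $\partial^n_\beta = \hgr n(\kappa_\beta)$, the transport along the $\phi$'s to $\delta^n_\beta$, and the commutativity of the square~\eqref{eq:Whitehead-general} established by the explicit computation culminating in \cref{fig:big-whitehead-square}. The only divergence is your final paragraph: the paper does \emph{not} reduce the sign bookkeeping to a base case $q=0$, $n=0$ plus naturality, but instead carries out the computation for general connected $A$ and arbitrary $B$ directly, so that the identification $\rho_\beta(\alpha) = \varphi(\xi_\beta\circ\delta_\beta(\alpha))$ holds on the nose as pointed maps and no residual sign ambiguity remains to be resolved.
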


\section{Exploring the components of
  \texorpdfstring{$\Sn n = \Sn n$}{Sn=Sn}}
\label{sec:structure-components}

Having established that there are exactly two connected components of $(\Sn n = \Sn n)$, we want to examine the structure of each of these components.
The first observation we make is simple:

\begin{prop} \label{prop:general-case-conn-comp-equiv}
    For all $n \geq 1$, the two connected components of $(\Sn n = \Sn n)$ are equivalent.
\end{prop}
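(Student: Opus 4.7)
The plan is to invoke \cref{prop:equiv-susp-comp} directly, taking $A \defequi \UU$, $a \defequi \Sn n$, and $p \defequi {-\id_{\Sn n}}$ viewed, via univalence, as a self-identification of $\Sn n$. Because $-\id_{\Sn n}$ is a self-equivalence of $\Sn n$ for every $n\geq 1$ (the suspension case of \cref{lem:minus-id-equivalence}), $p$ is a legitimate loop in $\UU$ at $\Sn n$. \cref{prop:equiv-susp-comp} then produces an equivalence
\[
    \conncomp{(\Sn n = \Sn n)}{\refl{\Sn n}} \weq \conncomp{(\Sn n = \Sn n)}{-\id_{\Sn n}},
\]
and under the transparent treatment of univalence adopted in the preliminaries, $\refl{\Sn n}$ is identified with $\id_{\Sn n}$, so the left-hand side is precisely $\conncomp{(\Sn n = \Sn n)}{\id_{\Sn n}}$. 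This is the desired equivalence.

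Concretely, the underlying map is post-composition with $-\id_{\Sn n}$: a symmetry $\psi$ in the component of $\id_{\Sn n}$ is sent to $\psi \circ {-\id_{\Sn n}}$, which lies in the component of $-\id_{\Sn n}$. A pseudo-inverse is given by the same operation, since $-\id_{\Sn n}$ is self-inverse, so the composition yields a symmetry merely equal to $\psi$.

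I do not anticipate any real obstacle here. The bulk of the work is already contained in \cref{prop:equiv-susp-comp}, which was deliberately stated for an arbitrary point $a$ and an arbitrary loop $p$, making no use of the specific structure of $\Sp$. The only subtlety is verifying that $-\id_{\Sn n}$ is self-inverse uniformly in $n\geq 1$, which follows from the suspension description $\Sn n \jdeq \susp \Sn{n-1}$ and the sketch proof of \cref{lem:minus-id-equivalence}. Consequently the proof reduces to a single application of a previously established general lemma.
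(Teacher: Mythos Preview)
Your proof is correct and follows essentially the same idea as the paper: apply \cref{prop:equiv-susp-comp} with $p = -\id_{\Sn n}$ to get an equivalence between the component of $\id_{\Sn n}$ and that of $-\id_{\Sn n}$. The paper's own proof differs only cosmetically, splitting into cases ($n=1$ handled by the explicit computation of \cref{sec:circle-case}, $n=2$ and $n\ge 3$ by \cref{prop:equiv-susp-comp} together with \cref{cor:id-not-minus-id}), whereas you give a single uniform argument; this is fine since \cref{prop:equiv-susp-comp} already applies at $n=1$. One small point worth making explicit: to conclude that you have related \emph{the two} components rather than a component to itself, you are implicitly relying on \cref{cor:id-not-minus-id} to know that $\id_{\Sn n}$ and $-\id_{\Sn n}$ lie in distinct components.
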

\begin{proof}
    For $n=1$ and $n=2$, this statement is given by the main result of \cref{sec:circle-case} and by \cref{prop:equiv-susp-comp}, respectively.
	For $n \geq 3$, it follows from \cref{prop:equiv-susp-comp} and \cref{cor:id-not-minus-id}.
\end{proof}

\cref{prop:general-case-conn-comp-equiv} means that we can restrict ourselves to the connected component of $\id$ (or $\refl{}$) in $(\Sn n = \Sn n)$. From now on, we use $\id$ as the implicit base point of $(\Sn n = \Sn n)$.
The rest of this subsection is devoted to calculating the fundamental group of this type, which also allows us to see that the equivalence $(\Sn 1 = \Sn 1) = (\Sn 1 + \Sn 1)$ does not generalize for $n > 1$.

The proof of the following lemma is in the appendix.
\begin{restatable}{lem}{lemmafgrSnptdtoSn}\label{lem:fgr-ptd-endomaps-Sn}
  For any $n\geq 1$, there is a group isomorphism:
  \begin{displaymath}
    \xi_n: \fgr(\Sn n \ptdto \Sn n, \id) \weq \hgr {n+1} (\Sn n)
  \end{displaymath}
\end{restatable}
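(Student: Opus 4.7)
The plan is to chain the equivalence $\phi_{\Sn n}^n : (\Sn n \ptdto \Sn n) \weq \loopspace^n \Sn n$ from \cref{def:sphere-to-eq-loops} with a basepoint-changing equivalence coming from the H-space structure on $\loopspace^n \Sn n$.

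First I will observe that $\phi_{\Sn n}^n$ is an equivalence of bare types, and becomes a pointed equivalence once the source is pointed at $\id_{\Sn n}$ and the target at $e_n \defequi \phi_{\Sn n}^n(\id_{\Sn n})$. Applying $\loopspace$ and then $\setTrunc{-}$ yields an isomorphism
\[
  \fgr(\Sn n \ptdto \Sn n, \id_{\Sn n}) \weq \setTrunc{e_n =_{\loopspace^n \Sn n} e_n},
\]
which is a group isomorphism because $\loopspace$ of any pointed map preserves path concatenation. Next I will use that for $n\geq 1$ the type $\loopspace^n \Sn n$ is an H-space under path concatenation: the translation $m_{e_n} : \alpha \mapsto e_n \cdot \alpha$ is a self-equivalence of $\loopspace^n \Sn n$ (with inverse $\alpha \mapsto e_n^{-1}\cdot\alpha$) sending $\refl$ to $e_n$. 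Hence $\ap{m_{e_n}}$ gives an equivalence $\loopspace^{n+1}(\Sn n) \weq (e_n = e_n)$, again a group isomorphism on truncations because $\ap$ of any function preserves concatenation of paths. Inverting this equivalence and composing with the first map gives the desired $\xi_n$.

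The one subtle point -- and really the only obstacle worth naming -- is that $e_n$ and $\refl$ lie in \emph{different} connected components of $\loopspace^n \Sn n$: the class of $e_n$ in $\hgr n(\Sn n) \weq \ZZ$ is the generator $1$, as one verifies by induction on $n$ starting from $\phi_{\Sc}^1(\id_{\Sc}) = \Sloop$. Consequently, the change-of-basepoint step cannot be obtained from path-connectedness alone; the H-space (equivalently, group-like loop-space) structure is essential. Everything else -- the formal facts that $\loopspace(\phi_{\Sn n}^n)$ and $\ap{m_{e_n}}$ preserve path concatenation, hence descend to group homomorphisms on $\pi_1$ -- is routine bookkeeping.
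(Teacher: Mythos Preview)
Your proof is correct and follows essentially the same route as the paper's. The paper also uses $\phi_{\Sn n}^n$, notes that it sends $\id_{\Sn n}$ to a point $e_n$ in the component of a generator (not the iterated $\refl{}$), and then fixes the basepoint mismatch via the translation $\psi_n(\alpha)\defequi e_n^{-1}\cdot\alpha$ on $\loopspace^n\Sn n$; the only cosmetic difference is that the paper composes $\psi_n$ with $\phi_{\Sn n}^n$ \emph{before} applying $\fgr$, whereas you take $\loopspace$ first and then invert $\ap{m_{e_n}}$, which amounts to the same thing since $\psi_n$ is the inverse of your $m_{e_n}$.
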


\begin{rem}\label{rem:Sn-ptdto-Sn-equivalent-components}
  More generally, since $\Sn n \ptdto \Sn n$ is a wild group,
  given any $\alpha,\beta : \Sn n \ptdto \Sn n$,
  there is an equivalence between the corresponding components
  \[
    \conncomp{(\Sn n \ptdto \Sn n)}{\alpha} \weq
    \conncomp{(\Sn n \ptdto \Sn n)}{\beta}.
  \]
  This is given by $\gamma \mapsto \gamma \winv \alpha \wadd \beta$,
  where we write the group operation additively, cf.~\cref{prop:equiv-susp-comp}.
  This induces, for any $k \geq 1$, particular group isomorphisms of types
  \[
    \hgr k (\Sn n \ptdto \Sn n, \alpha) \weq
    \hgr k (\Sn n \ptdto \Sn n, \beta).
  \]
\end{rem}

We have now every tool needed to prove the result we
alluded to in \cref{sec:sphere}.%
\begin{restatable}{thm}{spsym}
  \label{thm:Sp-sym=Z/2}
  The fundamental group of the type of symmetries of $\Sp$ is the
  cyclic group of order $2$. That is, the following type has an
  element:
  \begin{equation*}
    \fgr(\Sp = \Sp, \id) = \ZZ \slash 2\ZZ
  \end{equation*}
  where both sides of the equality are considered as groups.
\end{restatable}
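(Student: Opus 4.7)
The plan is to use the EHP-type long exact sequence from \cref{thm:sort-of-ehp} to reduce the computation to a single Whitehead-product calculation.

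First, I would observe that the forgetful map $\conncomp{(\Sp = \Sp)}{\id} \to \conncomp{(\Sp \to \Sp)}{\id}$ is an equivalence: being an equivalence is a proposition, and every element of the right-hand side is merely equal to $\id_{\Sp}$ and is hence itself an equivalence. This reduces the problem to computing $\fgr(\Sp \to \Sp, \id)$.

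Next, I would instantiate \cref{thm:sort-of-ehp} with $X \defeq \Sp$, $q \defeq 1$ and $\beta \defeq \id_{\Sp} : \Sn{2} \ptdto \Sp$. Writing $\iota \defeq \settrunc{\phi^{2}_{\Sp}(\id_{\Sp})}$ for the resulting generator of $\hgr 2(\Sp) \weq \ZZ$, the relevant segment of the long exact sequence reads
\[
    \hgr 2(\Sp) \xrightarrow{\;[\blank,\iota]\;} \hgr 3(\Sp) \to \hgr 1(\Sp \to \Sp, \id) \to \hgr 1(\Sp),
\]
and since $\hgr 1(\Sp) = 0$ by connectedness of $\Sp$, the middle map is surjective. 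Exactness therefore identifies $\fgr(\Sp \to \Sp, \id)$ with the cokernel of the Whitehead-product map $[\blank,\iota]$.

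It remains to compute this image. Since $\iota$ generates $\hgr 2(\Sp)$, the image is generated by the single element $[\iota,\iota] \in \hgr 3(\Sp) \weq \ZZ$. Brunerie's thesis establishes that $[\iota,\iota] = \pm 2\eta$, where $\eta$ is a generator of $\hgr 3(\Sp)$; this is essentially the content of (one half of) the calculation $\hgr 4(\Sn 3) = \ZZ/2\ZZ$ mentioned in the introduction. Accepting this, the image is $2\ZZ$, the cokernel is $\ZZ/2\ZZ$, and combining with the identification of fundamental groups above gives the claimed group equality $\fgr(\Sp = \Sp, \id) = \ZZ/2\ZZ$.

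The main obstacle is the Whitehead-product identity $[\iota,\iota] = \pm 2\eta$; this is the one genuinely nontrivial ingredient and is where the weight of Brunerie's work enters. Once that is in hand, the rest amounts to chasing a few terms of the long exact sequence and the elementary observation that the component of $\id$ in $(\Sp=\Sp)$ coincides with that in $(\Sp\to\Sp)$.
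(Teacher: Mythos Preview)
Your proposal is correct and follows essentially the same route as the paper: instantiate \cref{thm:sort-of-ehp} at $X=\Sp$, $q=1$, $\beta=\id_{\Sp}$, use $\hgr 1(\Sp)=0$ to identify $\fgr(\Sp\to\Sp,\id)$ with the cokernel of $[\blank,\iota]$, and invoke Brunerie's computation to see that this cokernel is $\ZZ/2\ZZ$. The only cosmetic remark is that $\hgr 1(\Sp)=0$ follows from \emph{simple} connectedness of $\Sp$, not mere connectedness.
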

\begin{proof}[Sketch of proof]
  Using the long exact sequence of \cref{thm:sort-of-ehp} with
  $X \jdeq \Sp$, $n=q=1$ and $\beta \jdeq \id_\Sp$ , we get in
  particular an exact sequence:
  \begin{displaymath}
    \hgr 2(\Sp) \overset{[\blank, i_2]} \to \hgr 3(\Sp)
    \overset\kappa \to \fgr (\Sp \to \Sp, \id_\Sp) \to 0
  \end{displaymath}
  where $i_2$ generates the group $\hgr 2(\Sp)$ (equivalent to
  $\ZZ$). This shows that:
  \begin{displaymath}
    \fgr(\Sp = \Sp, \id) \weq \fgr (\Sp \to \Sp, \id_\Sp) \weq \hgr 3(\Sp) / \langle [i_2,i_2] \rangle
  \end{displaymath}
  The element $[i_2,i_2]$ is precisely that studied by Brunerie in its
  thesis: It generates a subgroup of index $2$ in the infinite cyclic
  group $\hgr 3(\Sp)$.
\end{proof}

The result generalizes to higher spheres. However, as it can be
expected from the use of Brunerie's number in the case of the sphere
$\Sp$, the results goes through for $n\geq 3$ for different
and actually simpler reasons, as shown now.
\begin{restatable}{thm}{thmfgrSntoSn}
  \label{thm:fund-grp-of-symmetries-based}%
  For $n \geq 3$, and any function $f : \Sn n \to \Sn n$, the
  fundamental group of the connected component of $f$ in the type
  $\Sn n \to \Sn n$ is the cyclic group of order $2$:
  \begin{equation*}
    \fgr(\Sn n \to \Sn n, f) = \ZZ \slash 2\ZZ
  \end{equation*}
  as groups.
\end{restatable}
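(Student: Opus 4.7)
The plan is to reduce the unpointed statement to the pointed one. By \cref{lem:fgr-ptd-endomaps-Sn}, $\fgr(\Sn n \ptdto \Sn n, \id) \weq \hgr{n+1}(\Sn n)$, and Brunerie's theorem (cited in the introduction) identifies the right-hand side with $\ZZ/2\ZZ$ for $n \geq 3$. The passage from pointed to unpointed mapping types is essentially free in this range, because the lower homotopy of $\Sn n$ then vanishes.

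Concretely, I would set $b \defequi f(\north)$ and consider the evaluation fiber sequence whose fiber over $b$ is the pointed mapping type $(\Sn n, \north) \ptdto (\Sn n, b)$, pointed at $(f, \refl_b)$. Its long exact sequence of homotopy groups contains the segment
\[
  \hgr 2(\Sn n) \longrightarrow \fgr((\Sn n,\north) \ptdto (\Sn n, b), (f,\refl_b)) \longrightarrow \fgr(\Sn n \to \Sn n, f) \longrightarrow \fgr(\Sn n).
\]
Since $\Sn n$ is $(n{-}1)$-connected with $n \geq 3$, both $\hgr 2(\Sn n)$ and $\fgr(\Sn n)$ are trivial, so the middle arrow is a group isomorphism.

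To eliminate the asymmetry between the basepoints $\north$ and $b$, I use that $\Sn n$ is connected to merely obtain a path $p : \north = b$. Any such $p$ gives a pointed equivalence $(\Sn n, b) \ptdweq (\Sn n, \north)$ and hence an equivalence of pointed mapping types $(\Sn n, \north) \ptdto (\Sn n, b) \weq \Sn n \ptdto \Sn n$ carrying $(f,\refl_b)$ to some $g : \Sn n \ptdto \Sn n$; this suffices to give an isomorphism of fundamental groups, which are sets, so the mere existence of $p$ is enough. By \cref{rem:Sn-ptdto-Sn-equivalent-components}, every connected component of the wild group $\Sn n \ptdto \Sn n$ is equivalent to the component of $\id$, hence $\fgr(\Sn n \ptdto \Sn n, g) \weq \fgr(\Sn n \ptdto \Sn n, \id)$, and the chain closes with \cref{lem:fgr-ptd-endomaps-Sn}.

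The main obstacle is really only bookkeeping: $f$ is not assumed pointed, and the path $p : \north = f(\north)$ is only merely given, so the pointed map $g$ lives in a component that is not canonical. Both ambiguities are harmless, the latter by \cref{rem:Sn-ptdto-Sn-equivalent-components} and the former by set-truncatedness of fundamental groups. Unlike the case $n = 2$ treated in \cref{thm:Sp-sym=Z/2}, no Whitehead-product computation is needed here: the analogous connecting homomorphism coming from \cref{thm:sort-of-ehp} has source $\hgr 2(\Sn n) = 0$ for $n \geq 3$, so the relevant quotient is trivial.
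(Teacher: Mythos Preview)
Your proof is correct and follows essentially the same route as the paper: the evaluation fiber sequence, the long exact sequence giving an isomorphism between the fundamental groups of the pointed and unpointed mapping types (using that $\hgr 2(\Sn n)$ and $\fgr(\Sn n)$ vanish for $n\ge 3$), then \cref{rem:Sn-ptdto-Sn-equivalent-components} and \cref{lem:fgr-ptd-endomaps-Sn} to reach $\hgr{n+1}(\Sn n)=\ZZ/2\ZZ$. The only cosmetic difference is that the paper observes at the outset that the goal is a proposition (since $\ZZ/2\ZZ$ has no nontrivial automorphisms) and hence assumes $f$ is pointed at $\north$ from the start, whereas you carry the basepoint $b=f(\north)$ through the fiber sequence and transport along a merely-given path afterwards; both are fine.
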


\begin{proof}[Sketch of proof]
  Because $\fgr(\Sn n \to \Sn n, f) = \ZZ \slash 2\ZZ$ is a
  proposition and $\Sn n$ is connected, we can consider $f$ to be
  pointed. By considering the fiber sequence associated with the
  evaluation $\ev_\ast : \Sn n \to \Sn n
  \ptdto
  \Sn n$ at the north
  pole, we get a long exact sequence that allows to establish
  $\fgr(\Sn n \to \Sn n, f) \weq \fgr(\Sn n \ptdto \Sn n, f)$. By
  \cref{rem:Sn-ptdto-Sn-equivalent-components} and the wild adjunction
  $\susp{} \dashv \loopspace\null{}$, the latter is equivalent to
  $\hgr {n+1} (\Sn n)$, that is known to be $\ZZ/2\ZZ$.
\end{proof}

\begin{thm} \label{thm:fund-grp-of-symmetries} For $n \geq 3$, the
  fundamental group of the type of symmetries of $\Sn n$ is the cyclic
  group of order $2$:
    \begin{equation*}
    \fgr(\Sn n = \Sn n, \id) = \ZZ \slash 2\ZZ
    \end{equation*}
    as groups.
\end{thm}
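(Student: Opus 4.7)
The plan is to deduce this directly from the previous theorem. The key observation is that, under univalence, the type $\Sn n = \Sn n$ is the subtype of $\Sn n \to \Sn n$ consisting of those functions that are equivalences; and since being an equivalence is a proposition, this subtype inclusion is an embedding that restricts to an equivalence on each connected component that contains at least one equivalence. In particular, since $\id_{\Sn n}$ is manifestly an equivalence, the component of $\id$ in $\Sn n = \Sn n$ coincides, as a pointed type, with the component of $\id$ in $\Sn n \to \Sn n$.

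More precisely, I would argue as follows. Let $\iota : (\Sn n = \Sn n) \to (\Sn n \to \Sn n)$ be the canonical map, sending $p$ to $\trp[X \mapsto X]{p}$; by univalence this factors as the identity on $\Sn n \weq \Sn n$ followed by the forgetful map to $\Sn n \to \Sn n$, which is an embedding because $\isEquiv(f)$ is a proposition. Now $\iota$ preserves the base point $\id_{\Sn n}$ on both sides, so it induces a pointed map
\[
  \conncomp{(\Sn n = \Sn n)}{\id} \ptdto \conncomp{(\Sn n \to \Sn n)}{\id}.
\]
This map is an equivalence: it is an embedding as a restriction of $\iota$, and it is surjective on components because any $f : \Sn n \to \Sn n$ with $\Trunc{f = \id_{\Sn n}}$ is merely an equivalence, hence an equivalence (being an equivalence is a proposition, so we may push the truncation through).

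Passing to loop spaces and then to set truncations, we obtain
\[
  \fgr(\Sn n = \Sn n, \id) \weq \fgr(\Sn n \to \Sn n, \id),
\]
as groups (both are computed from equivalent connected components, and the equivalence above is pointed, hence preserves the wild group structure of the loop space). Applying \cref{thm:fund-grp-of-symmetries-based}, which we have already established for $n \geq 3$, the right-hand side is $\ZZ/2\ZZ$, and we are done.

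There is no real obstacle here: the work has all been done in \cref{thm:fund-grp-of-symmetries-based}, and the only thing to check carefully is the identification of components between the type of self-identifications and the type of self-maps. The potentially subtle point — that a map merely equal to the identity is automatically an equivalence — is immediate from the fact that $\isEquiv$ is a proposition, so this step is essentially bookkeeping rather than a genuinely difficult argument.
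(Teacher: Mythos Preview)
Your proposal is correct and is essentially the paper's approach: the paper simply says the result is a direct corollary of \cref{thm:fund-grp-of-symmetries-based} applied at $f \defequi \id_{\Sn n}$, leaving implicit the identification $\fgr(\Sn n = \Sn n,\id) \weq \fgr(\Sn n \to \Sn n,\id)$ that you spell out. Your justification via the embedding $(\Sn n = \Sn n) \hookrightarrow (\Sn n \to \Sn n)$ (using that $\isEquiv$ is a proposition) is exactly the content the paper is taking for granted here.
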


\begin{proof}
  This is a direct corollary of
  \cref{thm:fund-grp-of-symmetries-based}, applied with
  $f\defequi \id_{\Sn n}$.
\end{proof}

Let us summarize the results of the current section, putting
\cref{thm:Sp-sym=Z/2,thm:higher-spheres-2,prop:general-case-conn-comp-equiv,thm:fund-grp-of-symmetries}
together:

\begin{thm}
  For $n \geq 2$, the type of symmetries of $\Sn n$ has two connected
  components. The two components are equivalent and both have
  fundamental group $\ZZ \slash 2\ZZ$. \qed
\end{thm}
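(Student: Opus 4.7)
The plan is simply to assemble the pieces already established in this section and earlier ones. The statement has three parts: the number of components, their equivalence, and the fundamental group.

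\emph{Two components.} For $n\ge 2$, \cref{thm:higher-spheres-2} gives an equivalence $\setTrunc{\Sn n = \Sn n} \weq \bn 2$, so $\Sn n = \Sn n$ has exactly two connected components. Concretely, by \cref{cor:id-not-minus-id} (for $n\ge 3$) and \cref{lemma:S2-id-neq-minusid} (for $n=2$), the symmetries $\id_{\Sn n}$ and $-\id_{\Sn n}$ lie in different components, so they represent the two components.

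\emph{The two components are equivalent.} This is exactly \cref{prop:general-case-conn-comp-equiv}: for $n=2$ it follows from \cref{prop:equiv-susp-comp} applied to $p \defequi -\id_\Sp$, and for $n\ge 3$ the same argument applies, using \cref{cor:id-not-minus-id} to ensure the loop $-\id_{\Sn n}$ is not already trivial.

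\emph{The fundamental group is $\ZZ/2\ZZ$.} Since the two components are equivalent as pointed types (for the equivalence from \cref{prop:equiv-susp-comp} is, as a map $q \mapsto q\cdot p$, pointed in the obvious way), it suffices to compute $\fgr(\Sn n = \Sn n, \id)$. For $n=2$ this is \cref{thm:Sp-sym=Z/2}, whose proof uses the Whitehead-product exact sequence of \cref{thm:sort-of-ehp} together with Brunerie's computation of $\hgr 3(\Sp)$ and the generator $[i_2,i_2]$. For $n\ge 3$ it is \cref{thm:fund-grp-of-symmetries}, derived from \cref{thm:fund-grp-of-symmetries-based} by setting $f\defequi\id_{\Sn n}$ and invoking \cref{lem:fgr-ptd-endomaps-Sn}, which identifies the fundamental group with $\hgr{n+1}(\Sn n) = \ZZ/2\ZZ$. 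Combining the three parts yields the theorem. There is no real obstacle here — the only mild care needed is to check that the equivalence between components from \cref{prop:general-case-conn-comp-equiv} can be read as an equivalence of pointed types (or, failing that, that $\fgr$ is invariant under unpointed equivalence for connected types), so that the computation at $\id$ transfers to the other component.
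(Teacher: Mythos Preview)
Your proposal is correct and matches the paper's approach exactly: the theorem is a summary statement, and the paper's proof consists solely of citing \cref{thm:higher-spheres-2}, \cref{prop:general-case-conn-comp-equiv}, \cref{thm:Sp-sym=Z/2}, and \cref{thm:fund-grp-of-symmetries}, which is precisely what you do (with some additional explanatory detail). Your mild caveat about pointedness of the component equivalence is not needed---the map $q\mapsto q\cdot p$ visibly sends $\refl{}$ to $p$, so the induced equivalence of components is pointed---but it does no harm.
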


\begin{rem}\label{rem:not-generalising}
  By \cref{thm:Sp-sym=Z/2,thm:fund-grp-of-symmetries}, for $n \geq 2$,
  the group $\fgr(\Sn n = \Sn n, \id)$ is non-trivial. At the same
  time, $\fgr(\Sn n)$ is trivial by \cite{HoTT}, and by
  $0$-connectedness of $\Sn n$, the group $\fgr(\Sn n + \Sn n, x)$ is
  trivial for any $x$.  Therefore, in contrast to the result for the
  case $n \equiv 1$ in \cref{sec:circle-case}, we have
  \begin{equation*}
    (\Sn n = \Sn n) \not= (\Sn n + \Sn n).
  \end{equation*}
\end{rem}

\section{Conclusion}
\label{sec:conclusions}

\newcommand{\topSp}{S^2}%
We have shown that the two connected components of $(\Sn n = \Sn n)$
are equivalent and have fundamental group $\ZZ/2\ZZ$ for $n\ge2$.
The only result the authors are aware of that gives
the complete homotopy type of these
components is a result proved in classical topology about the topological
$2$-sphere in~\cite[Sec.~5]{hansen}. Write $\topSp$ for the topological
$2$-sphere, and $\operatorname M(\topSp,\topSp)$ for the space of continuous
maps from $\topSp$ to $\topSp$ with the uniform topology:
\begin{thm}
  The connected component of the identity map in $\operatorname M(\topSp,\topSp)$
  is homotopy equivalent to $\operatorname{SO}(3)\times \Omega$, where $\Omega$
  is the universal covering space of the connected component of the constant
  loop in $\loopspace 2 (\topSp)$.
  \label{thm:hansen}
\end{thm}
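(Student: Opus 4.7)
The plan is to translate Hansen's classical argument into the synthetic setting by (i) setting up the evaluation fibration, (ii) producing a map $\operatorname{SO}(3)\to\conncomp{\operatorname{M}(\topSp,\topSp)}{\id}$ that trivializes it in a suitable sense, and (iii) identifying the remaining factor with $\Omega$.

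First I would extract the fiber sequence of evaluation at the north pole:
\[
F \ptdto \conncomp{(\topSp\to\topSp)}{\id} \ptdtof{\ev_\north} \topSp,
\]
where $F$ is the identity component of $\topSp \ptdto \topSp$. By the suspension--loop adjunction (\cref{prop:susp-loop-adjunction}) and the degree calculations in \cref{sec:sphere}, $F$ sits inside $\loopspace 2 \topSp$ as its degree-one component. Since $\loopspace 2 \topSp$ is a wild group, translation by the element corresponding to $\id$ provides an equivalence $F\weq F_0$, where $F_0$ is the connected component of the constant loop; in particular these share their higher homotopy groups, namely the unstable higher homotopy groups of $\topSp$ starting at $\hgr 3(\topSp)$.

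Next I would define a synthetic analogue of $\operatorname{SO}(3)$ as the identity component of $(\topSp=\topSp)$, which by \cref{thm:Sp-sym=Z/2} has fundamental group $\ZZ/2\ZZ$, matching the classical $\pi_1$ of $\operatorname{SO}(3)$. The forgetful map $\conncomp{(\topSp=\topSp)}{\id}\hookrightarrow\conncomp{(\topSp\to\topSp)}{\id}$ that sends a self-equivalence to the underlying endomap composed with $\ev_\north$ then produces a second fibration $\operatorname{SO}(3)\to\topSp$ with fiber $\operatorname{SO}(2)\weq \Sc$. Using this pair of fibrations over $\topSp$ one would build a comparison of covering spaces: taking the simply connected cover of $F_0$ (call it $\Omega$) and pairing it with $\operatorname{SO}(3)$ gives a candidate equivalence $\operatorname{SO}(3)\times\Omega \to \conncomp{(\topSp\to\topSp)}{\id}$ whose verification reduces to checking it induces isomorphisms on all homotopy groups; the $\pi_1$ step works because our $\ZZ/2\ZZ$ comes entirely from $\operatorname{SO}(3)$ (since $\Omega$ is by construction simply connected), and in all higher degrees both sides agree with $\hgr{k+2}\topSp$ up to the contribution of $\operatorname{SO}(3)$.

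The main obstacle is exhibiting the splitting itself: classically, Hansen uses the smooth $\operatorname{SO}(3)$-action on $\topSp$ to produce a section of $\ev_\north$ up to homotopy, and this rotation action is not directly available in UF. A synthetic substitute would have to be manufactured either by pulling back along the Hopf fibration (exploiting that $\Sn 3\to\topSp$ is a principal $\Sc$-bundle and $\Sn 3$ has a wild group structure that acts on $\topSp$), or by a parametrized EHP-style analysis extending the techniques of \cref{sec:whitehead-interlude} to the full connected component rather than just its fundamental group. Either route, together with a careful definition and homotopy computation of $\operatorname{SO}(3)$ in UF (whose higher homotopy groups are the unstable groups of $\Sn 3$), is substantial work that goes beyond the scope of the present paper.
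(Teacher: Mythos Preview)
The paper does not prove this theorem. It is quoted as a classical result of Hansen, and immediately afterwards the paper explicitly says: ``To prove the result, however, it might be necessary to define the classifying type $\operatorname{BSO}(3)$, which is itself another open problem \ldots\ We leave the further investigations for future work.'' So there is no proof in the paper to compare your proposal against; the theorem is stated for context only.

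That said, your sketch contains a substantive error worth flagging. You propose to ``define a synthetic analogue of $\operatorname{SO}(3)$ as the identity component of $(\topSp=\topSp)$''. But the identity component of $(\topSp=\topSp)$ coincides with the identity component of $(\topSp\to\topSp)$ (being an equivalence is a proposition, and any map in the component of $\id$ is an equivalence), and the whole content of Hansen's theorem is that this component is \emph{not} $\operatorname{SO}(3)$ but rather $\operatorname{SO}(3)\times\Omega$. Classically $\operatorname{SO}(3)\simeq\mathbb{RP}^3$ has the higher homotopy groups of $\Sn3$, whereas the identity component of $(\topSp\to\topSp)$ has higher homotopy groups $\hgr{k+2}(\topSp)$; these differ already at $\hgr2$. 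So your proposed definition of $\operatorname{SO}(3)$ is circular and would make the splitting trivially false. A correct synthetic approach must construct $\operatorname{SO}(3)$ independently (e.g.\ as $\mathbb{RP}^3$, or via a classifying type $\operatorname{BSO}(3)$) and then produce a map from it into the mapping space --- exactly the obstacle the paper identifies as open. You do acknowledge in your final paragraph that the splitting map is the hard part and ``goes beyond the scope of the present paper''; that assessment is accurate, and the paper agrees with you.
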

This result can be stated as well in homotopy type theory,
since $\operatorname{SO}(3) \simeq \RR\mathrm{P}^3$,
and the real projective $3$-space $\RR\mathrm{P}^3$
can be defined as in~\cite{BuchholtzRijke2017}.
To prove the result, however, it might be necessary to
define the classifying type $\operatorname{BSO}(3)$,
which is itself another open problem.
(This problem is closely related to that of defining
the classifying type $\operatorname{BSU}(2)$,
where $\operatorname{SU}(2) \weq \Sn 3$.)
We leave the further investigations for future work.

Our proof of \cref{thm:sort-of-ehp} was inspired by
the similar result~\cite[Thm.~2.7]{lang1973},
the \emph{$\lambda$-component EHP sequence}.
As shown there, if we take $\beta\jdeq\id_{\Sn{q+1}}$,
we obtain an approximation to the classical EHP sequence~\cite{gwwhitehead1953}, valid in the range $n\le3q+1$.
It would be interesting to construct this in homotopy type theory,
as well as, of course, more modern refinements such as the various
EHP spectral sequences for each prime $p$, see~\cite[Sec.~1.5]{ravenel1986}
for a discussion in the classical setting.
For $p=2$, this is very much in reach using the James construction,
while for odd $p$, it would need Toda's fibrations~\cite{toda1962}.

Something special happens for the spheres that happen to be H-spaces,
namely $\Sn 0$, $\Sn 1$, $\Sn 3$, and $\Sn 7$.
Indeed, if $X$ is an H-space such that, say, left-multiplication is invertible,
meaning $\mu(x,\blank) : X \to X$ is invertible for all $x:X$,
then we get an equivalence
\[
  (X \to X) \simeq (X \ptdto X) \times X.
\]
This equivalence uses left-multiplication to adjust any function $f$
to a pointed function $x \mapsto \mu(f(x_0),\blank)^{-1}(f(x))$, together with
the image of the base point, $f(x_0)$.
Since this equivalence maps equivalences to pointed equivalences and vice versa,
it restricts to an equivalence $(X = X) \simeq (X \ptdweq X) \times X$.
Since $(\Sn 0\ptdweq \Sn 0) \simeq \bn1$ and
$(\Sn 1\ptdweq \Sn 1) \simeq \ZZ/2$, we recover the equivalences
\[
  (\Sn 0 = \Sn 0) \simeq \Sn 0,\qquad
  (\Sn 1 = \Sn 1) \simeq \ZZ/2 \times \Sn 1.
\]
From~\cite{BR18} we know that also $\Sn3$ is an H-space, and further,
that once we know that it's a loop space, then also $\Sn7$ is an H-space.
It also follows from our calculations that $\Sn2$ is not an H-space,
since all the components of $(\Sn2\ptdto\Sn2)\times\Sn2$ are equivalent,
but that's not the case for the components of $(\Sn2\to\Sn2)$.

We have focused on the self-identification types $\Sn n = \Sn n$, but
we could of course also look at the other components of the function
type $\Sn n \to \Sn n$.  For $n\ge3$, we don't get anything new by the
proof of \cref{thm:fund-grp-of-symmetries}, i.e.,
$\fgr(\Sn n\to\Sn n,f) \simeq \fgr(\Sn n\to\Sn n,\id) \simeq
\ZZ/2\ZZ$, for any $f$.  However, for $n=2$, from the proof of
\cref{thm:Sp-sym=Z/2}, the fundamental group of a component of
$\Sp \to \Sp$ depends on the corresponding degree.  Once we know that
the Whitehead product is bilinear (possibly up to a sign), as in
\cite[Prop.~3.4]{arkowitz1962}, we can conclude that
$\fgr(\Sp \to \Sp, f) = \ZZ/2d(f)\ZZ$.%

\bibliographystyle{plain}
\bibliography{../bib}

\appendix
\section{Proofs}

\lemmadeloopingcenter*

\begin{proof}
  Notice that $\conncomp {(BG=BG)} {\refl {BG}}$ is a connected
  groupoid, pointed at $(\refl {BG}, \trunc{\refl {\refl {BG}}})$. So,
  to exhibit a pointed equivalence as in the statement, we construct a
  pointed map $$z_G:\conncomp {(BG=BG)}{\refl {BG}} \ptdto BG$$ such
  that $\loopspace\null{(z_G)}$ is injective and has image
  $Z(G)$. Write $s:BG$ for the distinguished point of the delooping
  $BG$. The map $z_G$ is defined to be the restriction to the
  connected component at $\refl {BG}$ of the evaluation
  $\ev_s : (BG=BG) \to BG$ that sends an equality $x:BG=BG$ to the
  point $x(s):BG$ (where $x$ is seen as an equivalence $BG \weq
  BG$). Note that $z_G$ is pointed, trivially, by the path
  $\refl s : s = \ev_s(\refl{BG})$.

  Note that $\ev_y : (BG=BG) \to BG$ can be defined in the same way
  for any $y:BG$ instead of $s$. Then we can prove, by double
  path-inductions,
  \begin{displaymath}
    \ap x (q) \cdot \ap{\ev_s} (p) = \ap{\ev_y}(p) \cdot q
  \end{displaymath}
  for all $x : (BG = BG)$, all $p : \refl{BG} = x$, all $y:BG$, and
  all $q: s = y$. (Indeed, the equation holds for
  $p \jdeq \refl {\refl{BG}}$ and $q \jdeq \refl s$). In particular,
  the equation holds when $x \jdeq \refl{BG}$ and $y\jdeq s$, so that
  we have: for all $p : \refl{BG} = \refl{BG}$ and all $g : s = s$,
  $g \cdot \ap{\ev_{s}} (p) = \ap{\ev_{s}}(p) \cdot g$. By restriction
  to the subtype $\conncomp {(BG=BG)} {\refl {BG}}$, we get: for all
  $h : \loopspace\null{\conncomp {(BG=BG)} {\refl {BG}}}$, and all
  $g : G$, $g \cdot \ap{z_G} (h) = \ap{z_G}(h) \cdot g$. Because $z_G$
  is pointed by $\refl s$, path algebra shows that
  $\loopspace\null{(z_G)} = \ap {z_G}$. Hence, for any
  $h : \loopspace\null{\conncomp {(BG=BG)} {\refl {BG}}}$,
  $\loopspace\null{(z_G)}(h)$ lies in the center of $G$.

  Conversely, we must show that any element of the center is in the
  image of by $\loopspace\null{(z_G)}$. Take $g$ in the center of $G$,
  and construct an element of
  $\loopspace\null{\conncomp {(BG=BG)} {\refl {BG}}}$ as
  follows. Define $\hat g : \refl {BG} = \refl {BG}$ through
  univalence by giving an equality of type $\id_{BG} = \id_{BG}$, that
  is, under function extensionality, by giving a homotopy of type
  $\prod_{y:BG}y=y$. We will obtain such a homotopy by taking, for all
  $y:BG$, the first component of a center of contraction of
  $\sum_{q : y=y}\prod_{q':s=y}(q'g=qq')$. The contractibility of
  $\sum_{q : y=y}\prod_{q':s=y}(q'g=qq')$ is a proposition, so we can
  use the connectedness of $BG$ and only prove it for $y\jdeq s$. But
  because $g$ commutes with all elements of $G$,
  \begin{displaymath}
    \sum_{q : G}\prod_{h:G}(hg=qh) \weq \sum_{q : G} G \to (g = q)
    \weq \sum_{q : G}( g = q) \weq 1.
  \end{displaymath}
  We obtain in that way $\hat g : \refl {BG} = \refl {BG}$ and we
  consider the element
  $(\hat g,!) : \loopspace\null{\conncomp {(BG=BG)} {\refl
      {BG}}}$. Its image though $\loopspace\null{(z_G)}$ is
  $\hat g(s)$ when $\hat g$ is seen as an homotopy
  $\prod_{y:BG}y=y$. And by definition, $\hat g (s)$ is the first
  component of the center of contraction of
  $\sum_{q : G}\prod_{h:G}(hg=qh)$. Tracking back the equivalences
  above, that is exactly $g$. We thus have proven that
  $\loopspace\null{(z_G)}$ has image $Z(G)$.

  Lastly, we have to show that $\loopspace\null{(z_G)}$ is
  injective. Equivalently, we want to prove that all fibers of $z_G$
  are sets. This is a proposition, so it boils down to proving that
  the fiber at $s$ is a set. This fiber is the type of elements
  $x : BG=BG$, merely equal to $\refl{BG}$, together with an equality
  from $s$ to $\ap{\ev_s}(x)$. Through univalence, this is thus the
  type of pointed equivalences from $BG$ to itself. But this is
  equivalent to the type of group automorphisms of $G$, which is a
  set.
\end{proof}

\lemmaouterautascomponents*

\begin{proof}
  Recall the morphism of groups $\inn : G \to \Aut(G)$ that maps an
  element $g\in G$ to the inner automorphism $x \mapsto g x
  g^{-1}$. By univalence, the delooping of $\Aut(G)$ can be described
  as the connected component of $BG$ in $\UUptd$. In particular, under
  this identification, $\inn$ is simply $\loopspace\null(B\inn)$ where
  $B\inn : BG \ptdto \conncomp \UUptd {BG}$ is the function mapping
  each $y:BG$ to the type $BG$ itself but pointed at $y$ instead of
  the distinguished point $s$ of $BG$. The path pointing $B\inn$ is
  simply $\refl{BG}$. Using the fact stated above the lemma, we find a
  pointed equivalence $\setTrunc{\inv{B\inn}(BG)} \weq \Out(G)$ by
  pointing the type on the left at $\settrunc{(s, \refl{BG})}$. But
  the fiber $\inv{B\inn}(BG)$ is by definition
  $\sum_{y:BG}(BG,y) =_\ast BG$, which is equivalent to $BG =
  BG$. Moreover, through this last equivalence, $\refl{BG}: BG = BG$
  corresponds to $(s, \refl{BG}) : \inv{B\inn}(BG)$. Truncating the
  equivalence, we get a pointed equivalence
  $\setTrunc{BG = BG} \ptdweq \setTrunc{\inv{B\inn}(BG)}$, and thus by
  composing with the first pointed equivalence, we get
  $\setTrunc{BG = BG} \ptdweq \Out(G)$ as wanted.
\end{proof}

\lemminusidequivalence*

\begin{proof}
  More generally, the same holds for the reflection $-\id_{\susp X} : \susp X\to\susp X$
  on any suspension, so we prove it in this generality.
  We produce by induction an element of the type $\prod_{z:\susp X}T(z)$,
  where
  $T(z)\defequi (z= (-\id_{\susp X} \circ -\id_{\susp X})(z))$.
  By definition of $-\id_{\susp X}$,
  $T(\north) \jdeq (\north=\north)$
  and $T(\south) \jdeq (\south=\south)$,
  so we take $\refl \north:T(\north)$
  and $\refl \south:T(\south)$.
  To complete the induction, we need to provide an element
  of type $\prod_{x:X} \pathover{\refl \north} T {\mrd x} {\refl \south}$.
  Transporting over a meridian in the family $T$ is conjugation by the
  meridian: indeed, the transport over any path $p:x=x'$ in $T$ is given by
  $q\mapsto \ap{-\id_{\susp X} \circ -\id_{\susp X}}(p) \cdot q \cdot \inv p$, and
  \begin{align*}
    \ap{-\id_{\susp X} \circ -\id_{\susp X}} (\mrd(x))
     &= \ap{-\id_{\susp X}}(\inv {\mrd(x)})
    \\ &= \inv {\bigl( \ap{-\id_{\susp X}}(\mrd(x)) \bigr)}
    \\ &= \inv { \left( \inv{\mrd(x)} \right) }
    \\ &= \mrd(x).
  \end{align*}
  Hence $\pathover{\refl \north} T {\mrd(x)} {\refl \south}$
  is equivalent to
  $\mrd(x)\refl \north \inv {\mrd(x)} = \refl\south$, which is indeed inhabited for any
  $x:\Sc$ by simple path algebra.
\end{proof}

\lemtauretractionunitcircle* 
\begin{proof}
  For all $z\in\Sc$, we can calculate:
  \begin{align*}
    \tau(\eta_{\Sc}(z))
    &= \ap{\hopf}(\inv{\mrd(\base)} \cdot \mrd(z))(\base) \\
    &= (\iota_{\base}^{-1} \circ \iota_z) (\base)
      = \id_{\Sc}^{-1}(\iota_z (\base))= z
  \end{align*}
  It remains to verify that $\ap\tau (\eta_0)$ coincides with the path above
  instantiated at $z\jdeq \base$.
  In fact, $\tau$ factors through an \emph{adjoint} equivalence
  $\bar\tau : \Trunc{\loopspace\null\Sp}_1 \simeq_* \Sc$,
  with inverse $\bar\eta_\Sc = \trunc{\blank}_1 \circ \eta_\Sc$.
  The identification $\bar\eta_\Sc \circ \bar\tau = \id_{\Trunc{\loopspace\null\Sp}_1}$
  defined by the encode-decode method is more easily seen to be pointed,
  but then the identification $\bar\tau \circ \bar\eta_\Sc = \tau \circ \eta_\Sc = \id_\Sc$ is pointed as well.
  The full details of this argument are formalized in~\cite{hott-agda}.
\end{proof}

\leminvetascequalsetascminusid* 
\begin{proof}
  We construct the element $\kappa$ by induction on $\Sc$ by first defining
  \begin{displaymath}
    \kappa_{\base} : \inv{(\inv{\mrd(\base)}\cdot\mrd(\base))} = \inv{\mrd(\base)}\cdot\mrd(\base)
  \end{displaymath}
  by simple path algebra.

  It remains to find an element $\kappa_{\Sloop}: \pathover{\kappa_{\base}}{}{\Sloop}{\kappa_{\base}}$, which amounts to an element of 
  \begin{equation*}
    \inv{\ap{\eta_\Sc}(\Sloop)} \cdot \kappa_{\base} = \kappa_{\base} \cdot \ap{\inv\blank}(\ap{\eta_\Sc}(\Sloop))
  \end{equation*}
  Consider the following paths given by path algebra for each $x:\Sc$:
  \begin{align*}
    &\lambda_x: \inv{(\inv{\mrd(x)}\cdot\mrd(\base))} = \eta_\Sc(x)\\
    &\mu_x: \inv{\eta_\Sc(x)} = \inv{\mrd(x)}\mrd(\base)
  \end{align*}
  In particular, the types $\lambda_{\base} = \kappa_{\base}$, $\mu_{\base} = \kappa_{\base}$, and $\lambda_{\base} = \mu_{\base}$ have elements.
  Moreover, by path induction on $p:\base = x$, one can construct an element
  \begin{equation*}
    \xi_p : \inv{\ap{\eta_\Sc}(p)} \cdot \lambda_x = \mu_x \cdot \ap{\inv\blank}(\ap{\eta_\Sc}(p))
  \end{equation*}
  The element $\xi_{\Sloop}$ gives the wanted $\kappa_{\Sloop}$ by transport.
\end{proof}

\propdegreemonoidmorphism* 
\begin{proof}
  First, let us prove that $\id_\Sp$, pointed by $\refl\north$, has degree $1$.
  This is easy because $\hgr 2(\id_\Sp) = \id_{\hgr 2(\Sp)}$ so that $d(\id_\Sp)
  = \zeta(\inv\zeta(1)) = 1$.

  Now, let us prove that $d(g\circ f) = d(g)\times d(f)$ for any $f,g:\Sp
  \ptdto\Sp$. This again comes mainly from the functoriality of $\hgr 2$
  (\cite[after Lem.~7.3.3 and before Def.~8.4.2]{HoTT}),
  meaning that $\hgr 2(g\circ f) = \hgr 2(g)\circ\hgr 2(f)$
  holds. Hence:
  \begin{align*}
    d(g\circ f) &= \zeta\left( \hgr 2(g\circ f)\left( \inv\zeta(1) \right) \right)
    \\
    &= \zeta\left( \hgr 2(g) \left( \hgr 2(f)\left( \inv\zeta(1) \right)
    \right) \right)
    \\
    &= \zeta\left( \hgr 2(g) \left( \inv\zeta \left( \zeta \left( \hgr 2(f)\left( \inv\zeta(1) \right)
    \right) \right) \right) \right)
    \\
    &= \zeta \left( \hgr 2(g) \left( \inv\zeta \left(d(f)\right) \right) \right)
  \end{align*}
  Here, we can use the fact that $\zeta$ is not just any equivalence
  but actually a group isomorphism. Because $\hgr 2 (g)$ is a
  homomorphism of groups, the composition
  $\zeta \circ \hgr 2 (g) \circ \inv \zeta : \ZZ \to \ZZ$ also
  is. Hence, for any $n:\ZZ$, one gets
  $(\zeta \circ \hgr 2 (g) \circ \inv \zeta)(n) = {(\zeta \circ \hgr 2
    (g) \circ \inv \zeta)(1)}\times n$. We can then conclude:
  \begin{displaymath}
    d(g\circ f)
    = \zeta \left( \hgr 2(g) \left( \inv\zeta(1)\right) \right)\times d(f)
    = d(g)\times d(f)\qedhere
  \end{displaymath}
\end{proof}

\cordegreeequivalences* 
\begin{proof}
  Given an equivalence $\varphi:\Sp \to \Sp$, pointed by $p:\varphi(\north) = \north$,
  any inverse $\psi$ of $\varphi$ is also pointed by the following path $q$:
  \begin{displaymath}
    \psi(\north) \overset {\ap\psi(\inv p)} = \psi\varphi(\north) \overset {\alpha_\north}  = \north
  \end{displaymath}
  where $\alpha: \psi\varphi = \id$ is a witness of $\psi$ being a left inverse
  for $\varphi$. In particular, $(\psi, q) \circ (\varphi, p)$ is an
  equivalence whose first component is equal to $\id_\Sp$. In determining the
  degree of this composite equivalence, \cref{lem:deg-independent-path} ensures
  that the path is irrelevant, and because $d(\id_\Sp, \refl\north) = 1$, we can
  conclude that $d((\psi, q) \circ (\varphi, p)) = 1$ also holds. The previous
  result then proves that $d(\varphi, p)$ is a divisor of $1$ in $\ZZ$, which
  is either $1$ or $-1$ by decidability of the equality in $\ZZ$.
\end{proof}

\lemmaconnapf* 

\begin{proof}
Let $a_1, a_2 :A$, and $p : f(a_1) = f(a_2)$.
    We have to prove that $\higherTrunc{k-1}{\inv{\ap{f}}(p)}$
    is contractible.
    From \cite[proof of Lem.~7.6.2]{HoTT} we get that
    the fiber $\inv{\ap{f}}(p)$ is equivalent to
    the path type $(a_1,p) = (a_2, \refl {f(a_2)})$ in
    the fiber $\inv{f}(f(a_2))$. 
    Moreover, by \cite[Thm.~7.3.12]{HoTT}, 
    $\higherTrunc {k-1} {(a_1, p) = (a_2, \refl {f(a_2)})}\weq
    \highertrunc k {(a_1, p)}= \highertrunc k {(a_2, \refl {f(a_2)})}$.
    The latter path type is contractible if
    $\higherTrunc k {f^{-1}(f(a_2))}$ is contractible,
    which follows from the assumption of the lemma.
\end{proof}

\lemmainvertables*

\begin{proof}
    For giving maps in both directions we use set truncation elimination.
Being an equivalence and being invertible are propositions and we
denote proofs if they exist simply by !.
For each $(f,!):A \weq A$ we map  $\settrunc{(f,!)}$ to $(\settrunc{f},!)$;
obviously, $\settrunc{f}$ is invertible with inverse $\settrunc{\inv f}$
if $f$ is an equivalence. For the converse,
if $x:\setTrunc{A \to A}$ is invertible, then
we have the unique inverse $\inv x$. We may assume
that $x \jdeq \settrunc f$ for some function $f:A \to A$ and
$\inv x \jdeq \settrunc g$ for some $g:A \to A$.
    From the inverse law in the monoid $\setTrunc{A \to A}$, 
    using \cite[Thm.~7.3.12]{HoTT}, one
    derives that both $fg$ and $gf$ are merely equal to $\id_{A}$. To prove
    the proposition that $f$ is an equivalence, one can then assume actual
    witnesses of $fg=\id_{A}$ and $gf=\id_{A}$. Then $g$ is a
    pseudo-inverse for $f$. Hence the map in the other direction
    is $(\settrunc f,!) \mapsto \settrunc {(f,!)}$.
Clearly the maps in both directions are pseudo-inverses. 
\end{proof}

\lemmaUMPjoinvsloop*

\begin{proof}
  Let $x_0$ be the base point of $X$.
  Given $f: A \ptdto (B \ptdto \loopspace{}X)$, construct $\bar f:A*B \ptdto X$
  by induction:
  \begin{equation*}
    \begin{aligned}
      \bar f(\inl(a)) &\defequi x_0 \quad \text{for $a:A$} \\
      \bar f(\inr(b)) &\defequi x_0 \quad \text{for $b:B$} \\
      \ap{\bar f}(\glue(a,b)) &\defis f(a)(b) \quad \text{for $a:A,b:B$}
    \end{aligned}
  \end{equation*}
  The map $\bar f$ is trivially pointed.

  This construction $f\mapsto \bar f$ admits an inverse. Let $a_0$ and $b_0$ be the base points of $A$ and $B$, respectively.
  Now, for any $a:A$ and $b:B$, one has an element $\tau_{a,b}$
  of $\loopspace{}(A*B)$ constructed as the following composition of paths:
  \begin{equation}
    \begin{tikzcd}[column sep=large]
      \inl(a_0) \rar["{\glue(a_0,b_0)}"] & \inr(b_0) \rar["\inv{\glue(a,b_0)}"]
      & \inl(a) \rar["{\glue(a,b)}"] & \inr(b) 
      \arrow[lll, bend left, "\inv{\glue(a_0,b)}"]
    \end{tikzcd}
    \label{eq:def-tau-ab}
  \end{equation}
  Then, to any $g:A*B \ptdto X$, one can map the function $\hat g: a \mapsto
  (b\mapsto \loopspace\null(g)(\tau_{a,b}))$.
  The map $\hat g$ is a pointed, as
  $\tau_{a_0,b} = \tau_{a,b_0} = \refl{\inl(a_0)}$ for all $a:A$ and $b:B$
  by path algebra.
  The construction $g
  \mapsto \hat g$ provides an inverse to $f \mapsto \bar f$.
  A proof of this has been formalized in cubical Agda~\cite{joinloopadj}.
  There, it's also checked that this equivalence arises from
  a wild adjunction, cf.~\cref{prop:join-loop-adjunction}.
\end{proof}

\lemmafgrSnptdtoSn*

\begin{proof}
  Recall the equivalence $\phi_{\Sn n}^n : (\Sn n \ptdto \Sn n) \to \loopspace
  {n} (\Sn n)$, defined in \cref{def:sphere-to-eq-loops}. Note that
  this is not a pointed equivalence. Indeed, $\phi_{\Sc}^1$ maps $(\id_\Sc,
  \refl\base)$ to $\Sloop: \loopspace\null(\Sc)$ and from there, one can prove by
  induction that $\phi_{\Sn n}^n$ maps $(\id_{\Sn n}, \refl\north)$ to a point in
  $\loopspace n (\Sn n)$ which is mapped to $\pm 1:\ZZ$ by the set truncation
  $\loopspace n (\Sn n) \to \hgr n (\Sn n) \weq \ZZ$. However, the distinguished
  point of $\loopspace n (\Sn n)$ is the iterated $\refl{}$, which is sent to
  $0:\ZZ$ by this set truncation.

  Fortunately, there is an equivalence $\psi_n:\loopspace n (\Sn n) \weq
  \loopspace n(\Sn n)$ defined as follows:
  \begin{equation*}
    \psi_n (\alpha) \defequi \inv{\phi_{\Sn n}^n(\id_{\Sn n},\refl\north)}\cdot \alpha
  \end{equation*}
  This makes the composite $\psi_n \circ \phi_{\Sn n}^n$ pointed by path algebra.
  The wanted equivalence is then:
  \begin{equation*}
    \xi_n\defequi \fgr(\psi_n \circ \phi_{\Sn n}^n) :
    \fgr(\Sn n \ptdto \Sn n) \weq \hgr {n+1} (\Sn n)
  \end{equation*}
  once we identify $\hgr {n+1}(\Sn n)$ with $\setTrunc{\loopspace {}(\loopspace
  n (\Sn n))}$.
\end{proof}

\spsym*

\begin{proof}
  We follow the proof (in classical topology) of \cite{gwwhitehead}.
  Let us consider the long exact sequence given by \cref{thm:sort-of-ehp} when
  specialized to $X \jdeq \Sp$, $n=q=1$ and $\beta\jdeq (\id_\Sp, \refl\north): \Sp
  \ptdto \Sp$. Using $\phi_\Sp^2$ from \cref{def:sphere-to-eq-loops},
  the element in $\hgr 2(\Sp)$ represented by $(\id_\Sp, \refl\north)$
  is:
  \begin{equation}
    i_2 \defequi \settrunc{\loopspace{}(\eta_\Sc)(\Sloop)}
    \label{eq:def-i2-gen-pi2S2}
  \end{equation}
  Recall that this element $i_2$ generates the group $\hgr 2(\Sp)$, which is
  isomorphic to $\ZZ$.

  We then have an exact sequence:
  \begin{equation}
    \hgr 2(\Sp) \overset{[\blank, i_2]} \to \hgr 3(\Sp)
    \overset\kappa \to \fgr (\Sp \to \Sp, \id_\Sp) \to \fgr (\Sp)
    \label{eq:exact-seq-compute-pi1-S2eqS2}
  \end{equation}
  We consider the type $\Sp \to \Sp$ as pointed at the element $\id_\Sp$. Hence,
  we write only $\fgr(\Sp \to \Sp)$ instead of $\fgr(\Sp\to\Sp,\id_\Sp)$.
  The sphere $\Sp$ is simply connected, that is $\fgr(\Sp) = 0$. By exactness,
  it means that:
  \begin{equation}
    \fgr(\Sp \to \Sp) \weq \im(\kappa) \weq \hgr 3(\Sp) / \ker(\kappa)
    \weq \hgr 3 (\Sp) / \langle [i_2,i_2] \rangle
    \label{eq:fgr-SpeqSp-as-quotient}
  \end{equation}

  Recall from \cref{lem:fgr-ptd-endomaps-Sn} that there is a group isomorphism
  $\xi_2: \hgr 3 (\Sp) \weq \ZZ$. Hence, one has:
  \begin{equation}
    \fgr (\Sp = \Sp) \weq \fgr (\Sp \to \Sp) \weq \ZZ / \xi_2([i_2,i_2])\ZZ
    \label{eq:fgr-SpeqSp-ZovernZ}
  \end{equation}
  We now invoke the main result from \cite{brunerie:thesis} to conclude that
  $\fgr (\Sp = \Sp) = \ZZ/2\ZZ$ (as groups).
\end{proof}

\thmfgrSntoSn*

\begin{proof}
  As the group $\ZZ/2\ZZ$ has no non-trivial symmetries, the target
  type $\fgr(\Sn n \to \Sn n, f) = \ZZ \slash 2\ZZ$ is a
  proposition. So, as the sphere $\Sn n$ is connected, we can suppose
  without loss of generality that $f$ is pointed by a path
  $f_0: f(N) = N$. In the rest of the proof, we make the abuse of
  writing $f$ for both the pointed and unpointed map as the context
  allows to differentiate.

  Consider the following fiber sequence:
  \begin{displaymath}
    (\Sn n \ptdto \Sn n, f)\stackrel \iota \longrightarrow_\ast
    (\Sn n \to \Sn n, f) \stackrel {\ev_\ast} \longrightarrow_\ast \Sn n
  \end{displaymath}
  where $(\Sn n \ptdto \Sn n, f)$ is the type $\Sn n \ptdto \Sn n$
  pointed at $f$, $(\Sn n \to \Sn n, f)$ is the type $\Sn n \to \Sn n$
  pointed at $f$, $\iota$ is the map forgetting the pointing path, and
  $\ev_\ast$ is the evaluation at the point $N$ of $\Sn n$. By
  \cite[Thm.~8.4.6]{HoTT}, this induces a long exact sequence of
  groups: {\small
  \begin{displaymath}
    \dots \to \hgr{k+1} (\Sn n)
    \to \hgr k(\Sn n \ptdto \Sn n, f) \to \hgr k(\Sn n \to \Sn n, f) \to \hgr k(\Sn n)
    \to \dots
  \end{displaymath}}
  Hence for every $1\leq k < n-1$, this long sequence contains the
  following short exact sequence: {\small
  \begin{displaymath}
    0 = \hgr {k+1}(\Sn n)
    \to \hgr k(\Sn n \ptdto \Sn n, f) \to \hgr k(\Sn n \to \Sn n, f)
    \to \hgr k(\Sn n) = 0
  \end{displaymath}}
  In other words, for every $1\leq k < n-1$, one has
  \begin{displaymath}
    \hgr k(\Sn n\to \Sn n, f) \weq \hgr k(\Sn n\ptdto\Sn n, f)
    \weq \hgr {k}(\loopspace n \Sn n) \weq \hgr {n+k} (\Sn n)
  \end{displaymath}
  The group isomorphism in the middle is given by
  \cref{rem:Sn-ptdto-Sn-equivalent-components} as follows: if we write
  $r_n$ for the chosen point of $\loopspace n \Sn n$ (that is the
  iterated $\refl{}$ on the north pole $N$ of $\Sn n$), then the
  remark allows to derive an isomorphism
  $\hgr k(\Sn n\ptdto\Sn n, f) \weq \hgr {k}(\Sn n \ptdto \Sn n,
  \inv{\left(\phi_{\Sn n}^n\right)} (r_n))$, that can be composed with
  the isomorphism
  $\hgr {k}(\Sn n \ptdto \Sn n, \inv{\left(\phi_{\Sn n}^n\right)}
  (r_n)) \weq \hgr k (\loopspace n\Sn n)$ induced by the equivalence
  $\phi_{\Sn n}^n : (\Sn n \ptdto \Sn n) \weq \loopspace n \Sn n$.
  
  In particular for $n\geq 3$, $k\defequi 1$ enters the condition, and one gets:
  \begin{displaymath}
    \fgr (\Sn n \to \Sn n, f) \weq \hgr {n+1} (\Sn n) \weq \ZZ \slash 2\ZZ\qedhere
  \end{displaymath}
\end{proof}
\section{Wild categories}

\def\hom(#1,#2){#1 \rightsquigarrow #2}
The types $\UUptd$ and $\UU$ certainly do not form categories in the usual
sense (the intended types of morphisms $A \ptdto B$ and $A\to B$ between two
objects $A,B$ are not necessarily sets), but some constructions on $\UUptd$ and $\UU$ are
reminiscent of functors. This motivates the few following definitions.

\begin{defi}
  A \emph{wild category in} $\UU$ is a dependent tuple $(\cat C, \hom( , ), \circ, \id, \alpha,
  \iota)$ where:
  \begin{align*}
    \cat C&: \UU \\
    \hom( , )&: \cat C \times \cat C \to \UU \\
    \circ&: \prod_{a,b,c:\cat C} \hom(b,c) \times \hom(a,b) \to \hom(a,c)
    \quad (\text{notation}\ g\circ f \defequi \circ(a,b,c)(g,f)) \\
    \id&: \prod_{a:\cat C} \hom(a,a) \quad (\text{notation}\ \id_a \defequi \id(a)) \\
    \alpha&: \prod_{a,b,c,d:\cat C} \prod_{f:\hom(a,b)} \prod_{g:\hom(b,c)} \prod_{h:\hom(c,d)}
    h\circ (g \circ f) = (h \circ g) \circ f \\
    \iota&: \prod_{a,b:\cat C} \prod_{f:\hom(a,b)} (f\circ \id_a = f) \times (f = \id_b \circ f)
  \end{align*}
  \label{defn:wild-cat}
\end{defi}
One makes the abuse to denote a wild category by only its carrier type $\cat C$ when
all the remaining data are clear from context.

\begin{rem}
  To be fully rigorous, one must say a word about the levels of universe one
  allows in the definition of wild categories. Assuming we have a cumulative
  hierarchy of universes $\UU_0 : \UU_1 : \ldots$, we choose to consider
  locally small wild categories, by which are meant wild categories such that
  $\cat C:\UU_{i+1}$ and $\hom(,): \cat C \times \cat C \to \UU_i$ for some
  $i\geq 0$.
  \label{rem:universes-locally-small-cats}
\end{rem}

\begin{exa}
  The type $\UU_i$ together with function types, identity functions and the
	usual composition, is a wild category in $\UU_{i+1}$, for any $i$. (We will henceforth ignore universe levels and just write $\UU$.) The elements $\alpha$ and $\iota$ are
  given by function extensionality.

  Similarly, the type $\UUptd$ together with $\ptdto$, identity functions
  pointed by $\refl{}$ paths, and composition of pointed functions, is a wild
  category. Again, the elements $\alpha$ and $\iota$ are given by function
  extensionality, completed by path algebra.
\end{exa}

\begin{defi}
  Let $\cat C,\cat D$ be two wild categories.
  A \emph{wild functor} from $\cat C$ to $\cat D$ is a dependent $4$-tuple $(F_0,F_1,c,u)$ where:
  \begin{align*}
    F_0 &:\cat C \to \cat D,\\
    F_1 &:\prod_{a,b:\cat C} \hom(a,b) \to \hom(F_0(a), F_0(b))\\
    c &: \prod_{a,b,c:\cat C}\prod_{f:\hom(a,b)}\prod_{g:\hom(b,c)}F_1(g\circ f) = F_1(g) \circ F_1(f)\\
    u &: \prod_{a:\cat C} F_1(\id_a) = \id_{F_0(a)}
  \end{align*}
  \label{def:wild-functor}
\end{defi}

As for non-wild functors, we usually write both $F_0$ and $ F_1$ only as $F$.
The only relevant fact about $c$ and $u$ is that they exist, even though their
types are not propositions. Therefore, we will often (abusively) denote a wild
functor by its first two components only.

\begin{prop}
  Given wild functors $F:\cat C \to \cat D$ and $G:\cat D \to \cat E$, there is a composite wild functor $G\circ F :\cat C \to \cat E$ with first components:
  \begin{align*}
    (G\circ F)_0 &\defequi G_0\circ F_0, \\
    (G\circ F)_1 &\defequi (a,b) \mapsto G_1(F_0(a),F_0(b)) \circ F_1(a,b)
  \end{align*}
  \label{prop:composition-wild-functors}
\end{prop}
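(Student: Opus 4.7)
The plan is to define the four components $((G\circ F)_0, (G\circ F)_1, c_{GF}, u_{GF})$ required by \cref{def:wild-functor}. The first two components are already specified in the statement, so the real work is to construct the compositor $c_{GF}$ and the unitor $u_{GF}$ from the corresponding data of $F$ and $G$. Since the definition of a wild functor carries no further coherence, this is all that needs to be checked.

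For the compositor, I would proceed as follows. Fix $a,b,c:\cat C$, $f:\hom(a,b)$, and $g:\hom(b,c)$. By unfolding the definition of $(G\circ F)_1$, the goal
\[
  (G\circ F)_1(g\circ f) = (G\circ F)_1(g)\circ (G\circ F)_1(f)
\]
is definitionally equal to
\[
  G_1(F_1(g\circ f)) = G_1(F_1(g))\circ G_1(F_1(f)).
\]
I would obtain such a path by concatenating two steps: first apply $G_1$ to the compositor $c_F(a,b,c,f,g) : F_1(g\circ f) = F_1(g)\circ F_1(f)$ using $\ap{G_1}$, giving a path
\[
  G_1(F_1(g\circ f)) = G_1(F_1(g)\circ F_1(f));
\]
then compose with the instance $c_G(F_0(a),F_0(b),F_0(c),F_1(f),F_1(g))$ to reach $G_1(F_1(g))\circ G_1(F_1(f))$.

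For the unitor, given $a:\cat C$, the goal $(G\circ F)_1(\id_a) = \id_{(G\circ F)_0(a)}$ unfolds definitionally to $G_1(F_1(\id_a)) = \id_{G_0(F_0(a))}$. I would construct this by concatenating $\ap{G_1}(u_F(a)) : G_1(F_1(\id_a)) = G_1(\id_{F_0(a)})$ with $u_G(F_0(a)) : G_1(\id_{F_0(a)}) = \id_{G_0(F_0(a))}$.

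Both constructions are essentially the classical ones for functor composition in ordinary category theory, and there is no hidden obstacle: the definition of wild functor requires only the existence of the compositor and unitor, not any higher coherences relating them, so the above paths suffice. The main point of the proposition is simply to record that wild functors compose, which will later allow us to speak of diagrams of wild functors (such as $\susp$ and $\loopspace\null$) in the expected way.
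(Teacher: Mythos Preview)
Your proposal is correct and matches the paper's own proof essentially verbatim: the paper defines $c_{G\circ F}$ as the concatenation $c_G(F_0(a),F_0(b),F_0(c),F_1(f),F_1(g)) \cdot \ap{G_1}(c_F(a,b,c,f,g))$ and $u_{G\circ F}(a)$ as $u_G(F_0(a)) \cdot \ap{G_1}(u_F(a))$, which is exactly what you wrote. No higher coherence is required, so nothing further needs to be checked.
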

\begin{proof}
  Denote $c_F, u_F$ and $c_G,u_G$ the witnesses of functoriality for $F$ and
  $G$ respectively. Then one gets:
  \begin{align*}
    c_{G\circ F}(a,b,c)(f,g) &\defequi c_G(F_0(a),F_0(b),F_0(c),F_1(f),F_1(g)) \\
    & \phantom{\defequi} \cdot \ap{G_1} ( c_F(a,b,c,f,g) ) \\
    u_{G\circ F}(a) &\defequi u_G(F_0(a)) \cdot \ap {G_1} (u_F(a))\qedhere
  \end{align*}
\end{proof}

\begin{exa}~\label{ex:loop-sus-wild-functors}%
  \begin{enumerate}
    \item There is a wild functor $\loopspace \null$ from $\UUptd$ to itself
      which maps a pointed type $A$ (pointed at $a$) to $\loopspace \null A
      \defeq (a=a$) (pointed at $\refl a$), and maps a pointed function $f: A
      \ptdto B$ to the pointed function $\loopspace \null (f): \loopspace \null
      A \ptdto \loopspace \null B$ defined as follows:
      \begin{displaymath}
        \loopspace \null (f) (p) \defequi f_0 \cdot \ap f (p) \cdot \inv{f_0}
      \end{displaymath}
      where $f_0: f(a) = b$ is the path pointing $f$.
      The map $\loopspace \null (f)$ is itself pointed by a path obtained by
      using path algebra as follows:
      \begin{displaymath}
        \loopspace\null (f)(\refl a) \jdeq  f_0 \cdot \ap f (\refl a) \cdot \inv{f_0}
        = f_0 \cdot \refl {f(a)} \cdot \inv{f_0}
        = \refl b.
      \end{displaymath}
      A careful exposition of the witness created this way can be found in
      \cite[\githubpath core/lib/types/LoopSpace.agda]{hott-agda}. It can also
      be found the witnesses $c$ and $u$ justifying that $\loopspace \null$ is
      a wild functor.
    \item There is a wild functor $\susp \null$ from $\UU$ to $\UUptd$ which
      maps a type $A$ to $\susp A$ (pointed at the pole $\north$), and maps a
      function $f:A\to B$ to the pointed function $\susp (f) : \susp A \ptdto
      \susp B$ defined by induction as follows:
      \begin{displaymath}
        \susp (f) (\north) \defequi \north,\quad
        \susp (f) (\south) \defequi \north,\quad
        \ap{\susp (f)} \circ \mrd \defis \mrd \circ f.
      \end{displaymath}
      Note that $\susp (f)$ is pointed by $\refl{\north}$ as it maps $\north$ to $\north$
      by definition. The witnesses $u$ and $c$ of \cref{def:wild-functor} are
      defined through easy inductions on the suspension, and a
      careful exposition can be found in
      \cite[\githubpath core/lib/types/Suspension.agda]{hott-agda}.
    \item The join operation ${\blank} * B$ from $\UUptd$ to itself,
      defined in~\cref{sec:whitehead-interlude},
      has the structure of a wild functor, mapping $f : A \ptdto A'$
      to the pointed function $f * B : A * B \ptdto A' * B$
      defined by induction as follows:
      \begin{align*}
        (f * B)(\inl(a)) &\defeq \inl(f(a)),\qquad
        (f * B)(\inr(b)) \defeq \inr(b),\\
        \ap{f*B}(\glue(a,b)) &\defis \glue(f(a),b).
      \end{align*}
      Note that $f * B$ is pointed by
      $\ap{\inl}f_0 : \inl(f(a_0)) = \inl(a_0')$,
      where $a_0$ and $a_0'$ are the base points of $A$ and $A'$,
      and $f_0 : f(a_0) = a_0'$ witnesses that $f$ is pointed.

      A formalization of $u$ and $c$ of \cref{def:wild-functor}
      in cubical Agda is in \cite{joinloopadj}.
    \item There is a wild functor $U:\UUptd \to \UU$ that maps a pointed type
      $(A,a_0)$ to $A$ and a pointed maps $(f,f_0)$ to $f$. The witnesses $c$
      and $u$ are both given by reflexivity. In practice, the application of $U$
      is left implicit: we write for example $\susp(X)$ for $\susp (U (X))$
      when $X$ is a pointed type. In particular, depending on the context, the
      wild functor $\susp$ can be considered to have domain $\UUptd$.
    \item There is a wild functor $\setTrunc \blank$ from $\UUptd$ to $\UU$
      which maps a pointed type $(A,a_0)$ to $\setTrunc{A}$ and a pointed map
      $(f,f_0)$ to the map $\setTrunc f$ defined by induction as $\settrunc x
      \mapsto \settrunc{f(x)}$. The witnesses $c$ and $u$ are defined
      by using well-known inhabitants of the following two types, respectively:
      \begin{gather*}
        \setTrunc g \circ \setTrunc f \circ \settrunc \blank = \settrunc \blank \circ (g\circ f),
        \\
        \setTrunc{\id} \circ \settrunc \blank = \settrunc \blank \circ \id
      \end{gather*}

    \item Building on the previous examples and
      \cref{prop:composition-wild-functors}, there is for each $n:\NN$, a wild
      functor $\hgr n$ from $\UUptd$ to $\UU$ which acts on objects and maps as
      $\setTrunc{\loopspace n (\blank)}$. The witnesses $c$ and $u$ are given
      by successive transport and composition of the same witnesses for
      $\loopspace\null$ and $\setTrunc \blank$,
      as explained in \cref{prop:composition-wild-functors}.
  \end{enumerate}
\end{exa}

\begin{defi}
  Let be given a wild functor $L$ from $\cat C$ to $\cat D$, and a wild functor
  $R$ from $\cat D$ to $\cat C$. A \emph{wild adjunction} of type $L\leftadjto
  R$ consists of the data of two dependent functions, the unit and the counit:
  \begin{displaymath}
    \eta : \prod_{a:\cat C} \hom(a,R \circ L(a)), \quad
    \epsilon : \prod_{b:\cat D} \hom(L\circ R (b), b),
  \end{displaymath}
  together with elements witnessing the naturality of the unit and the counit
  \begin{gather*}
    \nat_\eta : \prod_{a,a':\cat C} \prod_{f:\hom(a,a')} (R\circ L)(f) \circ \eta(a) = \eta(a') \circ f, \\
    \nat_\epsilon : \prod_{b,b':\cat D} \prod_{f:\hom(b,b')} f \circ \epsilon(b) = \epsilon(b') \circ (L\circ R)(f),
  \end{gather*}
  and elements witnessing the triangle identities
  \begin{align*}
    &\ltr_{\epsilon,\eta} : \prod_{a:\cat C} \epsilon(L(a)) \circ L(\eta(a)) = \id_{L(a)}, \\
    &\rtr_{\eta,\epsilon} : \prod_{b:\cat D} R(\epsilon(b)) \circ \eta(R(b)) = \id_{R(b)}.
  \end{align*}
\end{defi}
Here again, even though the types of $\nat_\eta$,
$\nat_\epsilon$, $\ltr_{\eta,\epsilon}$ and
$\rtr_{\eta,\epsilon}$ are not propositions, one only cares about their
existence, and therefore one usually omits them when denoting a wild
adjunctions.

\begin{rem} \label{def:wild-adj}
  As carefully proven and formalized in Agda~\cite[\githubpath
  theorems/homotopy/PtdAdjoint.agda]{hott-agda}, any such wild adjunction
  induces a dependent function $\Phi$ that maps elements $a:\cat C$ and $b:\cat
  D$ to an equivalence
  \begin{displaymath}
    \Phi_{a,b} : (\hom(L(a),b)) \weq (\hom(a,R(b)))
  \end{displaymath}
  given by $f \mapsto R(f) \circ \eta(a)$, with inverse the function given by
  $g\mapsto \epsilon(b) \circ L(g)$.

  This dependent function is natural in the following sense: for any $a,a':\cat
  C$ and $b,b':\cat D$, there are elements of $\Phi_{a,b}(\blank) \circ f =
  \Phi_{a',b} (\blank \circ L(f))$ for any $f:\hom(a',a)$ and
  $\Phi_{a,b'}(g\circ\blank) = R(g)\circ \Phi_{a,b} (\blank)$ for any
  $g:\hom(b,b')$.

  (To be precise, this is proven in \cite{hott-agda} only for wild functors $L$
  and $R$ from $\UUptd$ to $\UUptd$. This is the only case we need for this
paper, hence we rely on this proof.)
\end{rem}

\begin{prop}
  There is a wild adjunction $\susp \null \leftadjto \loopspace \null$.
  \label{prop:susp-loop-adjunction}
\end{prop}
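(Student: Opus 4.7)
The plan is to produce the two-sided data of a wild adjunction by defining the unit $\eta$, defining a counit $\epsilon$, and then checking naturality and the two triangle identities. The unit is already in hand from \cref{ex:loop-sus-wild-functors}: for a pointed type $(A,a_0)$, set $\eta_A(x) \defequi \inv{\mrd(a_0)} \cdot \mrd(x)$ and note that $\eta_A(a_0) = \refl{\north}$ by trivial path algebra, which provides the pointing path. For the counit, I would define $\epsilon_B : \susp \loopspace\null B \ptdto B$, for $(B,b_0)$ pointed, by suspension induction: send $\north$ and $\south$ both to $b_0$, and send each meridian $\mrd(p)$ (for $p : b_0 = b_0$) to $p$ itself. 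The map $\epsilon_B$ is then pointed by $\refl{b_0}$.

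Next I would verify the two naturality conditions. For $\nat_\eta$, given $f : A \ptdto A'$ with pointing path $f_0$, unfolding the definitions gives $\loopspace\null(\susp(f))(\eta_A(x)) = \inv{\mrd(f(a_0))} \cdot \mrd(f(x))$ (using that $\susp(f)$ is pointed by $\refl{\north}$ and acts on meridians by $\mrd \circ f$), while $\eta_{A'}(f(x)) = \inv{\mrd(a_0')} \cdot \mrd(f(x))$; these are identified using $f_0 : f(a_0) = a_0'$ and some path algebra, dealing carefully with the pointing path of the resulting equation. For $\nat_\epsilon$, given $g : B \ptdto B'$, I would do suspension induction on the argument in $\susp\loopspace\null B$: on the poles both composites yield the base point (up to $g_0$), and on the meridians both composites give $\loopspace\null(g)(p)$ after reduction.

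Finally, I would check the triangle identities. For $\ltr_{\epsilon,\eta}$ I must show $\epsilon_{\susp A} \circ \susp(\eta_A) = \id_{\susp A}$, which goes by suspension induction on $\susp A$: the pole cases reduce definitionally, and on a meridian $\mrd(x)$ we get $\ap{\epsilon_{\susp A}}(\mrd(\eta_A(x))) = \eta_A(x) = \inv{\mrd(a_0)}\cdot\mrd(x)$, which is identified with $\mrd(x)$ using the contractibility of paths to $\mrd(a_0)$ in the standard way. For $\rtr_{\eta,\epsilon}$ I must show $\loopspace\null(\epsilon_B) \circ \eta_{\loopspace\null B} = \id_{\loopspace\null B}$, which reduces on input $p : b_0 = b_0$ to computing $\ap{\epsilon_B}(\inv{\mrd(\refl{b_0})}\cdot\mrd(p)) = \inv{\refl{b_0}} \cdot p = p$ by the meridian clause for $\epsilon_B$.

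The main obstacle is not the core content, which is standard, but the bookkeeping: in $\UUptd$, equations between pointed maps come packaged with coherence between pointing paths, and both naturality witnesses and the triangle identities must themselves be exhibited as specific higher paths rather than merely shown to exist. In practice, I would handle most of this by direct definition and path induction on the pointing paths of $f$, $g$, $\eta_A$ and $\epsilon_B$, reducing each clause to a statement about plain paths; a complete development along these lines has been formalized and is available in \cite[\githubpath theorems/homotopy/PtdAdjoint.agda]{hott-agda}, from which the corresponding hom-equivalence $\Phi_{A,B}$ used throughout the paper is extracted via \cref{def:wild-adj}.
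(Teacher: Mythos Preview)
Your proposal is correct and follows the same approach as the paper: the paper's own proof simply defines $\eta$ and $\epsilon$ exactly as you do and then defers all of the naturality and triangle verifications to the Agda formalization (in {\githubpath theorems/homotopy/SuspAdjointLoop.agda}), whereas you additionally sketch those checks. One small slip: in your left triangle identity, the pole cases do \emph{not} both reduce definitionally, since $(\epsilon_{\susp A}\circ\susp(\eta_A))(\south) \jdeq \north$ while $\id_{\susp A}(\south) \jdeq \south$; you need to take the homotopy at $\south$ to be $\mrd(a_0)$, after which the meridian path-over works out exactly as you indicate.
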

\begin{proof}
  We refer to \cite[\githubpath
  theorems/homotopy/SuspAdjointLoop.agda]{hott-agda} for a proper proof.
  However, we give the unit $\eta$ and counit $\epsilon$ for convenience.

  Let $A$ be a pointed type with distinguished point $a_0:A$. Then define
  \begin{displaymath}
    \eta_A \defequi \eta(A) : A \ptdto \loopspace\null\susp A, \quad
    a \mapsto \inv{\mrd(a_0)} \cdot \mrd(a)
  \end{displaymath}
  which is pointed by path algebra.
  And define $\epsilon(A) : \susp \loopspace \null A \ptdto A$ by induction by
  setting
  \begin{displaymath}
    \epsilon(A)(\north) \defequi a_0,\quad
    \epsilon(A)(\south) \defequi a_0,\quad
    \ap{\epsilon(A)}\circ \mrd \defis \id_{\loopspace\null A}.\qedhere
  \end{displaymath}
\end{proof}

\begin{prop}
  Given any pointed type $B$, there is a wild adjunction
  ${\blank}*B \leftadjto (B \ptdto \loopspace\null\blank)$.
  \label{prop:join-loop-adjunction}
\end{prop}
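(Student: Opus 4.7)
The plan is to extract the wild adjunction data directly from Lemma~\ref{lemma:UMP-join-vs-loopspace}, which already supplies the carrier equivalence $\Phi_{A,X} : (A*B \ptdto X) \weq (A \ptdto (B \ptdto \loopspace\null X))$. All that remains is to promote this to genuine wild functors equipped with unit, counit, naturality witnesses, and triangle identities.

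First I would check that the two endofunctors of $\UUptd$ in question really are wild functors. The left adjoint $\blank * B$ is a pointed restriction of the join wild functor described in Example~\ref{ex:loop-sus-wild-functors}(iii), so it inherits its coherence data directly. The right adjoint $(B \ptdto \loopspace\null\blank)$ acts on a pointed morphism $f : X \ptdto X'$ by postcomposition, pointwise, with $\loopspace\null(f)$; its witnesses $c$ and $u$ are lifted from those of the wild functor $\loopspace\null$ by function extensionality.

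Next I would extract the unit and counit as images of the identity under $\Phi$. Setting $\eta_A \defequi \hat{\id}_{A*B} : A \ptdto (B \ptdto \loopspace\null(A*B))$, with $\hat{\blank}$ the construction of Lemma~\ref{lemma:UMP-join-vs-loopspace}, yields on points $\eta_A(a)(b) = \tau_{a,b}$, the canonical loop of~\eqref{eq:def-tau-ab}. Dually, $\epsilon_X \defequi \overline{\id}_{B \ptdto \loopspace\null X} : (B \ptdto \loopspace\null X) * B \ptdto X$ is defined by join induction with $\epsilon_X(\inl(g)) \defequi x_0$, $\epsilon_X(\inr(b)) \defequi x_0$, and $\ap{\epsilon_X}(\glue(g,b)) \defis g(b)$. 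Naturality of $\eta$ in $A$ and of $\epsilon$ in $X$ then follows by a straightforward join induction, since both sides of each naturality square unfold from the same defining clauses of $\Phi$.

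The main obstacle will be the two triangle identities. They ask for witnesses that $\epsilon_{A*B} \circ (\eta_A * B) = \id_{A*B}$ and the dual equation, and verifying these requires threading 2-paths through the $\glue$-constructors of the join and through the composite loops $\tau_{a,b}$, none of which reduce judgmentally in book HoTT. Rather than grinding through the resulting path algebra by hand, I would defer to the cubical Agda development cited in Lemma~\ref{lemma:UMP-join-vs-loopspace}~\cite{joinloopadj}, where, as the authors remark, cubical primitives make these coherences tractable and the equivalence $\Phi$ has already been verified to extend to a wild adjunction.
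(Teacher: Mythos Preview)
Your proposal is correct and essentially mirrors the paper's own proof: both give the same unit $\eta_A(a)(b)=\tau_{a,b}$ and the same counit $\epsilon_X$ (sending $\inl(g),\inr(b)$ to $x_0$ and $\glue(g,b)$ to $g(b)$), and both defer the remaining coherences to the cubical Agda development~\cite{joinloopadj}. The only presentational difference is that you derive the unit and counit as $\hat{\id}$ and $\overline{\id}$ from the equivalence of Lemma~\ref{lemma:UMP-join-vs-loopspace}, whereas the paper simply writes them down; the resulting data coincide.
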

\noindent Here, the right adjoint is the composition
of the loop functor $\loopspace\null$
with the covariant hom-functor from $\UUptd$ to itself.
\begin{proof}
  We refer to~\cite{joinloopadj} for a formalization in cubical Agda.
  Here we just give the unit $\eta$ and the counit $\epsilon$.

  Let $A$ be a pointed type with base point $a_0$. Then define
  \[
    \eta(A) : A \ptdto (B \ptdto \loopspace\null(A * B)),\quad
    a \mapsto (b \mapsto \tau_{a,b}),
  \]
  where $\tau_{a,b}$ is defined in~\eqref{eq:def-tau-ab}.
  And if $X$ is a pointed type with base point $x_0$, we define
  $\epsilon(X) : (B \ptdto \loopspace\null X) * B \ptdto X$
  by induction by setting,
  \[
    \epsilon(X)(\inl(f)) \defequi x_0,\quad
    \epsilon(X)(\inr(b)) \defequi x_0,\quad
    \ap{\epsilon(X)}(\glue(f,b)) \defis f(b).\qedhere
  \]
\end{proof}

\begin{rem}
  It is possible to obtain~\cref{prop:join-loop-adjunction}
  from a more general adjunction on pointed types involving the
  \emph{smash product}, ${\blank}\wedge B \leftadjto (B \ptdto \blank)$,
  where the smash product $A \wedge B$ is defined as the following
  pushout involving the wedge inclusion,
  \[
    \begin{tikzcd}
      A \vee B \ar[r,"i"]\ar[d]\ar[pushout] & A\times B \ar[d,"\inr"] \\
      \bn 1 \ar[r,"\inl"'] & A \wedge B.
    \end{tikzcd}
  \]
  This adjunction is described in details
  in~\cite[Sec.~4.3.3]{vandoorn:thesis}, and has been formalized in Lean.

  To get~\cref{prop:join-loop-adjunction},
  we'd also need the natural equivalence $A * B \simeq \susp(A\wedge B)$,
  and then we could calculate
  \begin{align*}
    (A * B \ptdto X) \simeq (\susp(A\wedge B) \ptdto X)
	&\simeq (A \wedge B \ptdto \loopspace\null X)\\
	&\simeq (A \ptdto (B \ptdto \loopspace\null X)).
  \end{align*}
  However, it is considerably less work to establish
  \cref{prop:join-loop-adjunction} directly,
  since both $\loopspace\null X$ and $B \ptdto \loopspace\null X$
  are \emph{purely homogeneous types},
  where a pointed type $(X,x_0)$ with base point $x_0:X$ is
  called purely homogeneous if $(X,x_0) =_{\UUptd} (X,x)$ for all $x:X$.
  Indeed, in this case, two pointed maps $(A,a_0) \ptdto (X,x_0)$
  are equal as long as the underlying maps $A\to X$ are.
  This follows from~\cref{lemma:homotopy-pointed-maps} below, using
  $e_x \defequi \loopspace\null(w(x))$ where
  $w$ is a witness of pure homogeneity.

  This notion of pure homogeneity is closely related to being an H-space.
  Indeed, \cref{lemma:homotopy-pointed-maps} also applies to left-invertible H-spaces,
  taking $e_x \defequi \loopspace\null(\mu(\blank,x))$,
  where $\mu : X\to X \to X$ is binary operation such that
  $\mu(\blank,x)$ is invertible for all $x:X$
  and satisfying $\mu(x_0,\blank) = \id_X$.

  The Hopf construction~\cite[Sec.~8.5.2]{HoTT} applies
  as well to any connected purely homogeneous type $(X,x_0)$,
  since also the maps $w(\blank,y)$ will be invertible thanks
  to $w(\blank,x_0)$ being homotopic to the identity.
  In particular, this means that we can only expect $\Sn n$ to be purely
  homogeneous for $n=0,1,3,7$.

  Note that any pointed connected type $(X,x_0)$ is \emph{merely homogeneous}
  in the sense that we have $\Trunc{(X,x_0)=_{\UUptd}(X,x)}$ for all $x:X$,
  as witnessed by the identity.
\end{rem}

\begin{lem}
  \label{lemma:homotopy-pointed-maps}%
 Let $(X,x_0)$ be a pointed type and
$e_x : (x_0 = x_0) \to (x=x)$ a family of equivalences parametrized by $x: X$.
Let $(A,a_0)$ also be a pointed type, $(f,f_0),(g,g_0)$ two pointed maps $(A,a_0) \ptdto (X,x_0)$,
and $h: f\sim g$ a homotopy. Then $(f,f_0) = (g,g_0)$.
\end{lem}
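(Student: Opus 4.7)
The plan is to upgrade the unpointed homotopy $h$ into a pointed one. An identification $(f, f_0) = (g, g_0)$ in $\UUptd$ unfolds, via function extensionality and the $\Sigma$-characterisation of identity types, to the data of a homotopy $h' : f \sim g$ together with a compatibility witness of type $g_0 \cdot h'(a_0) = f_0$; so it suffices to produce such a pair.

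The given homotopy $h$ need not satisfy this compatibility: the two parallel paths $g_0 \cdot h(a_0)$ and $f_0$, both of type $f(a_0) = x_0$, differ in general by a nontrivial loop. Writing this discrepancy as
\[
  \mu \defequi g_0^{-1} \cdot f_0 \cdot h(a_0)^{-1} : g(a_0) = g(a_0),
\]
the idea is to absorb $\mu$ into $h$ by adding an $a$-indexed family of loops at $g(a)$ whose value at $a_0$ is $\mu$. This is precisely what the equivalences $e_x$ supply: set $\ell \defequi e_{g(a_0)}^{-1}(\mu) : x_0 = x_0$ and define
\[
  h'(a) \defequi e_{g(a)}(\ell) \cdot h(a) : f(a) = g(a).
\]
By construction, $h'(a_0) = \mu \cdot h(a_0) = g_0^{-1} \cdot f_0$, and hence $g_0 \cdot h'(a_0) = f_0$ by path algebra, giving the required compatibility.

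There is no real obstacle in this argument: the family $e_x$ exists precisely so that a loop based at $x_0$ can be transported to a loop based at any point of $X$, and that is the only ``homogeneity'' we need. No induction on $A$ or $X$ is involved; the only nontrivial step is the pointwise correction of $h$ by the loop family $a \mapsto e_{g(a)}(\ell)$, and the final packaging into an identification in $\UUptd$ is the standard unfolding of path types in $\Sigma$-types applied via function extensionality.
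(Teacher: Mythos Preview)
Your argument is correct and is essentially identical to the paper's own proof: the paper also defines $p \defequi e_{g(a_0)}^{-1}(g_0^{-1} \cdot f_0 \cdot h(a_0)^{-1})$ and sets $h'(a) \defequi e_{g(a)}(p) \cdot h(a)$, then checks $h'(a_0) = g_0^{-1} \cdot f_0$. Your additional unfolding of what an identification of pointed maps amounts to is accurate and a helpful elaboration.
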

\begin{proof}
It suffices to give a  homotopy $h': f\sim g$ such that $h'(a_0) = \inv {g_0} f_0$.
Define $p \defequi \inv e_{g(a_0)}(\inv {g_0} f_0 \inv{h(a_0)}) : x_0=x_0$
and $h'(a) \defequi e_{g(a)}(p) h(a) : f(a)=g(a)$ for all $a: A$.
Then indeed $h': f\sim g$. Moreover,
$h'(a_0) \equiv e_{g(a_0)}(\inv e_{g(a_0)}(\inv {g_0} f_0 \inv{h(a_0)})) h(a_0) =  \inv {g_0} f_0$.
\end{proof}

\begin{rem}
  Define a \emph{wild monoid} to be a pointed type $(M,1)$ equipped with a
  function $\cdot : M \times M \to M$ and elements:
  \begin{gather*}
    \alpha: \prod_{x,y,z:M}x\cdot(y\cdot z) = (x\cdot y) \cdot z \\
    \iota: \prod_{x:M} (x\cdot 1 = x) \times (x = 1 \cdot x)
  \end{gather*}
  If $M$ is a set, then the types of $\alpha$ and $\iota$ become propositions
  and we are left with just a usual monoid. In particular, for any wild monoid
  $M$, the type $\setTrunc M$ has the induced monoid structure.

  Note that any wild category $\cat C$, for any object $a:\cat C$, induces a
  \emph{wild monoid} structure on $\hom(a,a)$, pointed at $\id_a$ with the
  multiplication $\circ(a,a,a)$, and the witnesses $\alpha(a,a,a,a)$ and $\iota(a,a)$
  coming from $\cat C$.

  For example, $\Sn n \ptdto \Sn n$ and $\Sn n \to \Sn n$, with composition of (pointed)
  maps, are wild monoids for each $n:\NN$. Set truncation
  yields ordinary  monoids $\setTrunc{\Sn n \ptdto \Sn n}$ and $\setTrunc{\Sn n \to \Sn n}$.
  \label{rem:wild-monoids}
\end{rem}

\end{document}